\documentclass[11pt,reqno,a4paper,final]{amsart}
\usepackage[british]{babel}
\usepackage{amssymb,amstext,amsthm,eucal,bbm,amscd,bbold}
\usepackage[margin=25mm]{geometry}
\usepackage{array, collcell}
\usepackage[svgnames,table]{xcolor}
\usepackage[T1]{fontenc}
\usepackage[utf8]{inputenc}
\usepackage[small]{eulervm}
\usepackage{tgpagella}
\usepackage{mathrsfs}
\usepackage{verbatim}
\usepackage{pdfsync}
\usepackage[final]{microtype}
\usepackage[nosort]{cite}
\usepackage{adjustbox}
\usepackage{booktabs} 
\usepackage{multirow}
\usepackage{makecell}
\usepackage{caption}
\usepackage{appendix}
\usepackage{chngcntr}
\usepackage{apptools}
\usepackage{romanbar,braket}
\usepackage{enumitem}
\usepackage{slashed}
\usepackage[boxsize=2pt,aligntableaux=bottom]{ytableau}

\newcommand\AddLabel[1]{\refstepcounter{equation}(\theequation)\label{#1}} 
\newcolumntype{L}{>{\collectcell\AddLabel}r<{\endcollectcell}} 
\newcommand\ext[1]{\scalebox{.85}{$\bigwedge^{\!#1}$}}
\newcommand\sym[1]{\scalebox{.85}{$\bigodot^{\!#1}$}}
\usepackage{tikz,pgfplots}
\pgfplotsset{compat=1.18}
\usepgfplotslibrary{fillbetween}
\usetikzlibrary{calc,arrows,arrows.meta,cd,shapes.geometric,positioning,through,decorations.markings,decorations.text}
\tikzset{>=latex}
\usepackage[unicode,pagebackref=true]{hyperref}
\hypersetup{%
  pdftitle   = {The spectrum of the bosonic ambitwistor string revisited},
  pdfkeywords = {BMS, BRST, cohomology, ambitwistor, string},
  pdfauthor  = {José Figueroa-O'Farrill, Girish Vishwa},
  pdfcreator = {\LaTeX\ with package \flqq hyperref\frqq},
  linkcolor=NavyBlue,
  citecolor=ForestGreen,
  urlcolor=OrangeRed,
  anchorcolor=OrangeRed,
  colorlinks=true}
\renewcommand*{\backref}[1]{}
\renewcommand*{\backrefalt}[4]{%
  \ifcase #1%
  \or [Page~#2.]%
  \else [Pages~#2.]%
  \fi%
}
\theoremstyle{plain}
\newtheorem{lemma}{Lemma}
\newtheorem{proposition}[lemma]{Proposition}

\newtheorem{corollary}[lemma]{Corollary}
\newtheorem{conjecture}[lemma]{Conjecture}

\theoremstyle{definition}

\newtheorem*{remark}{Remark}
\newcommand{\g}{\mathfrak{g}}

\newcommand{\h}{\mathfrak{h}}
\newcommand{\fa}{\mathfrak{a}}
\newcommand{\fk}{\mathfrak{k}}

\renewcommand{\d}{\partial}

\newcommand{\so}{\mathfrak{so}}

\newcommand{\be}{\boldsymbol{e}}

\newcommand{\w}{\boldsymbol{w}}

\newcommand{\p}{\boldsymbol{p}}

\newcommand{\sB}{\mathsf{B}}
\newcommand{\sZ}{\mathsf{Z}}
\newcommand{\sC}{\mathsf{C}}
\newcommand{\sH}{\mathsf{H}}
\newcommand{\Crel}{\mathsf{C}_{\mathrm{rel}}} 
\newcommand{\Hrel}{\mathsf{H}_{\mathrm{rel}}}
\newcommand{\Brel}{\mathsf{B}_{\mathrm{rel}}}
\newcommand{\Zrel}{\mathsf{Z}_{\mathrm{rel}}}
\newcommand{\phisup}[1]{\phi^{(#1)}}
\newcommand{\Asup}[1]{A^{(#1)}}
\newcommand{\Tsup}[1]{T^{(#1)}}
\newcommand{\Ssup}[1]{S^{(#1)}}

\newcommand{\eO}{\mathcal{O}}

\newcommand{\Stab}{\operatorname{Stab}}
\newcommand{\stab}{\mathfrak{stab}}

\newcommand{\reg}{\operatorname{reg}}
\newcommand{\Tgh}{T^{\text{gh}}}
\newcommand{\Mgh}{M^{\text{gh}}}
\newcommand{\Ltot}{L^{\text{tot}}}
\newcommand{\Ttot}{T^{\text{tot}}}
\newcommand{\Mtot}{M^{\text{tot}}}

\newcommand{\vac}{\ket{p}}

\newcommand{\id}{\operatorname{id}}
\newcommand{\tr}{\operatorname{tr}}

\newcommand{\VV}{\mathbb{V}}

\newcommand{\RR}{\mathbb{R}}

\newcommand{\ZZ}{\mathbb{Z}}

\newcommand{\CC}{\mathbb{C}}

\newcommand{\ISO}{\operatorname{ISO}}

\newcommand{\SO}{\operatorname{SO}}

\newcommand{\End}{\operatorname{End}}
\newcommand{\Hom}{\operatorname{Hom}}

\newcommand{\im}{\operatorname{im}}

\definecolor{dkgr}{rgb}{0,0.6,0}
\definecolor{gris}{rgb}{0.5,0.5,0.5}

\allowdisplaybreaks[0]
\numberwithin{equation}{section}
\AtAppendix{\counterwithin{lemma}{section}}
\begin{document}

\title{The spectrum of the bosonic ambitwistor string revisited}
\author[Figueroa-O'Farrill]{José M Figueroa-O'Farrill}
\author[Vishwa]{Girish S Vishwa}
\address{Maxwell Institute and School of Mathematics, The University
  of Edinburgh, James Clerk Maxwell Building, Peter Guthrie Tait Road,
  Edinburgh EH9 3FD, Scotland, United Kingdom}
\email[JMF]{\href{mailto:j.m.figueroa@ed.ac.uk}{j.m.figueroa@ed.ac.uk}, ORCID: \href{https://orcid.org/0000-0002-9308-9360}{0000-0002-9308-9360}}
\email[GSV]{\href{mailto:}{G.S.Vishwa@sms.ed.ac.uk}, ORCID: \href{https://orcid.org/0000-0001-5867-7207}{0000-0001-5867-7207}}
\begin{abstract}
We revisit the calculation of the spectrum of the bosonic ambitwistor string, understood as the BRST cohomology or, equivalently, as the semi-infinite cohomology of the $\mathrm{BMS}_3$ Lie algebra relative to the centre with values in a particular module. We work in momentum space, which allows us to work algebraically and interpret the BRST cohomology as inducing representations of the Poincar\'e group. In agreement with the existing literature, we find that all the cohomology resides in the massless sector, but a careful representation-theoretic analysis of the spectrum reveals, in addition to the usual massless sector of the closed bosonic string (dilaton, metric and Kalb--Ramond field), also a massless vector. We devote a large part of the paper to describing the cohomology at a massless momentum $p$ as a module over the stabiliser $H$ of $p$ in the Lorentz group, a task which is made difficult due to $H$ not acting reducibly when $p\neq 0$. This allows us to conclude that the spectrum is not unitary, forbidding the interpretation of the extra massless vector as a Maxwell field.
\end{abstract}
\maketitle
\tableofcontents

\section{Introduction}
\label{sec:introduction}
Ambitwistor strings \cite{MasonSkinner, Adamo:2013tsa,
  Berkovits:2013xba} are a class of string theories introduced as a
generalisation of Witten's twistor string \cite{Witten:2003nn} (see
also an alternative formulation by Berkovits \cite{Berkovits:2004hg})
that brought unprecedented new insights into the study of scattering
amplitudes. Indeed, one of the primary successes of ambitwistor
strings which (rightfully) garnered them immense interest is how they
provide a framework in which the celebrated \emph{Cachazo--He--Yuan
  (CHY) formulae} \cite{Cachazo:2013gna, Cachazo:2013hca,
  Cachazo:2013iaa, Cachazo:2013iea} can be naturally extended to
one-loop \cite{Adamo:2013tsa, Adamo:2015hoa, Geyer:2015bja,
  Geyer:2015jch, Ohmori:2015sha} and curved backgrounds
\cite{Adamo:2014wea, Chandia:2015sfa, Chandia:2015xfa, Adamo:2017sze, Adamo:2018ege, Eberhardt:2020ewh, Roehrig:2020kck, Gomez:2021qfd, Adamo:2025bfr}.
Today, there is a huge body of literature\footnote{See the review
  article \cite{Geyer:2022cey} and references therein.} surrounding
the computation of scattering amplitudes that has been birthed by
ambitwistor strings.

Ambitwistor strings have also seen success under a different guise.
More specifically, Siegel introduced the \emph{left-handed} or
\emph{chiral} string in \cite{Siegel:2015axg} constructed via a
singular gauge choice\footnote{This gauge choice is known as
  \emph{Hohm--Siegel--Zwiebach (HSZ)} gauge, named after the authors
  of \cite{Hohm:2013jaa}, who put forth a similar model with manifest
  $T$-duality in the context of doubled conformal field theory.} on
the ordinary bosonic string worldsheet, which was shown to reproduce
the bosonic ambitwistor string action and amplitudes under the
tensionless limit. This string is now known as the \emph{twisted
  string}, as it comes from an alternative choice of vacuum for the
original closed bosonic string, where the anti-holomorphic oscillator
modes are ``flipped'' with respect to the holomorphic ones, thereby
leading to a twist that characterises the two sectors (see
\cite{Lee:2017utr} for more details). Once again, this led to more
work on scattering amplitudes (see for instance \cite{Siegel:2015axg,
  Huang:2016bdd, Leite:2016fno, Casali:2017mss}).

Concomitantly, the study of tensionless strings was gaining interest
(see, e.g., \cite{Bagchi:2013bga, Bagchi:2015nca, Bagchi:2016yyf}),
and in \cite{Bagchi:2020fpr}, it became evident that when quantised,
the closed bosonic tensionless string admitted three distinct spectra
coming from three\footnote{It was initially thought
  that there were only two different vacua, namely the ones put forth
  in \cite{Gamboa:1989px}.} different vacua. One of these three vacua
is indeed the ambitwistor vacuum, and the fact that this is not the
tensionless limit of ordinary closed string theory but rather the
twisted string was also reiterated (see also \cite{Casali:2016atr}
where this was first made precise). More recently, it was shown in
\cite{Adamo:2021zpw, Adamo:2022wjo} that the worldsheet operator
product expansions (OPEs) of vertex operators of four-dimensional
ambitwistor string theory generate the coefficients of celestial OPEs.
This demonstrates that ambitwistor strings provide a dynamical
principle for computing celestial OPEs, thereby making them central to
the celestial approach to flat space holography.

Setting their credentials aside, ambitwistor strings are, at the end
of the day, string theories which admit (at least in the RNS
formalism) quantum mechanical, chiral worldsheet CFT descriptions.
Thus, their spectra are also pivotal aspects of their study. Hence, in
this paper, we present a complete, self-contained computation of the
spectrum of the bosonic ambitwistor string in Minkowski spacetime. We
exploit the fact that the ambitwistor string worldsheet is a chiral
field theory with BMS$_3$ symmetry, making the computation of its
spectrum amenable to homological and vertex operator algebraic
techniques. We follow the conventions of
\cite{Figueroa-OFarrill:2024wgs}, where we show that this ambitwistor
string affords a realisation of the BMS$_3$ Lie algebra and hence is a possible module on which to compute its semi-infinite cohomology relative to the centre; that is, its BRST cohomology.

As already observed by Berkovits and Lize in
\cite{Berkovits:2018jvm}, the spectrum is non-unitary.  In our work,
we rederive this result by a careful representation-theoretic analysis of the spectrum which shows that it does not induce a unitary representation of the Poincaré group.  If we identify the physical spectrum as the BRST cohomology at ghost number 2, as is usually the case, the maximal unitary Poincaré subrepresentation corresponds to a Kalb--Ramond field, metric and dilaton (i.e., the massless sector of the relativistic closed bosonic string). Often, this is considered to be the bosonic ambitwistor string spectrum \cite{MasonSkinner, Huang:2016bdd, Casali:2016atr, Lee:2017utr} (see also \cite{Gamboa:1989zc, Bagchi:2020fpr} for this result obtained from the perspective of tensionless string theory), although it was shown in \cite{Berkovits:2018jvm} that this is not the case (see also \cite{LipinskiJusinskas:2019cej, Carabine:2023yxv} for the chiral/twisted string approach to this). 
Our work builds on this and shows representation-theoretically that the full ambitwistor spectrum is strictly larger than this: containing as well
what at first sight appears as an additional massless vector --- an
interpretation which can be discarded after a more careful analysis
which shows that the spectrum is a non-unitary representation of the
Poincaré group.

Our algebraic, representation-theoretic approach to the study of the
ambitwistor spectrum allows us to go even further than just the
computation of BRST cohomology at ghost number 2 which is normally
done in the literature \cite{MasonSkinner, Berkovits:2018jvm}. Thus,
we are able to show that the non-unitarity of the BRST cohomology at
ghost number 2 also manifests itself at ghost number 4. Consider for a
moment the case of ordinary bosonic string theory. Since its BRST
cohomology induces only unitary Poincaré representations, there is a
representation-theoretic isomorphism between cohomology at ghost
numbers 1 and 2 (which leads to an isomorphism between the Poincaré
representations which they induce) -- a consequence of Poincaré
duality. In the ambitwistor string, however, the non-unitarity of BRST
cohomology at ghost number 2 leads to a more general statement, namely
that the BRST cohomology at ghost number 2 and 4 induce mutually
\emph{dual} Poincaré representations. Furthermore, the maximal unitary
Poincaré subrepresentation induced at ghost number 4 corresponds to a
Kalb--Ramond field and a photon. It initially seems peculiar that the
BRST cohomology of a closed string contains what is normally
considered to be an open string state, but we argue that because we
are simply \emph{not} in the setting of ordinary bosonic string
theory, we cannot expect similar algebraic behaviour of the resulting
BRST cohomology. To reiterate, we regard the appearance of this
photon-like state at ghost number 4 in the ambitwistor BRST cohomology
as a manifestation of the non-unitarity of the cohomology at ghost
number 2, which we choose to interpret as the physical spectrum of the
ambitwistor string.

Our techniques are readily applicable not just to other ambitwistor
strings in the RNS formalism such as the heterotic
\cite{Adamo:2018hzd} and Type II ambitwistor \cite{Adamo:2014wea}
strings, but also to other non-Lorentzian strings -- the spectra of a
bosonic string reminiscent of the Gomis--Ooguri string that was
constructed as a gauged Wess--Zumino--Witten model
\cite{Figueroa-OFarrill:2025nmo} and the \emph{carrollian string}
\cite{Gomis:2023eav, Blair:2023noj, Chen:2025gaz,
  Figueroa-OFarrill:2025njv}, where both the worldsheet and target
space are carrollian, were both obtained via similar homological
techniques. The latter, in particular, is closely related to the
ambitwistor string as it too can be described by a worldsheet with
chiral BMS$_3$ symmetry. From this perspective, the computation of the
spectra of the carrollian and bosonic ambitwistor strings are both
computations of semi-infinite cohomology of the BMS$_3$ algebra
relative to the centre, but with values in different BMS$_3$ modules
specified by their respective matter sector contents. We discuss
further in Section \ref{sec:conclusions-outlook} and also refer the
reader to \cite{Figueroa-OFarrill:2025njv} for more details.

\subsection{Organisation and summary of main results}
\label{sec:organ-summ-main}

In order to help the reader navigate what is in parts a rather
technical paper, we will now give a detailed summary of the
organisation of this paper and its main results.

In Section~\ref{sec:preliminaries} we define the problem we set out to
solve and review some representation-theoretic results we will make
use of in the sequel.  In Section~\ref{sec:bosonic-ambi} we define the
bosonic ambitwistor string as a particular module of the BMS$_3$ Lie
algebra, which we express in the language of two-dimensional conformal
field theory and operator product expansions.  The BMS$_3$ Lie algebra
is generated by two fields $T(z)$ and $M(z)$, where $T(z)$ is the
field corresponding to a Virasoro element and $M(z)$ is an abelian
ideal with conformal weight $2$.  The matter sector is given by $26$
(for criticality) linear systems $(X^\mu, \Pi_\mu)$ of conformal
weights $(0,1)$.  They carry a representation of BMS$_3$ which was
shown in \cite{Figueroa-OFarrill:2024wgs} to allow for the
construction of a BRST complex as the semi-infinite complex of the
BMS$_3$ Lie algebra relative to the centre, with the square-zero BRST
operator being identified with the semi-infinite differential.  The
BRST complex admits a representation of BMS$_3$ with vanishing
central charges.  Moreover, the zero mode of the Virasoro element acts
diagonally in the complex and trivially on cohomology, so that the
BRST complex is quasi-isomorphic to the subcomplex consisting of
states with zero total conformal weight.  We work in momentum space,
so that the BRST differential acts algebraically (i.e., not involving
derivatives) and hence we have for each momentum $p$ a differential
complex $\left(\sC^\bullet(p), d\right)$ whose cohomology is the main
object of this paper.  We choose to break the computation of the
cohomology into two, by first computing the cohomology relative to the
one-dimensional subalgebra spanned by the Virasoro zero mode. The
resulting relative complex $\left(\Crel^\bullet(p),d\right)$ is
finite, containing states with ghost numbers $0$ to $5$, whereas the
absolute complex $\left( \sC^\bullet(p), d \right)$ contains states
with ghost numbers $0$ to $6$.  The zero mode of the weight-$2$ field
in the BMS$_3$ algebra in the BRST complex also acts trivially in
cohomology.  Its Jordan--Chevalley decomposition allows us to prove, 
without any explicit cohomology calculations, 
that the (relative) cohomology is trivial unless $p^2=0$. We show that
the relative and absolute complexes are related by a split short exact
sequence of complexes~\eqref{eq:ses} giving rise to a long exact
sequence in cohomology~\eqref{eq:les}, which we will exploit later.

In Section~\ref{sec:what-is-the-cohomology} we explain the different
ways to describe the BRST cohomology at momentum $p$: as a vector
space, as a representation of the stabiliser subgroup $\Stab(p)$ of
$p$ in the Lorentz group or as a representation of a maximal compact
subgroup $K \subset \Stab(p)$. As in the relativistic bosonic string,
we interpret the cohomology at momentum $p$ as inducing a
representation of the Poincaré group.  Wigner \cite{MR1503456}
famously showed that all unitary irreducible representations
(UIRs) of the four-dimensional Poincaré group are induced from UIRs of
$\Stab(p)$, a result vastly generalised by Mackey (see, e.g.,
\cite{MR0396826}) for (regular) semidirect products.  Because of this,
it is the description of the BRST cohomology $\sH^\bullet(p)$ as a
representation of $\Stab(p)$ which is our final aim. Nevertheless much
can be learned by viewing it as a vector space or as a representation
of $K$.

In Section~\ref{sec:stabp-reps} we study the group $\Stab(p)$ and some
of its representations.  We mostly work at the Lie algebraic level and
hence starting in that section we refer to representations as modules
(over the universal enveloping algebra of the relevant Lie algebra).
This section introduces some notation which is used throughout the
paper.  We take $(V,\eta)$ to denote the $26$-dimensional real vector
representation of the Lorentz group, with $\eta$ the invariant
lorentzian inner product.  The momentum $p$ is an element of the dual
space $V^*$, which may be identified with $V$ using the musical
isomorphisms \eqref{eq:musical} induced by $\eta$.  Because we work in
momentum space we are forced to complexify $V$ to
$\VV = \CC \otimes_\RR V$ and extend $\eta$ complex-bilinearly.  A
numerical calculation summarised in Table~\ref{tab:numerical}, shows
that the cohomology is zero unless $p^2 = 0$, which allows us to focus
on two cases: a nonzero massless momentum $p$ and the case of zero
momentum.  The latter has the whole Lorentz group as stabiliser and
being semisimple it acts fully reducibly.  The former, however, has 
stabiliser $H := \Stab(p) \cong \ISO(24)$, which does not act fully
reducibly.  Therefore,  all the representation-theoretic complications
arise for the case of a nonzero massless momentum.

As an $H$-module, the vector
representation $V$ is indecomposable. It has two proper submodules
$\ell_p \subset p^\circ \subset V$, where $\ell_p$ is the real span of
the dual vector $p^\sharp \in V$ to $p \in V^*$, and $p^\circ$ can be
viewed either as the annihilator of $p \in V^*$ or as the
perpendicular subspace to $p^\sharp$ relative to $\eta$. The quotient
module $p^\circ/\ell_p$ is denoted $V^\top$ (for ``transverse'') and
will play an important role in the paper. As an $H$-module, $V^\top$
is orthogonal relative to the $H$-invariant euclidean inner product
induced by $\eta$. Upon complexification, we have that $\VV$ admits
two proper submodules $L_p \subset p^\perp \subset \VV$, where $L_p$
and $p^\perp$ denote the complexifications of $\ell_p$ and $p^\circ$,
respectively. The quotient module $p^\perp/L_p$ is denoted $\VV^\top$
and is the complexification of $V^\top$. The euclidean inner product
on $V^\top$ induces a hermitian inner product on $\VV^\top$, making it
into a unitary $H$-module.

In Section~\ref{sec:lie-algebra-stabp} we determine the structure of
the Lie algebra $\h$ of $H$, which can be interpreted as the euclidean
Lie algebra of $V^\top$ relative to the induced euclidean inner
product.  We use throughout the natural isomorphism
$\ext{2}V \cong \so(V)$, which sends
$v \wedge w \mapsto v \curlywedge w$, where the action of
$v \curlywedge w$ on $V$ is defined in \eqref{eq:curlywedge}.  Using
this notation, the abelian ideal in $\h$ is denoted
$\fa:= \ell_p \curlywedge p^\circ$, whereas $\h$ itself is isomorphic
to $\ext{2}p^\circ$ and $\h/\fa \cong \so(V^\top)$.  We use throughout
the paper the well-known fact that in a finite-dimensional unitary
module of $\h$, the abelian ideal $\fa$ must act trivially, so that
the representation descends to a representation of $\so(V^\top)$.  We
will see later that this is \emph{not} the case for the BRST
cohomology.

The rest of Section~\ref{sec:stabp-reps} is devoted to studying the
$\h$-modules $V\otimes V$ and its symmetric $\sym{2}V$ and
skew-symmetric $\ext{2}V$ submodules, since as we will see these are
the other modules arising in the BRST cohomology.  In
Section~\ref{sec:some-submodules-v} we merely describe the submodules
of $V \otimes V$.  In Section~\ref{sec:submodules-skew-2-v} we study
$\ext{2}V$, identify its submodules and prove some isomorphisms
involving quotients of these submodules.  In
Section~\ref{sec:submodules-sym-2-v} we do the same for the more
involved $\sym{2}V$.  Table~\ref{tab:dimensions} summarises some of
the $H$-modules discussed in the section and lists its dimensions.
All the results in
Sections~\ref{sec:some-submodules-v},~\ref{sec:submodules-skew-2-v}
and \ref{sec:submodules-sym-2-v} have analogues upon
complexification, which although we do not list separately, can
be simply obtained by replacing $(\RR, V, \ell_p, p^\circ,V^\top)$ by
$(\CC,\VV,L_p, p^\perp,\VV^\top)$ everywhere.  It bears highlighting
that whereas we complexify the modules, the Lie algebra itself remains
real.

With these preliminaries behind us, we turn in
Section~\ref{sec:relative-cohomology} to the calculation of the
relative BRST cohomology.  The action of the differential on the
fields corresponding to the states in the relative subcomplex is
calculated from the first order pole in the operator product expansion
with the BRST current \eqref{eq:BRST current}.  The results of those
calculations are recorded in Appendix~\ref{app:calculation of d on
  Crel}. We first show that the cohomology is a $\Stab(p)$-module and
we present the results of numerical calculations which determine the
cohomology at momentum $p$ as a complex vector space. Complex vector
spaces have only one invariant: their dimension and that is what
Table~\ref{tab:numerical} lists. For a nonzero massless momentum, the
relative cohomology resides in ghost numbers $2$ and $3$, whose
calculation is presented in detail in
Appendices~\ref{app:H2rel-detailed-calc} and \ref{app:hrel3-details},
respectively. We give cocycle representatives for the relative
cohomology in terms of covariant fields satisfying $H$-invariant
conditions. At ghost number $2$, these cocycle representatives are
vertex operators \eqref{eq:Zrel2-vertex-operators} subject to the
cocycle conditions~\eqref{eq:Zrel2-conditions} and modulo the
coboundaries~\eqref{eq:Brel2-conditions}; whereas at ghost number $3$,
the vertex operators are \eqref{eq:Zrel3-vertex-operator-1} and
\eqref{eq:Zrel3-vertex-operator-2} subject to the cocycle conditions
\eqref{eq:3-relative-cocycles} and modulo the
coboundaries~\eqref{eq:3-relative-coboundaries}. There is another set
of cocycle representatives at ghost number $3$, which is perhaps more
natural: see equation~\eqref{eq:Zrel3-alt-symm-vertex-op}. The final
results in Section~\ref{sec:relative-cohomology} are summarised in
Section~\ref{sec:hrel-summary}. Proposition~\ref{prop:Hrel} exhibits
the relative BRST cohomology at zero momentum as a Lorentz-module and
that at a nonzero massless momentum as a $K$-module. In this latter
case, we already see that both at ghost number $2$ and ghost number
$3$ we have an additional vector ($\VV^\top$) which seems to have been
ignored in the literature until now.

In Section~\ref{sec:brst-cohomology} we exploit the long exact
sequence relating the absolute and relative BRST cohomologies to
determine the absolute cohomology.  This gives immediately that the
absolute BRST cohomology is zero unless $p^2 = 0$.  As shown in
Section~\ref{sec:nonzero-massless}, for nonzero massless momenta, the
long exact sequence is enough to determine (mostly) the absolute
cohomology.  The absolute cohomology resides at ghost numbers $2,3,4$
and we find that, as $H$-modules, $\sH^2(p) \cong \Hrel^2(p)$,
$\sH^4(p) \cong \Hrel^3(p)$ and $\sH^3(p)$ is an extension\footnote{We
  have not yet determined whether this extension is trivial, but in
  Appendix~\ref{app:splitting} we compute the relevant
  Chevalley--Eilenberg cohomology group where the obstruction lives
  and show that it is not zero, requiring the calculation of the
  actual extension class, a task we will leave for the future.} of
$\Hrel^2(p)$ by $\Hrel^3(p)$.  At zero momentum, as shown in
Section~\ref{sec:zero-momentum}, the long exact sequence is not enough
and we must calculate explicitly the connecting homomorphism
$\Hrel^{n-1}(p) \to \Hrel^{n+1}(p)$.  Assuming Poincaré duality of the
absolute cohomology, it is enough to compute $\sH^1(0)$ and $\sH^2(0)$
to determine the rest of the cohomology.  The conjectural result is
contained in Conjecture~\ref{sec:conj:zero-momentum}.  Regardless
whether the conjecture holds, we do find an enhancement of cohomology
at zero momentum, just as with the relativistic bosonic string.

In Section~\ref{sec:phys-interpr} we concentrate on understanding the
cohomology for a nonzero massless momentum as a module over the
stabiliser $H$.  In Section~\ref{sec:rep-theory-hrel2} we determine
the $H$-module structure of $\Hrel^2(p) \cong \sH^2(p)$ and in
particular we show in Proposition~\ref{prop:Hrel2_not_unitary} that it
is not a unitary $H$-module and we show that its maximal unitary
submodule is isomorphic to $\ext{2}\VV^\top \oplus \sym{2}\VV^\top$,
which are precisely the representations that induce the unitary Poincaré modules describing a Kalb--Ramond field and a ``graviton + dilaton''. We simply refer to these as \emph{inducing representations.}  In Section~\ref{sec:rep-theory-hrel3} we determine the $H$-module
structure of $\Hrel^3(p) \cong \sH^4(p)$.  We show in
Proposition~\ref{prop:Hrel3_not_unitary} that it again fails to be
unitary and we show that its maximal unitary submodule is now
isomorphic to $\ext{2}\VV^\top \oplus \VV^\top$, which are the
inducing representations for unitary Poincaré modules describing a
Kalb--Ramond field and a ``photon''.  In
Proposition~\ref{prop:poincare-duality} we prove that $\Hrel^2(p)$ and
$\Hrel^3(p)$ are dual $H$-modules.  In
Section~\ref{sec:phys-interpr-spectr} we discuss the physical
interpretation of our calculations and end with a conjecture
concerning $\sH^3(p)$.

The paper ends with Section~\ref{sec:conclusions-outlook}, where we
offer some conclusions and point to future work.

\section{Preliminaries} \label{sec:preliminaries}

\subsection{Bosonic ambitwistor string worldsheet}
\label{sec:bosonic-ambi}

We review the worldsheet description of the bosonic ambitwistor
string. In particular, we follow the conventions of
\cite{Figueroa-OFarrill:2024wgs} in which the worldsheet of the
ambitwistor string is shown to afford a realisation of the BMS$_3$ Lie
algebra, which means that its matter sector is a possible module on
which to compute its semi-infinite cohomology relative to the centre;
that is, its BRST cohomology.

The BMS$_3$ algebra $\g$ is generated by fields $T(z)$ and $M(z)$
subject to the following operator product expansions (OPEs)
\begin{equation}
  \begin{split}
    T(z)T(w) &= \frac{1}{2}\frac{c_L\mathbbm{1}}{(z-w)^4} + \frac{2T(w)}{(z-w)^2} + \frac{\partial T(w)}{z-w} + \reg.\\
    T(z)M(w) &= \frac{1}{2}\frac{c_M \mathbbm{1}}{(z-w)^4} + \frac{2 M(w)}{(z-w)^2} + \frac{\partial M(w)}{z-w} + \reg.\\
    M(z)M(w) &= \reg.
  \end{split}
\end{equation}
The matter sector of the ambitwistor string is given by a free-field
realisation of this algebra with $c_L = 2D$ and $c_M=0$ in terms of
$D$ weight-$(0,1)$ linear systems $(X^\mu, \Pi_\mu)$, where
$\mu\in\{0,1,...,D-1\}$ with operator product expansions
\begin{equation} \label{eq:X-Pi defining OPEs}
  \Pi_\mu(z) X^\nu (w) = \frac{\delta^\nu_\mu \mathbbm{1}}{z-w} + \reg.
\end{equation}
The fields $T(z)$ and $M(z)$ are given by
\begin{equation}
  \label{eq:T-and-M}
  \begin{split}
    T &= (\partial X^\mu \Pi_\mu)\\
    M &= - \tfrac12 \eta^{\mu\nu}(\Pi_\mu\Pi_\nu),
  \end{split}
\end{equation}
where $\eta^{\mu\nu}$ is the inverse of the Minkowski metric.  One can
construct a BRST complex provided that $D=26$, compatible with the
facts that its spectrum should emerge as the NR contraction of two
copies of the Virasoro algebra \cite{CampoleoniGonzalezOblakRiegler,
  Bagchi:2020fpr} and that its critical dimension is 26
\cite{MasonSkinner, Casali:2016atr, Bagchi:2020fpr}.  We shall take
$D=26$ from now on.

We may rewrite OPEs as follows.\footnote{We refer the reader to
  \cite[Section 2]{Figueroa-OFarrill:2024wgs} for more details
  regarding this notation.}  Let $\mathcal{V}$ denote the underlying
vector space of the vertex operator algebra.  For any state
$A \in \mathcal{V}$, we denote its corresponding field as
$A(z) \in \End \mathcal{V}[[z,z^{-1}]]$.  Then for any two states
$A,B\in \mathcal{V}$, we may write their OPE as
\begin{equation}
  A(z)B(w) = \sum_{n\in\ZZ} \frac{[A,B]_n(w)}{(z-w)^n},
\end{equation}
where $[A,B]_n(w)$ is the field in the $n^{\text{th}}$ order pole of the
OPE corresponding to the state $[A,B]_n \in \mathcal{V}$. In
particular, we denote the normal-ordered product of two fields as
$(AB)(z):= [A,B]_0(z)$.

We introduce two weight-$(2,-1)$ fermionic linear systems $(b,c)$ and
$(B,C)$ with standard operator product expansions
\begin{equation}
  b(z) c(w) = \frac{\mathbbm{1}}{z-w} + \reg. \qquad\text{and}\qquad
  B(z) C(w) = \frac{\mathbbm{1}}{z-w} + \reg.
\end{equation}
and define the BRST current
\begin{equation} \label{eq:BRST current}
  j_{\text{BRST}} = (c T) + (C M) + \tfrac12 (c \Tgh) + \tfrac12 (C \Mgh),
\end{equation}
where~\cite{Figueroa-OFarrill:2024wgs}
\begin{equation}
  \begin{split}
    \Tgh &= -2(b\partial c) - (\partial b c) - 2 (B\partial C) - (\partial B C)\\
    \Mgh &= -2 (B\partial c) - (\partial B c).
  \end{split}
\end{equation}
The BRST differential $d$ is defined as the zero-mode of
$j_{\text{BRST}}$ or, equivalently, in terms of fields as the
first-order pole with $j_{\text{BRST}}$.

Let $\Ttot = db$ and $\Mtot = dB$.  They are given explicitly by
$\Ttot = T + \Tgh$ and $\Mtot = M + \Mgh$ and they obey the centreless
BMS algebra:
\begin{equation}
  \begin{split}
    \Ttot(z)\Ttot(w) &= \frac{2\Ttot(w)}{(z-w)^2} + \frac{\partial \Ttot(w)}{z-w} + \reg.\\
    \Ttot(z)\Mtot(w) &= \frac{2\Mtot(w)}{(z-w)^2} + \frac{\partial \Mtot(w)}{z-w} + \reg.\\
    \Mtot(z)\Mtot(w) &= \reg.
  \end{split}
\end{equation}

Let us introduce modes via
\begin{multline}
  T(z) =\sum_{n \in \ZZ} L_n z^{-n-2},\quad M(z) = \sum_{n \in \ZZ}
  M_n z^{-n-2},\quad X^\mu(z) = \sum_{n\in \ZZ} x^\mu_n
  z^{-n},\quad \Pi^\mu(z) =\sum_{n\in \ZZ} \pi^\mu_n
  z^{-n-1},\\
  b(z) = \sum_{n\in \ZZ} b_n z^{-n-2}, \quad c(z) = \sum_{n\in \ZZ}
  c_n z^{-n+1}, \quad B(z) = \sum_{n\in \ZZ} B_n z^{-n-2}
  \quad\text{and}\quad C(z) = \sum_{n\in \ZZ} C_n z^{-n+1}.
\end{multline}

We define a vacuum vector $\ket{0}$ by
\begin{equation}
  \begin{split}
    c_n \ket{0} = C_n \ket{0} &= 0 \quad\text{for $n\geq 2$}\\
    x^\mu_n \ket{0} &= 0 \quad\text{for $n\geq 1$}\\
    \pi^\mu_n \ket{0} &= 0 \quad\text{for $n\geq 0$}\\
    b_n \ket{0} = B_n \ket{0} &= 0 \quad\text{for $n\geq -1$}.
  \end{split}
\end{equation}
Notice that $\Ltot_0 \ket{0} = \Mtot_0 \ket{0} = 0$.  Moreover, since
$\Ltot_0$ acts reducibly, we may restrict to the subcomplex $\ker
\Ltot_0$, since there is no cohomology at any other eigenspace of
$\Ltot_0$.  The BRST complex can be taken to be $\sC^\bullet := \ker
\Ltot_0$, graded as usual by ghost number: where $B,b$ have ghost
number $-1$; $c,C$ have ghost number $1$ and $X^\mu, \Pi_\mu$ have
ghost number $0$.  This argument does not work for $\Mtot_0$ since it
does not act reducibly, but nevertheless we will deduce an important
consequence from the fact that $\Mtot_0$ acts trivially in cohomology.

Since in the BRST current $X^\mu$ only appears via $\d X^\mu$, its
zero mode does not occur.  This suggests working in momentum space,
resulting in a purely algebraic differential; that is, involving no
derivatives.  To that end we define the vertex operator $W_p := \exp(i
p_\mu X^\mu)$ and hence
\begin{equation}
  \vac := \lim_{z\to 0} W_p(z) \ket{0} = \exp(i p_\mu x^\mu_0) \ket{0},
\end{equation}
so that now
\begin{equation}
  \label{eq:momentum-eigenstates}
  \pi^\mu_0 \vac = i p^\mu \vac.
\end{equation}
The BRST subcomplex $\sC^\bullet(p)$ at momentum $p$ is the kernel of
$\Ltot_0$ acting on the states made out of $\vac$ using the creation
operators
\begin{equation}
  x^\mu_{n\leq 0}, \quad   \pi^\mu_{n\leq 0}, \quad b_{n\leq -2},
  \quad B_{n\leq -2}, \quad c_{n\leq 1} \quad\text{and}\quad C_{n\leq
    1},
\end{equation}
where $\pi^\mu_0$ acts diagonally.  Since $[d,\pi^\mu_0]=0$,
it follows that $\sC^\bullet(p)$ is indeed
a subcomplex.  Because of the $i$ in
equation~\eqref{eq:momentum-eigenstates}, $\sC^\bullet(p)$ is a
complex vector space and since complex conjugation sends $W_p \mapsto
W_{-p}$, it relates $\sC^\bullet(p)$ and $\sC^\bullet(-p)$.

It is convenient to break the computation into two: we first compute
the cohomology of the relative subcomplex $\Crel^\bullet(p) :=
\sC^\bullet(p) \cap \ker b_0$ from which we then compute the BRST
cohomology by a mixture of homological algebra and some explicit
computation.

All creation operators have negative conformal weight except for $c_1,
C_1, c_0, C_0, x^\mu_0, \pi^\mu_0$.  Now, $\pi^\mu_0$ acts diagonally,
$x^\mu_0$ does not appear in our calculations and $c_0$ does not
appear in the relative subcomplex, leaving only $c_1, C_1, C_0$.  The
highest ghost number which can appear in $\Crel^\bullet(p)$ is
therefore $5$, corresponding to $C_0 c_1 C_1 c_{-1} C_{-1}\vac$,
whereas the highest ghost number which can appear in $\sC^\bullet(p)$
is $6$, corresponding to $c_0C_0 c_1 C_1 c_{-1} C_{-1}\vac$.  There
are no states of negative ghost number in either $\sC^\bullet(p)$ or
$\Crel^\bullet(p)$.

Therefore the relative subcomplex at momentum $p$ is given by
\begin{equation}
  \label{eq:Crel}
  \begin{tikzcd}
    \Crel^0(p) \arrow[r,"d"] & \Crel^1(p) \arrow[r,"d"] &  \Crel^2(p)
    \arrow[r,"d"] &  \Crel^3(p) \arrow[r,"d"] &  \Crel^4(p) \arrow[r,"d"] &
    \Crel^5(p) \arrow[r] & 0,
  \end{tikzcd}
\end{equation}
and hence for each $p$, this is a finite-dimensional differential
complex.

We can now use the fact that $\Mtot_0$ acts trivially in cohomology:
since $\Mtot_0 = [d,B_0]$, it follows that it sends cocycles to
coboundaries.  An explicit calculation shows that acting on $\Crel^\bullet(p)$
\begin{equation}
  \Mtot_0 = -\tfrac12 p^2 \id - \sum_{n\geq 1} \left( \eta^{\mu\nu}
    (\pi_\mu)_{-n} (\pi_\nu)_n - n \left( c_{-n} B_n + B_{-n} c_n \right)\right).
\end{equation}
Every endomorphism on a finite-dimensional vector space admits a
Jordan--Chevalley decomposition into a semisimple and a nilpotent
part.  It is clear from the above expression that the semisimple part
is $-\frac12 p^2 \id$ and the rest is the nilpotent part.  As proved
in \cite[§4.2]{MR323842}, for example, since $\Mtot_0$ sends
cocycles to coboundaries, the same holds separately for its semisimple
and nilpotent parts.  In particular $-\frac12 p^2 \id$ acts trivially
in cohomology.  This means that if $p^2 \neq 0$, the cohomology is
trivial.  In summary, we have proved the following:

\begin{proposition}\label{prop:on-shell}
  $\Hrel^\bullet(p) = 0$ unless $p^2 = 0$.
\end{proposition}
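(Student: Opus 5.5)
The plan is to exploit the identity $\Mtot_0 = [d, B_0]$ on the relative complex, with some care because $B_0$ must preserve $\Crel^\bullet(p)$. First I check that $B_0$ anticommutes with $b_0$, commutes with $\Ltot_0$ and with $\pi^\mu_0$. Then $B_0$ preserves $\sC^\bullet(p)\cap\ker b_0 = \Crel^\bullet(p)$ and lowers ghost number by one. Since $d$ also preserves the relative subcomplex, $\Mtot_0 = dB_0 + B_0 d$ holds as an endomorphism of the finite-dimensional complex $\Crel^\bullet(p)$. It follows that $\Mtot_0$ is a chain map that is null-homotopic, and so it induces the zero map on $\Hrel^\bullet(p)$.

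Next I compute $\Mtot_0$ on $\Crel^\bullet(p)$ from the mode expansions of $M$ and $\Mgh$. This gives $-\tfrac12 p^2\id$ from $\pi^\mu_0\pi_{\mu,0}$ acting via \eqref{eq:momentum-eigenstates}, plus a sum $N$ of terms of the form $\pi_{-n}\pi_n$ and $c_{-n}B_n$, $B_{-n}c_n$ with $n\ge1$. I claim that $N$ is nilpotent on each $\Crel^k(p)$. To see this, grade the states by the total number of positive-level creation operators, weighting by level (the level of an oscillator of mode $-n$ is $n$). Each term of $N$ annihilates a mode at level $n$ and creates one at level $n$. Ordering the oscillator content suitably, for instance by the multiset of levels of $x$-type versus $\pi$-type and $b$- versus $B$-type oscillators, each term strictly increases a filtration degree. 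For example, $\pi_{-n}\pi_n$ trades an $x_{-n}$ for a $\pi_{-n}$, and $c_{-n}B_n$ trades a $C_{-n}$ for a $c_{-n}$. Since the complex is finite-dimensional, $N$ is nilpotent. Clearly $N$ commutes with $\id$, so $-\tfrac12p^2\id + N$ is the Jordan--Chevalley decomposition.

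Now I invoke the standard fact cited from \cite[§4.2]{MR323842}. Both Jordan--Chevalley parts of an endomorphism commuting with $d$ are polynomials in it, so they commute with $d$. If the endomorphism acts as zero in cohomology, then so do its parts. A more elementary route gives the same conclusion directly. On cocycles the operator $\Mtot_0$ maps $\sZ$ to $\sB$, so its induced map on $\sZ/\sB$ is zero. The induced map of $N$ is nilpotent, and $-\tfrac12p^2$ is a scalar, so $-\tfrac12 p^2\id$ equals minus a nilpotent operator on $\Hrel^\bullet(p)$. A nonzero scalar cannot be nilpotent on a nonzero space, so if $p^2\neq0$ then $\Hrel^\bullet(p)=0$.

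The main obstacle is the nilpotency claim for $N$: I need a filtration on which every term strictly increases degree. The $\pi\pi$ terms trade $x$-oscillators for $\pi$-oscillators, since $\pi_n$ with $n\ge1$ contracts only against $x_{-n}$. The ghost terms convert $C$ into $c$ and $b$ into $B$. The filtration is therefore the number of $\pi$-oscillators plus the number of $c$-oscillators plus the number of $B$-oscillators, which is bounded at fixed $\Ltot_0=0$. Each term of $N$ raises this count by one, so $N$ is nilpotent.
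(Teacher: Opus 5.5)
Your proof is correct and follows the paper's own argument. The paper writes $\Mtot_0=[d,B_0]$, observes that its Jordan--Chevalley decomposition on $\Crel^\bullet(p)$ is $-\tfrac12 p^2\,\id$ plus a nilpotent part, and concludes from \cite[§4.2]{MR323842} that the scalar $-\tfrac12 p^2$ acts trivially on $\Hrel^\bullet(p)$. You additionally supply details the paper leaves implicit: that $B_0$ preserves the relative subcomplex, an explicit filtration proving nilpotency (count the $\pi$-, $c$- and $B$-oscillators), and an elementary replacement for the cited lemma.
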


We remark that the same proof \emph{mutatis mutandis} serves to
establish the similar result for the carrollian string of
\cite{Figueroa-OFarrill:2025njv} and indeed any of the interpolating
BMS string theories discussed in \cite[§6]{Figueroa-OFarrill:2025njv}.

The relative and absolute complexes are related by a short exact
sequence of differential complexes.  Indeed, $\Crel^n(p) \subset
\sC^n(p)$ as the kernel of $b_0$, whereas the image of $b_0 \colon
\sC^n(p) \to \sC^{n-1}(p)$ lands in $\Crel^{n-1}(p)$, since $b_0^2 =
0$, giving
\begin{equation}
  \label{eq:ses}
  \begin{tikzcd}
    0 \arrow[r] & \Crel^n(p) \arrow[r] & \sC^n(p) \arrow[r,"b_0"] & \Crel^{n
      -1}(p) \arrow[r] \arrow[l, bend left=45, "c_0" pos=0.45] & 0,
  \end{tikzcd}
\end{equation}
where we have indicated by $c_0 \colon \Crel^{n-1}(p) \to \sC^n(p)$
the splitting of the sequence.  Indeed, for every $\Psi \in
\Crel^{n-1}(p)$, one has that $b_0 c_0 \Psi = [b_0,c_0] \Psi = \Psi$.
By standard homological algebra, the short exact sequence
\eqref{eq:ses} gives rise to a long exact sequence relating the
absolute and relative cohomologies
\begin{equation}
  \begin{tikzcd}
    \cdots \arrow[r] & \Hrel^{n-2}(p) \arrow[r] & \Hrel^n(p) \arrow[r] & \sH^n(p) \arrow[r] & \Hrel^{n-1}(p)
    \arrow[r] & \Hrel^{n+1}(p) \arrow[r] & \cdots
  \end{tikzcd}
\end{equation}
on which we elaborate in Section~\ref{sec:brst-cohomology}.

\subsection{How should we view the (relative) BRST cohomology? }
\label{sec:what-is-the-cohomology}

In this paper, we will view the BRST cohomology in at least three
different ways in increasing level of complexity. Firstly, from its
definition as the quotient of cocycles by coboundaries, it is a vector
space and hence determined by a unique invariant up to isomorphism,
namely, its dimension. This is easily determined by a numerical
calculation of the dimensions of the spaces of cocycles and using the
Rank Theorem to determine the dimensions of the spaces of coboundaries
and hence of the cohomology. These results are contained in
Table~\ref{tab:numerical}, from where we can already read off some
vanishing results. In particular, this already indicates that the 
ambitwistor string spectrum cannot just be the massless states of
the closed bosonic string; the dimension of the ambitwistor string 
BRST cohomology vector space is larger than that of the vector space of
closed string massless states.

To proceed further we need to add more structure to the cohomology.
It is clear from the expression~\eqref{eq:BRST current} for the
BRST current and the explicit expressions for $T$ and $M$ in
equation~\eqref{eq:T-and-M} that the BRST differential is Lorentz
invariant.  We may show this explicitly as follows.

The action of the Lie algebra of the Lorentz group on the BRST complex
can be described in terms of fields.  Let $\omega_{\mu\nu} = -
\omega_{\nu\mu}$ and let
\begin{equation}
  \Lambda(\omega) := \tfrac12 \omega_{\mu\nu} \left( X^\mu \Pi^\nu \right).
\end{equation}
This is a conformal field with conformal weight $1$ and ghost number
$0$, whose zero mode provides a representation of the Lorentz algebra
$\so(25,1)$.  Indeed, we have that
\begin{equation}
  \label{eq:lorentz}
  [\Lambda(\omega_1), \Lambda(\omega_2)]_1 = \Lambda([\omega_1,\omega_2]),
\end{equation}
where
\begin{equation}
  [\omega_1,\omega_2]_{\mu\nu} = (\omega_1)_{\mu\rho}
  \eta^{\rho\sigma} (\omega_2)_{\sigma\nu} - (\omega_2)_{\mu\rho}
  \eta^{\rho\sigma} (\omega_1)_{\sigma\nu},
\end{equation}
and hence it follows that the zero modes $Q(\omega):= [\Lambda(\omega),-]_1$ obey
\begin{equation}
  [Q(\omega_1),Q(\omega_2)] = Q([\omega_1,\omega_2])
\end{equation}
Under the action of the BRST differential,
\begin{equation}
  \label{eq:d-lambda-is-exact}
  d \Lambda(\omega) = [j_{\text{BRST}},\Lambda(\omega)]_1 = \partial\left(  c \Lambda(\omega) \right),
\end{equation}
so that the zero mode of $\Lambda(\omega)$ commutes with the BRST
differential:
\begin{align*}
    [d,Q(\omega)] &= [j_{\text{BRST}},[\Lambda(\omega), -]_1]_1 -  [\Lambda(\omega),[j_{\text{BRST}},-]_1]_1\\
                  &= [[j_{\text{BRST}}, \Lambda(\omega)]_1,-]_1 & \tag{by the Jacobi identity for $[-,-]_1$}\\
                  &= [\partial(c \Lambda(\omega)),-]_1 & \tag{by equation~\eqref{eq:d-lambda-is-exact}}\\
                  &= 0,
\end{align*}
since $[\partial A, -]_1 = 0$ for all $A$.

We will show in Section~\ref{sec:relative-subcomplex} that only (the
Lie algebra of) the stabiliser $\Stab(p)$ of the momentum $p$
preserves the complex $\sC^\bullet(p)$ at momentum $p$ and that indeed
it also preserves the relative complex $\Crel^\bullet(p)$. But the
Lorentz-equivariance of the BRST differential implies that the
(relative) cochains, cocycles and coboundaries at momentum $p$ and at
each ghost number are $\Stab(p)$-submodules and hence so is the
(relative) cohomology. However, based on what one would expect from
previous studies on the ambitwistor string and in anticipation of our
results, $\Stab(p)$ does not act reducibly on the complex when
$p^2=0,\ p\neq 0$. This will complicate the analysis of the cohomology
as a $\Stab(p)$-module. Nonetheless, this is necessary for our final
goal of deducing the Poincaré module structure of the cohomology, and
thus of the ambitwistor string spectrum, as this is what will
eventually allow us to provide any meaningful physical interpretations
of our results.

In between these two descriptions of the cohomology as a vector space
and as a $\Stab(p)$-module sits a third description where we focus on
how the cohomology decomposes as a representation of a maximal compact
subgroup of $\Stab(p)$, which acts reducibly. This gives additional
structure to the cohomology that is not present in the vector space
description, but is insufficient for the study of the cohomology as
representations of the Poincaré group, since these are induced from
representations of $\Stab(p)$.  Nevertheless, it does give some
intuition and is a useful middle ground to perform sanity checks on
our calculations. In particular, it lets us see in more detail what
the vector space treatment of BRST cohomology told us; the ambitwistor
string BRST cohomology cannot simply be the restriction of BRST
cohomology of the closed string to its massless sector.

To summarise, we will first describe the relative cohomology of the
ambitwistor string as a vector space, then as a module over a
maximal compact subgroup of $\Stab(p)$, and finally as a
$\Stab(p)$-module. We then compute the full BRST cohomology of the
ambitwistor string via homological algebraic techniques as a
$\Stab(p)$-module. By applying Mackey's method of inducing
representations to this, we obtain the description of the BRST
cohomology as a Poincaré module, which allows us to provide some
physical interpretation of our results.

\subsection{The $\Stab(p)$ subgroup of $\SO(25,1)$ and some of its modules}
\label{sec:stabp-reps}

In this section we will discuss the relevant representation theory of
the stabiliser subgroup $\Stab(p)$ in the Lorentz group of a momentum
$p$.  We will introduce the notation $V$ for the $26$-dimensional
representation of the Lorentz group $\SO(25,1)$.  It is a lorentzian
vector space with inner product $\eta$.  We will also denote by $\VV =
\CC \otimes_\RR V$ its complexification and we extend $\eta$ complex
linearly to a complex bilinear inner product on $\VV$.  The need for
complexification is due to the fact that we work in momentum space.
The complexes $\sC^\bullet(p)$ and $\Crel^\bullet(p)$ are also complex
vector spaces and complex conjugation relates $\sC^\bullet(p)$ and
$\sC^\bullet(-p)$ and similarly for the relative subcomplexes.  We
may impose reality of the spectrum by taking complex conjugate classes
in $\sH^\bullet(p)$ (resp. $\Hrel^\bullet(p)$) and $\sH^\bullet(-p)$
(resp. $\Hrel^\bullet(-p)$), as usual.

The momentum $p \in V^*$ defines an element in the dual of the real
representation $V$, but we may extend it complex linearly to
$p \in \VV^*$.  We let $\left<-,-\right>$ denote the dual pairing
between $\VV^*$ and $\VV$ so that if $v_1, v_2 \in V$ and
$v = v_1 + i v_2 \in \VV$, then
$\left<p,v\right> = \left<p,v_1\right> + i \left<p,v_2\right> \in
\CC$.

The inner product $\eta$ defines
Lorentz-equivariant musical isomorphisms
\begin{equation}
  \label{eq:musical}
  \begin{tikzcd}
    V \arrow[r, shift left=0.4em,"\flat"] & V^* \arrow[l, shift left=0.4em,"\sharp"]
  \end{tikzcd}
\end{equation}
under which $p \in V^*$ is sent to $p^\sharp \in V$.  We use the
notation $p^2 = \left<p, p^\sharp\right> = \eta(p^\sharp,p^\sharp)$,
which could be any real number.  Similarly, we also have musical
isomorphisms
\begin{equation}
  \label{eq:musical-complex}
  \begin{tikzcd}
    \VV \arrow[r, shift left=0.4em,"\flat"] & \VV^* \arrow[l, shift left=0.4em,"\sharp"]
  \end{tikzcd}
\end{equation}
induced by the complex bilinear $\eta$ on $\VV$.

The stabiliser subgroup $\Stab(p) \subset \SO(25,1)$ of a momentum
$p\in V^*$ (which is also the stabiliser subgroup of $p^\sharp \in V$)
is defined as
\begin{equation}
   \Stab(p) = \{\Lambda \in \SO(25,1) \mid \Lambda \cdot p = p \},
 \end{equation}
 where $\Lambda \cdot p = p \circ \Lambda^{-1}$ or, equivalently,
 $\left<\Lambda \cdot p, v\right> = \left<p, \Lambda^{-1} v\right>$
 for all $v \in V$.

 The subgroup  $\Stab(p)$ depends on $p$ as follows:
\begin{itemize}
\item If $p=0$, $\Stab(p) \cong SO(25,1)$;
\item If $p^2 < 0$, $\Stab(p) \cong SO(25)$;
\item If $p^2 > 0$, $\Stab(p) \cong SO(24,1)$;
\item If $p^2 = 0$ but $p \neq 0$, $\Stab(p) \cong \ISO(24)$, the
  $24$-dimensional euclidean group.
\end{itemize}
In all cases but the last, $\Stab(p)$ is semisimple and acts reducibly
on any finite-dimensional representation; particularly, in the
complexes $\Crel^\bullet(p)$ and $\sC^\bullet(p)$ and hence in their
cohomologies.  Alas, it is the last case that we will have to
concentrate on since, as we will show below, the BRST cohomology is
trivial unless $p^2 = 0$.  If $p=0$, then the complexes and the
cohomologies fully reduce into finite-dimensional $\SO(25,1)$-modules,
so we will concentrate on the more relevant, but also more involved,
case of $p\neq 0$ and $p^2 = 0$.

Fix a nonzero massless momentum $p \in V^*$ once and for all and let
$H:=\Stab(p) \subset \SO(25,1)$ denote the stabiliser subgroup of this momentum.  
As an $H$-module, $V$ is indecomposable with the following submodule structure:
\begin{equation}
  \label{eq:filtered-module}
  0 \subset \ell_p \subset p^\circ \subset V
\end{equation}
where here and in the sequel we use the shorthand $\ell_p = \RR
p^\sharp \subset V$ to denote the line spanned by $p^\sharp$ and
\begin{equation}
  \label{eq:p-perp}
  p^\circ = \left\{v \in V~\middle |~ \left<p,v\right> = 0\right\}
\end{equation}
denote the annihilator of $p\in V^*$ in $V = (V^*)^*$.  One can also
think of $p^\circ$ as the subspace perpendicular to $p^\sharp$.  Since
$p^2 = 0$, it follows that $\ell_p \subset p^\circ$.  Neither of the
$H$-submodules $\ell_p$ nor $p^\circ$ of $V$ have a complementary
submodule. If we view equation~\eqref{eq:filtered-module} as
describing a filtration of $V$, the associated graded module is
\begin{equation}
  \label{eq:assoc-graded-V}
  \ell_p \oplus (p^\circ/\ell_p) \oplus (V/p^\circ).
\end{equation}
Each of the above summands are $H$-modules which factor through a
maximal compact subgroup $K \subset H$.  Any such maximal compact
subgroup $K$ is isomorphic to $\SO(24)$.  In particular, $\ell_p$ and
$V/p^\circ$ are trivial one-dimensional $K$-modules, whereas
$V^\top := p^\circ/\ell_p$ is the standard (i.e., vector) module.
This module inherits a positive-definite inner product from $\eta$ as 
follows.  If we let $\bar v$ denote the image of $v \in p^\circ$ under
the canonical surjection $p^\circ \to V^\top$, we define the inner
product $\bar\eta$ on $V^\top$ by
\begin{equation}
  \label{eq:euclidean-IP-on-V-transverse}
  \bar\eta (\bar v, \bar w) = \eta(v,w),
\end{equation}
which is well-defined, $H$-invariant and positive-definite.
Because of this, we may also sometimes refer to $K$ as $\SO(V^\top)$.

We can make the decomposition in equation~\eqref{eq:assoc-graded-V}
more explicit by choosing a Witt frame $\be_+,\be_i,\be_-$, where
$i =1,\dots,24$, adapted to the filtration~\eqref{eq:filtered-module},
so that $p^\sharp$ is the span of $\be_+$ and $V/p^\circ$ is the span
of $\be_-$ and the $K$-module $V^\top$ is the span of the $\be_i$.  In
particular, $p^\circ$ is the span of the $\be_i$ and $\be_+$.

It is also convenient to complexify the above notions.  We will let
$L_p = \CC \otimes_\RR \ell_p = \CC p^\sharp \subset \VV$ and we will use
the notation $p^\perp = \CC \otimes_\RR p^\circ$.  Then we have a complex
flag
\begin{equation}
  \label{eq:complex-flag}
  0 \subset L_p \subset p^\perp \subset \VV.
\end{equation}
We will let $\VV^\top := p^\perp/L_p$ and with that notation, the
associated graded module to the filtration~\eqref{eq:complex-flag} is
\begin{equation}
  L_p \oplus \VV^\top \oplus \VV/p^\perp.
\end{equation}

\subsubsection{The Lie algebra of $\Stab(p)$}
\label{sec:lie-algebra-stabp}

Let us first describe the structure of $H$ or, in fact, its Lie
algebra. The Lie algebra $\so(V)$ is isomorphic (even as an
$\so(V)$-module) to $\ext{2}V$.  The isomorphism sends
$v \wedge w \in \ext{2}V$ to $v \curlywedge w \in \so(V)$, where
$v\curlywedge w$ is the endomorphism of $V$ defined by
\begin{equation}
  \label{eq:curlywedge}
  (v \curlywedge w) u := -\imath_u (v \wedge w) =\eta(w,u) v -  \eta(v,u) w,
\end{equation}
which is clearly skew-symmetric relative to $\eta$:
\begin{equation}
  \eta((v\curlywedge w)u, u)  = \eta(\eta(w,u) v -  \eta(v,u) w,u) =
  \eta(w,u)\eta(v,u) - \eta(v,u)\eta(w,u) = 0.
\end{equation}
The Lie algebra is obtained by extending the following Lie bracket
bilinearly:
\begin{equation}
  \label{eq:so-brackets}
  [u \curlywedge v, w \curlywedge x] = \eta(v,w) u \curlywedge x -
  \eta(u,w) v \curlywedge x - \eta(v,x) u \curlywedge w +
  \eta(u,x) v \curlywedge w.
\end{equation}
Let $\so(V) \xrightarrow{\Phi}\ext{2}V$ denote this isomorphism.  The image
under $\Phi$ of the Lie algebra $\h$ of the stabiliser of $p$ is given
by those $\Lambda \in \ext{2}V$ such that $\imath_p \Lambda = 0$.

\begin{lemma}
  \label{lem:stabp-in-sov}
  $\Phi$ restricts to an isomorphism $\h \xrightarrow{\cong} \ext{2}p^\circ$.
\end{lemma}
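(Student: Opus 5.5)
The plan is to identify the condition $\Lambda\cdot p = p$ at the Lie algebra level and translate it through $\Phi$, then count dimensions. An element $A\in\so(V)$ lies in $\h$ exactly when it annihilates $p$ under the induced action on $V^*$. Since $A$ is $\eta$-skew and the musical isomorphisms are Lorentz-equivariant, this is equivalent to $A p^\sharp = 0$. Thus $\h = \{A \in \so(V) \mid A p^\sharp = 0\}$.

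Next I translate this into a condition on $\Lambda = \Phi(A)\in\ext{2}V$. By \eqref{eq:curlywedge}, $A u = -\imath_u \Lambda$, where $\imath_u$ denotes contraction with $u^\flat$ via $\eta$, extended linearly from decomposables. So $A p^\sharp = 0$ iff $\imath_{p^\sharp}\Lambda = 0$, i.e.\ $\imath_p\Lambda = 0$, which is the characterisation stated just before the lemma. Since $\Phi$ is already a linear isomorphism, what remains is to show
\[
\{\Lambda\in\ext{2}V \mid \imath_p\Lambda=0\} = \ext{2}p^\circ .
\]

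For the inclusion $\supseteq$, if $v,w\in p^\circ$ then $\imath_p(v\wedge w) = \langle p,v\rangle w - \langle p,w\rangle v = 0$, and this extends linearly. For $\subseteq$, use the Witt frame $\be_+,\be_i,\be_-$ adapted to \eqref{eq:filtered-module}, with $\langle p,\be_-\rangle \neq 0$ and $p$ vanishing on $\be_+$ and the $\be_i$. Write
\[
\Lambda = \be_-\wedge u + \Lambda' ,\qquad u\in p^\circ,\ \Lambda'\in\ext{2}p^\circ ,
\]
which is possible because $V = p^\circ\oplus\CC\be_-$ (here over $\RR$), so $\ext{2}V = \ext{2}p^\circ \oplus (\be_-\wedge p^\circ)$. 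Then $\imath_p\Lambda = \langle p,\be_-\rangle u$, which vanishes only if $u=0$. Hence $\Lambda\in\ext{2}p^\circ$. Alternatively, one can note that $\imath_p\colon \ext{2}V\to V$ has image $p^\circ$ and kernel containing $\ext{2}p^\circ$, and compare dimensions: $\binom{26}{2}-25=\binom{25}{2}$.

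Because $\Phi$ is an $\so(V)$-module isomorphism and $\h$ is a subalgebra, the restriction is then a Lie algebra isomorphism onto its image. That image is $\ext{2}p^\circ$, which is closed under \eqref{eq:so-brackets} since all the coefficients $\eta(\cdot,\cdot)$ multiply wedges of vectors in $p^\circ$.

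The only delicate step is the first one: matching the action $\Lambda\cdot p = p\circ\Lambda^{-1}$ on $V^*$ with the condition $Ap^\sharp=0$, keeping track of signs and the convention for $\imath$. Signs are actually irrelevant here because the condition is a vanishing condition, so I expect this to be routine. It also serves as a consistency check on the dimension count $\dim\h = \dim\ISO(24) = 300 = \binom{25}{2}$.
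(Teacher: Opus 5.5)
Your proof is correct and is essentially the paper's. The paper also identifies $\Phi(\h)$ with $\ker\imath_p$. It then proves $\ker\imath_p=\ext{2}p^\circ$ by exactly the rank count you give as your alternative: surjectivity of $\imath_p$ onto $p^\circ$, together with $\ext{2}p^\circ\subseteq\ker\imath_p$ and $\dim\ext{2}p^\circ=300$. Your direct-sum decomposition $\ext{2}V=\ext{2}p^\circ\oplus(\be_-\wedge p^\circ)$ is an equally valid way to get the reverse inclusion. Your derivation of $\h=\{A : Ap^\sharp=0\}$ also supplies the identification $\Phi(\h)=\ker\imath_p$, which the paper only asserts.
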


\begin{proof}
  Contraction with $p$ gives a map $\imath_p \colon \ext{2} V \to
  p^\circ$, since $\imath_p \circ \imath_p = 0$.  This map is moreover
  surjective.  Indeed, pick any $v \in V$ such that
  $\left<p,v\right> = 1$.  Then for any $w \in p^\circ$, $w = \imath_p
  (v \wedge \w)$.  By the Rank Theorem, $\dim \ker \imath_p = \dim
  \ext{2} V - \dim p^\circ = 325 -25 = 300$.  Now $\ext{2} p^\circ
  \subseteq \ker \imath_p$ and $\dim \ext{2} p^\circ = 300$, hence
  $\Phi(\h) = \ker \imath_p = \ext{2}p^\circ$.
\end{proof}

In terms of the explicit Witt frame, the image of $\stab(p)$ under
$\Phi$ is spanned by $\be_+ \wedge \be_i$ and $\be_i \wedge \be_j$,
with $i < j$.

\begin{lemma}
  \label{lem:iso-h2rel-skewsymmetric}
  The canonical surjection $\pi : p^\circ \to V^\top$ induces an
  isomorphism of $H$-modules
  \begin{equation}
    \dfrac{\ext{2}p^\circ}{\ell_p \wedge p^\circ} \cong \ext{2}V^\top
  \end{equation}
  and hence a short exact sequence
  \begin{equation}
    \begin{tikzcd}
      0 \arrow[r] & \ell_p \wedge p^\circ \arrow[r] & \ext{2}p^\circ \arrow[r] & \ext{2} V^\top \arrow[r] & 0.
    \end{tikzcd}
  \end{equation}
\end{lemma}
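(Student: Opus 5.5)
The plan is to use the functoriality of $\ext{2}$. The canonical surjection $\pi\colon p^\circ \to V^\top$ is $H$-equivariant, since $\ell_p$ is an $H$-submodule of $p^\circ$. Hence $\ext{2}\pi\colon \ext{2}p^\circ \to \ext{2}V^\top$, given on decomposables by $v\wedge w \mapsto \bar v \wedge \bar w$, is a well-defined $H$-equivariant linear map. The lemma then follows once we establish three things: $\ext{2}\pi$ is surjective, its kernel is exactly $\ell_p \wedge p^\circ$, and $\ell_p\wedge p^\circ$ is an $H$-submodule. The first isomorphism theorem (for modules) then gives the isomorphism, and the short exact sequence is simply a restatement of it.

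\textbf{Surjectivity.} This is immediate: $\ext{2}V^\top$ is spanned by the elements $\bar v\wedge\bar w$, and each has the preimage $v\wedge w$.

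\textbf{Submodule property.} For $h\in H$ we have
\[
h(p^\sharp\wedge w) = p^\sharp \wedge hw,
\]
because $h$ fixes $p^\sharp$ and preserves $p^\circ$. At the Lie algebra level this reads $X(p^\sharp\wedge w) = p^\sharp\wedge Xw$ for $X\in\h$. So $\ell_p\wedge p^\circ$ is an $H$-submodule.

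\textbf{The kernel.} The inclusion $\ell_p\wedge p^\circ \subseteq \ker\ext{2}\pi$ holds because $\pi(p^\sharp)=0$. For equality I would argue by dimension, in the same style as the proof of Lemma~\ref{lem:stabp-in-sov}. We have $\dim p^\circ = 25$, so $\dim\ext{2}p^\circ = 300$ and $\dim\ext{2}V^\top = \binom{24}{2} = 276$. By the Rank Theorem, $\dim\ker = 24$. Meanwhile $\ell_p\wedge p^\circ = p^\sharp\wedge p^\circ \cong p^\circ/\ell_p$ via $w\mapsto p^\sharp\wedge w$, which has dimension $24$. Hence the two spaces coincide.

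Alternatively, one can argue explicitly in the Witt frame: $p^\circ$ has basis $\be_+,\be_i$, the kernel is spanned by the $\be_+\wedge\be_i$, and these span $\ell_p\wedge p^\circ$.

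The only step requiring any care is the kernel identification, and in particular checking that $w\mapsto p^\sharp\wedge w$ has kernel exactly $\ell_p$. This is the standard fact that $u\wedge w=0$ iff $w$ is proportional to $u$ (for $u\neq 0$). The complexified statement follows identically by replacing $(\RR,V,\ell_p,p^\circ,V^\top)$ with $(\CC,\VV,L_p,p^\perp,\VV^\top)$.
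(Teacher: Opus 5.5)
Your proposal is correct and follows essentially the same route as the paper: $\ext{2}\pi$ is surjective, its kernel has dimension $300-276=24$ by the Rank Theorem, and it contains the $24$-dimensional $\ell_p\wedge p^\circ$, so the two coincide. You additionally make explicit the submodule property and why $\dim \ell_p\wedge p^\circ = 24$, both of which the paper leaves as clear.
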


\begin{proof}
  Functorially, the canonical surjection $\pi$ induces a surjection
  $\ext{2}\pi \colon \ext{2}p^\circ \to \ext{2}V^\top$, whose kernel
  has dimension $\dim \ext{2}p^\circ - \dim \ext{2}V^\top = 300 - 276
  = 24$.  It is also clear that $\ell_p \wedge p^\circ \subset \ker
  \ext{2}\pi$ is also $24$-dimensional, hence $\ell_p \wedge p^\circ =
  \ker \ext{2}\pi$ and hence the isomorphism follows.  The short exact
  sequence is simply a restatement of the isomorphism.
\end{proof}

The short exact sequence in the Lemma can be rephrased in terms of Lie
algebras
\begin{equation}
  \label{eq:SES-stab-h-semidirect}
  \begin{tikzcd}
    0 \arrow[r] & \fa \arrow[r] & \h \arrow[r] & \so(V^\top) \arrow[r] & 0,
  \end{tikzcd}
\end{equation}
where we have introduced the notation $\fa := \ell_p \curlywedge
p^\circ$ for the abelian ideal.  Since $\so(V^\top) \cong \so(24)$ is
simple, this sequence splits and, together with the isomorphism $\fa \cong V^\top$
(see Proposition~\ref{prop:stabp-module-isos}~(1) below), this allows
us to identify $\h$ with the semidirect product $\so(V^\top)
\ltimes V^\top$; that is, the Lie algebra of euclidean transformations
of $V^\top$ relative to the positive-definite euclidean inner product
induced by $\eta$.

\subsubsection{$H$-submodules of $V \otimes V$}
\label{sec:some-submodules-v}

Foreshadowing the results of our calculations, we will first discuss
$V \otimes V = \sym{2} V \oplus \ext{2}V$ as an $H$-module. The
filtration~\eqref{eq:filtered-module} induces a Hasse diagram of
$H$-submodules of $V \otimes V$, where arrows denote inclusions:
\begin{equation}
  \label{eq:VtensorV-as-H-mod}
  \begin{tikzcd}
    & & \ell_p \otimes V \arrow[rd] & & \\
    & \ell_p \otimes p^\circ  \arrow[rd] \arrow[ru] & & p^\circ \otimes V \arrow[dr] & \\
    \ell_p \otimes \ell_p \arrow[ru] \arrow[rd] & & p^\circ \otimes p^\circ \arrow[ru] \arrow[rd] & & V \otimes V\\
    & p^\circ \otimes \ell_p \arrow[rd] \arrow[ru] & & V \otimes p^\circ \arrow[ru] & \\
    & & V \otimes \ell_p \arrow[ru] & & \\
  \end{tikzcd}
\end{equation}

\subsubsection{$H$-submodules of $\ext{2}V$}
\label{sec:submodules-skew-2-v}

Intersecting the Hasse diagram \eqref{eq:VtensorV-as-H-mod} with
$\ext{2}V$ we arrive at the following Hasse diagram of $H$-submodules
of $\ext{2}V$:
\begin{equation}
  \label{eq:VwedgeV-as-H-mod}
  \begin{tikzcd}
    & \ell_p \wedge V \arrow[rd] & & \\
    \ell_p \wedge p^\circ \arrow[rd] \arrow[ru] & & p^\circ \wedge V \arrow[r,equal] & \ext{2}V\\  
    & \ext{2}p^\circ \arrow[ru] & & \\
  \end{tikzcd}
\end{equation}
where
\begin{equation}
  \begin{split}
    \ell_p \wedge V &= \ker \varepsilon_p \colon \ext{2}V \to \ext{3}V\\
    \ext{2}p^\circ &= \ker \imath_p \colon \ext{2}V \to p^\circ\\
    \ell_p \wedge p^\circ &= \ker \imath_p \colon \ell_p \wedge V \to \ell_p
  \end{split}
\end{equation}
where $\imath_p$ and $\varepsilon_p$ are interior and exterior
products with $p$ and $p^\sharp$, respectively.  We can collapse this
Hasse diagram into the following filtration:
\begin{equation}
  \label{eq:VwedgeV-filtered}
  0 \subset  \ell_p \wedge p^\circ \subset \left( \ell_p \wedge V + \ext{2}p^\circ  \right) \subset \ext{2}V,
\end{equation}
whose associated graded module is
\begin{equation}
  \label{eq:VwedgeV-assoc-graded}
  \left( \ell_p \wedge p^\circ  \right)\oplus \dfrac{\ell_p \wedge V + \ext{2}p^\circ}{\ell_p \wedge p^\circ}\oplus \dfrac{\ext{2}V}{\ell_p \wedge V + \ext{2}p^\circ}.
\end{equation}
Each of these modules factors through a maximal compact subgroup, as
we now show.  Indeed it is clear that under the action of $\fa$,
\begin{equation}
  \begin{split}
    \ell_p \wedge p^\circ & \mapsto 0\\
    \ell_p \wedge V + \ext{2} p^\circ & \mapsto \ell_p \wedge p^\circ\\
    \ext{2} V & \mapsto \ell_p \wedge V + \ext{2} p^\circ,
  \end{split}
\end{equation}
so that the action is trivial on the above (quotient) modules.  We can
be more precise about the nature of those modules, after a couple of
preliminary results.

\begin{lemma}
  \label{lem:module-iso}
  The following isomorphism of $H$-modules holds:
  \begin{equation}
    \dfrac{\ell_p \wedge V + \ext{2} p^\circ}{\ell_p \wedge V} \cong \dfrac{\ext{2}p^\circ}{\ell_p \wedge p^\circ}.
  \end{equation}
\end{lemma}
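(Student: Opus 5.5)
This is the second isomorphism theorem for modules. Take $A = \ext{2}p^\circ$ and $B = \ell_p \wedge V$, both $H$-submodules of $\ext{2}V$ by the Hasse diagram~\eqref{eq:VwedgeV-as-H-mod}. The second isomorphism theorem gives a canonical isomorphism of $H$-modules
\begin{equation*}
  \dfrac{A + B}{B} \cong \dfrac{A}{A \cap B}.
\end{equation*}
It is induced by the inclusion $A \hookrightarrow A+B$ followed by the projection to $(A+B)/B$. That composite is $H$-equivariant, because it is a composite of equivariant maps. It is surjective, because every class $a+b+B$ equals $a+B$. Its kernel is $A\cap B$. It therefore suffices to identify the intersection
\begin{equation*}
  \ext{2}p^\circ \cap (\ell_p \wedge V) = \ell_p \wedge p^\circ .
\end{equation*}

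The inclusion $\supseteq$ is immediate: $\ell_p \subset p^\circ$ gives $\ell_p\wedge p^\circ \subset \ext{2}p^\circ$, and trivially $\ell_p\wedge p^\circ\subset \ell_p\wedge V$.

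For $\subseteq$, use the kernel descriptions listed after~\eqref{eq:VwedgeV-as-H-mod}. Let $\Lambda \in \ell_p\wedge V$, so $\Lambda = p^\sharp \wedge v$ for some $v\in V$. Then
\begin{equation*}
  \imath_p \Lambda = \left<p,p^\sharp\right> v - \left<p,v\right> p^\sharp = -\left<p,v\right>p^\sharp ,
\end{equation*}
using $p^2=0$. Now suppose in addition $\Lambda\in\ext{2}p^\circ = \ker\imath_p$ (Lemma~\ref{lem:stabp-in-sov}). Then $\left<p,v\right>=0$, since $p^\sharp\neq 0$. Hence $v\in p^\circ$ and $\Lambda\in\ell_p\wedge p^\circ$.

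As a cross-check, the dimensions agree: the left-hand side has dimension $\dim(A+B) - 25 = (300+25-24) - 25 = 276$, and the right-hand side has dimension $300-24=276$. This is consistent with both sides being isomorphic to $\ext{2}V^\top$ via Lemma~\ref{lem:iso-h2rel-skewsymmetric}.

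There is no real obstacle; the only point needing care is the identification of the intersection, which is where $p^2=0$ is used.
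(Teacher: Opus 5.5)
Your proof is correct and takes the same approach as the paper: apply the Second Isomorphism Theorem to $\ext{2}p^\circ$ and $\ell_p\wedge V$, then identify their intersection as $\ell_p\wedge p^\circ$. The paper only asserts that intersection. You verify it explicitly via $\imath_p(p^\sharp\wedge v)=-\left<p,v\right>p^\sharp$, which is a useful addition.
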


\begin{proof}
  This follows from the Second Isomorphism Theorem for modules
  \cite[Theorem~4(2), Section~10.2]{DummitFoote}: if $E,F$ are
  submodules of a module $M$ over a ring $R$, then there is an
  isomorphism
  \begin{equation}
    \dfrac{E + F}{F} \cong \dfrac{E}{E \cap F}.
  \end{equation}
  We apply this to $E = \ext{2}p^\circ$ and $F = \ell_p \wedge V$, which are
  submodules of $\ext{2}V$ as a module over the universal enveloping
  algebra of $\stab(p)$, and we use that that $E \cap F =
  \ext{2}p^\circ \cap (\ell_p \wedge V) = \ell_p \wedge p^\circ$.
\end{proof}

\begin{lemma}
  \label{lem:another-module-iso}
  The following isomorphism of $H$-module holds:
  \begin{equation}
    \dfrac{\ell_p \wedge V}{\ell_p \wedge p^\circ} \cong \ell_p.
  \end{equation}
\end{lemma}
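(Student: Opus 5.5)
The plan is to realise the isomorphism through the map $\imath_p$, the interior contraction with $p$, restricted to $\ell_p \wedge V$, and then apply the First Isomorphism Theorem. For $v \in V$ we have
\begin{equation*}
  \imath_p(p^\sharp \wedge v) = \left<p,p^\sharp\right> v - \left<p,v\right> p^\sharp = -\left<p,v\right> p^\sharp ,
\end{equation*}
because $p^2 = 0$. Hence $\imath_p$ maps $\ell_p \wedge V$ into $\ell_p$; this is the map $\ell_p \wedge V \to \ell_p$ already listed in the description of the Hasse diagram~\eqref{eq:VwedgeV-as-H-mod}.

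First, I will check that this map is $H$-equivariant. The contraction $\imath_p \colon \ext{2}V \to V$ satisfies $\imath_{\Lambda\cdot p}(\Lambda \cdot \omega) = \Lambda\cdot(\imath_p \omega)$ for every Lorentz transformation $\Lambda$. Since every element of $H$ fixes $p$, the restriction of $\imath_p$ to the $H$-submodule $\ell_p \wedge V$ is an $H$-module map. Second, I will check that it is surjective: choose $v \in V$ with $\left<p,v\right> = 1$, which is possible since $p \neq 0$. Then $p^\sharp \wedge v \mapsto -p^\sharp$, which spans $\ell_p$.

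Next, I will identify the kernel. By the displayed formula, $p^\sharp \wedge v$ lies in the kernel exactly when $\left<p,v\right> = 0$, that is, when $v \in p^\circ$. Every element of $\ell_p \wedge V$ has the form $p^\sharp \wedge v$, so the kernel is $\ell_p \wedge p^\circ$. This agrees with the characterisation stated just before Lemma~\ref{lem:module-iso}. As a dimension check, $\dim \ell_p \wedge V = 25$ and $\dim \ell_p \wedge p^\circ = 24$, matching $\dim \ell_p = 1$. The First Isomorphism Theorem for modules over the universal enveloping algebra of $\h$ then gives the claimed isomorphism $(\ell_p \wedge V)/(\ell_p \wedge p^\circ) \cong \ell_p$.

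There is no real obstacle here. The only point needing care is that the map is $H$-equivariant rather than merely linear, and this rests on $H$ fixing $p$. It is worth noting, for later use, that both sides are trivial $\h$-modules. An alternative route avoids $\imath_p$: the map $V \to \ell_p \wedge V$, $v \mapsto p^\sharp \wedge v$, is equivariant and surjective with kernel $\ell_p$, giving $\ell_p \wedge V \cong V/\ell_p$. Then $(\ell_p\wedge V)/(\ell_p\wedge p^\circ) \cong V/p^\circ$, and $V/p^\circ \cong \ell_p$ because both are trivial one-dimensional $H$-modules.
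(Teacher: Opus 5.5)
Your proposal is correct and follows essentially the same route as the paper: the paper restricts $\imath_p$ to $\ell_p \wedge V$ and reads the isomorphism off the short exact sequence $0 \to \ell_p \wedge p^\circ \to \ell_p \wedge V \to \ell_p \to 0$. Your argument is the same First Isomorphism Theorem step, with the equivariance, surjectivity and kernel checks written out explicitly.
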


\begin{proof}
  Restricting the interior product by $p$ to $\ell_p \wedge V \subset
  \ext{2}V$ we arrive at the following short exact sequence
  \begin{equation}
    \begin{tikzcd}
      0 \arrow[r] & \ell_p \wedge p^\circ \arrow[r] & \ell_p \wedge V  \arrow[r,"\imath_p"] & \ell_p \arrow[r] & 0
    \end{tikzcd}
  \end{equation}
  from which the isomorphism follows.
\end{proof}

\begin{proposition}
  \label{prop:stabp-module-isos}
  There are $H$-module isomorphisms
  \begin{enumerate}
  \item $\ell_p \wedge p^\circ \cong V^\top$
  \item $(\ell_p \wedge V + \ext{2}p^\circ)/(\ell_p \wedge p^\circ) \cong  \ell_p  \oplus \ext{2}V^\top$
  \item $(\ext{2}V)/(\ell_p \wedge V + \ext{2}p^\circ) \cong V^\top$
  \end{enumerate}
  where all $H$-modules in the right-hand sides factor through
  a maximal compact subgroup $K$.
\end{proposition}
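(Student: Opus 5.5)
For part (1), I will define the linear map $\varphi\colon V^\top \to \ell_p \wedge p^\circ$ by $\bar v \mapsto p^\sharp \wedge v$. It is well defined because $p^\sharp \wedge p^\sharp = 0$, so the ambiguity $v \mapsto v + \lambda p^\sharp$ is killed. It is surjective by the definition of $\ell_p \wedge p^\circ$. Both sides have dimension $24$, since the kernel of $v\mapsto p^\sharp\wedge v$ on $p^\circ$ is exactly $\ell_p$. Equivariance follows because $H$ fixes $p^\sharp$: for $h\in H$ we have $h(p^\sharp\wedge v) = p^\sharp \wedge hv$, and the map $p^\circ\to V^\top$ is $H$-equivariant. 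Since $V^\top$ is acted on trivially by $\fa$ (indeed $(\ell_p\curlywedge p^\circ)$ sends $p^\circ$ into $\ell_p$), the module factors through $K$.

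For part (3), I would use the nondegenerate pairing induced by $\eta$ between $\ext{2}V$ and itself. I would identify the quotient via a map $\psi\colon \ext{2}V \to V^\top$ whose kernel is $\ell_p\wedge V + \ext{2}p^\circ$. The natural candidate is $\psi(\Lambda) = \overline{\imath_p \Lambda}$, but $\imath_p\Lambda$ need not lie in $p^\circ$, so I would refine it. First, $\imath_p$ maps $\ext{2}V$ onto $p^\circ$, as shown in Lemma~\ref{lem:stabp-in-sov}. Then I project $p^\circ \to V^\top$. The kernel of the composite $\ext{2}V\to p^\circ\to V^\top$ is $\imath_p^{-1}(\ell_p)$. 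This kernel contains $\ext{2}p^\circ$ (mapped to $0$) and $\ell_p\wedge V$, because $\imath_p(p^\sharp\wedge v) = -\langle p,v\rangle p^\sharp\in\ell_p$. A dimension count settles equality. The image has dimension $24$ (surjective), so the kernel has dimension $301$. Meanwhile $\dim(\ell_p\wedge V + \ext{2}p^\circ) = 25 + 300 - 24 = 301$, using the intersection $\ell_p\wedge p^\circ$ computed in Lemma~\ref{lem:module-iso}. Equivariance again comes from $H$ fixing $p$. This gives (3).

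Part (2) is where a genuine splitting must be produced, and I expect it to be the main obstacle. Set $M = (\ell_p\wedge V+\ext{2}p^\circ)/(\ell_p\wedge p^\circ)$. It contains the submodule $(\ell_p\wedge V)/(\ell_p\wedge p^\circ)\cong \ell_p$ by Lemma~\ref{lem:another-module-iso}. Its quotient by this submodule is $\ext{2}p^\circ/(\ell_p\wedge p^\circ)\cong\ext{2}V^\top$, by Lemmas~\ref{lem:module-iso} and~\ref{lem:iso-h2rel-skewsymmetric}. To split, I note that $\ext{2}p^\circ/(\ell_p\wedge p^\circ)$ also embeds in $M$ as a submodule. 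The two submodules $(\ell_p\wedge V)/(\ell_p\wedge p^\circ)$ and $\ext{2}p^\circ/(\ell_p\wedge p^\circ)$ intersect in zero, because $(\ell_p\wedge V)\cap\ext{2}p^\circ=\ell_p\wedge p^\circ$. Together they span $M$, so $M$ is their internal direct sum, giving $M\cong\ell_p\oplus\ext{2}V^\top$.

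Finally, $\ell_p$ is trivial under all of $\h$, since $H$ fixes $p^\sharp$. Also, $\fa$ acts trivially on $\ext{2}V^\top$, because $\fa$ maps $\ext{2}p^\circ$ into $\ell_p\wedge p^\circ$. So everything factors through $K$.
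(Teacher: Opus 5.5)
Your proof is correct. Parts (1) and (3) follow the paper's argument (wedging with $p^\sharp$, respectively contracting with $p$ and projecting to $V^\top$), with your dimension count $25+300-24=301$ making the kernel identification in (3) explicit. One slip in (3): your remark that $\imath_p\Lambda$ ``need not lie in $p^\circ$'' is false, since $\imath_p\circ\imath_p=0$ forces $\imath_p\Lambda\in p^\circ$. You use exactly this in the next sentence, so no refinement is needed and nothing breaks.

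The genuine difference is in (2). The paper uses the Third Isomorphism Theorem to build the short exact sequence $0\to\ell_p\to M\to\ext{2}V^\top\to 0$. It then checks that $\fa$ acts trivially on $M$, and splits the sequence by complete reducibility of the compact group $K$. You instead use the identity that, for submodules $A,B$, the quotient $(A+B)/(A\cap B)$ is the internal direct sum of $A/(A\cap B)$ and $B/(A\cap B)$. With $A=\ell_p\wedge V$ and $B=\ext{2}p^\circ$, this applies because $A\cap B=\ell_p\wedge p^\circ$. Your route is more elementary and gives a canonical splitting, namely the images of $\ell_p\wedge V$ and $\ext{2}p^\circ$. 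It needs neither compactness nor the triviality of the $\fa$-action on $M$; that triviality follows afterwards from the two summands. The paper's route matches the pattern it uses throughout, identifying associated graded pieces as $K$-modules, but here it does more work than necessary.
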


\begin{proof}
  \begin{enumerate}
  \item Exterior product with $p^\sharp$ gives a short exact sequence of
    $H$-modules and $H$-equivariant maps
    \begin{equation}
      \begin{tikzcd}
        0 \arrow[r] & \ell_p \arrow[r] & p^\circ \arrow[r,"p^\sharp\wedge -"] & \ell_p \wedge p^\circ \arrow[r] & 0,
      \end{tikzcd}
    \end{equation}
    from where the isomorphism follows.
  \item We use the Third Isomorphism Theorem \cite[Theorem~4(3), Section~10.2]{DummitFoote} for modules applied to
    $\ell_p \wedge p^\circ \subset \ell_p \wedge V \subset \ell_p \wedge V +
      \ext{2}p^\circ$ to exhibit the first of the following isomorphisms:
      \begin{equation}
        \dfrac{(\ell_p \wedge V + \ext{2}p^\circ)/(\ell_p \wedge p^\circ)}{(\ell_p
          \wedge V)/(\ell_p \wedge p^\circ)} \cong \dfrac{\ell_p \wedge V +
          \ext{2}p^\circ}{\ell_p \wedge V} \cong \dfrac{\ext{2}p^\circ}{\ell_p
          \wedge p^\circ} \cong \ext{2}V^\top,
      \end{equation}
      where the second isomorphism is Lemma~\ref{lem:module-iso} and the third is
      Lemma~\ref{lem:iso-h2rel-skewsymmetric}.   Using
      Lemma~\ref{lem:another-module-iso}, we have a short exact
      sequence of $H$-modules
      \begin{equation}
        \begin{tikzcd}
          0 \arrow[r] & \ell_p \arrow[r] &  \dfrac{\ell_p \wedge V + \ext{2}p^\circ}{\ell_p \wedge p^\circ}\arrow[r] & \ext{2}V^\top \arrow[r] & 0,
        \end{tikzcd}
      \end{equation}
      where the first and third $H$-modules factor through a
      maximal compact subgroup.  It is easy to see that the middle
      module also factors through a maximal compact subgroup.
      Indeed, it is easy to see that $\fa$ sends $\ell_p
      \wedge V + \ext{2}p^\circ$ to $\ell_p \wedge p^\circ$ and hence it
      acts trivially on the quotient.  Since any maximal compact
      subgroup acts reducibly, the above sequence splits and we may
      conclude that
      \begin{equation}
        \dfrac{\ell_p \wedge V + \ext{2}p^\circ}{\ell_p \wedge p^\circ} \cong \ell_p \oplus \ext{2}V^\top.
      \end{equation}
    \item Interior product with $p$ gives a map $\imath_p \colon
    \ext{2}V \to p^\circ$, since $\imath_p \circ \imath_p = 0$.
    Composing this with the projection $p^\circ \to V^\top$ gives a
    $H$-equivariant surjective linear map $\ext{2} V \to
    V^\top$, whose kernel is precisely $\ell_p \wedge V + \ext{2}p^\circ$.
  \end{enumerate}
\end{proof}

In summary, the associated graded
module~\eqref{eq:VwedgeV-assoc-graded} of the filtered
module~\eqref{eq:VwedgeV-filtered} is isomorphic to
\begin{equation}
  \ext{2}V^\top \oplus 2 V^\top \oplus \ell_p.
\end{equation}

\begin{remark}
  We could have obtained this result in a quicker, albeit less
  covariant, way by choosing a Witt frame $(\be_+,\be_i,\be_-)$
  adapted to the filtration of $V$ and then noticing that $\ext{2}V$
  has basis $\be_i \wedge \be_j$ ($i<j$), which spans $\ext{2}V^\top$,
  $\be_i \wedge \be_{\pm}$, which spans two copies of $V^\top$, and
  $\be_+ \wedge \be_-$, which spans a copy of the trivial module
  isomorphic to $\ell_p$.  A Witt frame involves a choice of $\be_-$,
  which is a section of $V \to V/p^\circ$, and one must then show that
  everything is independent of that choice.   It seems more
  transparent to work in a manifestly covariant way ab initio.
\end{remark}

\subsubsection{$H$-submodules of $\sym{2}V$}
\label{sec:submodules-sym-2-v}

Alternatively, we may intersect \eqref{eq:VtensorV-as-H-mod} with
$\sym{2} V$ to arrive at the following Hasse diagram of $H$-submodules
of $\sym{2} V$:
\begin{equation}
  \label{eq:VsymV-as-H-mod}
  \begin{tikzcd}
    & & \ell_p \odot V \arrow[rd] & &\\
    \ell_p \odot \ell_p \arrow[r]  & \ell_p \odot p^\circ \arrow[ru] \arrow[rd] & &  p^\circ \odot V \arrow[r] & \sym{2} V\\
    & & \sym{2} p^\circ \arrow[ru] & &\\
  \end{tikzcd}
\end{equation}
where
\begin{equation}
  \begin{split}
    \ell_p \odot p^\circ &= \ker \imath_p : \ell_p \odot V \to \ell_p\\
    \sym{2} p^\circ &= \ker i_p : \sym{2} V \to V\\
    p^\circ \odot V &= \ker i_p^2 : \sym{2} V \to \CC.
  \end{split}
\end{equation}
The $\SO(V)$-module $\sym{2} V$ is not irreducible: it decomposes into
the symmetric traceless $\sym{2}_0V$ and the one-dimensional trivial
trace modules.  For any submodule $M \subset \sym{2} V$, we will let 
$M_0$ denote the intersection $M \cap \sym{2}_0 V$ hereafter.  Using
that $(\ell_p\odot V)_0 = \ell_p \odot p^\circ$, the above Hasse
diagram restricted to the symmetric traceless $\sym{2}_0 V$ collapses
to the following filtration:
\begin{equation}
  \label{eq:VsymVtraceless-as-H-mod}
  0 \subset  \ell_p \odot \ell_p \subset \ell_p \odot p^\circ  \subset \sym{2}_0  p^\circ \subset (\ell_p \odot V + \sym{2}p^\circ)_0 \subset \left( p^\circ \odot V\right)_0 \subset \sym{2}_0 V.
\end{equation}
The associated graded module factors through a maximal compact
subgroup of $H$.  Indeed, it is not hard to see that under the
action of $\fa$,
\begin{equation}
  \label{eq:action-of-ideal-on-sym2}
  \begin{split}
    \ell_p \odot p & \mapsto 0\\
    \ell_p \odot p^\circ & \mapsto \ell_p \odot p\\
    \sym{2}_0 p^\circ& \mapsto \ell_p \odot p^\circ\\
    (\ell_p \odot V + \sym{2}p^\circ)_0 & \mapsto \ell_p \odot p^\circ \\
    (p^\circ \odot V)_0 & \mapsto  (\ell_p \odot V + \sym{2}p^\circ)_0\\
  \sym{2}_0 V & \mapsto (p^\circ \odot V)_0,
  \end{split}
\end{equation}
so that the action is trivial on the above (quotient) modules.  As
before, we can be more precise about the nature of the above modules,
but not before a preliminary result.

\begin{lemma}
  \label{lem:iso-h2rel-symmetric-submodule}
  The canonical surjection $\pi : p^\circ \to V^\top$ induces an isomorphism of $H$-modules
  \begin{equation}
    \dfrac{\sym{2}p^\circ}{\ell_p \odot p^\circ} \cong \sym{2} V^\top,
  \end{equation}
  which, since $\ell_p\odot p^\circ \subset \sym{2}_0 p^\circ$ is already
  traceless, restricts to an isomorphism
  \begin{equation}
    \dfrac{\sym{2}_0p^\circ}{\ell_p \odot p^\circ} \cong \sym{2}_0 V^\top.
  \end{equation}
\end{lemma}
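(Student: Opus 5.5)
The plan mirrors the proof of Lemma~\ref{lem:iso-h2rel-skewsymmetric}. By functoriality, the $H$-equivariant canonical surjection $\pi\colon p^\circ \to V^\top$ induces an $H$-equivariant map $\sym{2}\pi\colon \sym{2}p^\circ \to \sym{2}V^\top$, $v\odot w \mapsto \bar v \odot \bar w$. This map is surjective: choosing preimages of any $\bar v, \bar w \in V^\top$ gives a preimage of $\bar v\odot\bar w$, and such products span $\sym{2}V^\top$. Since $\pi(p^\sharp)=0$, we have $\ell_p \odot p^\circ \subset \ker \sym{2}\pi$.

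Next comes a dimension count. We have $\dim \sym{2}p^\circ = \binom{26}{2} = 325$ and $\dim \sym{2}V^\top = \binom{25}{2} = 300$, so by the Rank Theorem the kernel is $25$-dimensional. On the other hand, $v \mapsto p^\sharp \odot v$ is injective on $p^\circ$, since $p^\sharp\odot v = 0$ would force $p^\sharp = 0$ or $v=0$ in the symmetric algebra, which is a domain. Hence $\dim \ell_p\odot p^\circ = 25$, so $\ell_p \odot p^\circ = \ker\sym{2}\pi$, and the First Isomorphism Theorem gives the first isomorphism. Unlike the skew case, $\ell_p\odot\ell_p$ is nonzero and sits inside $\ell_p\odot p^\circ$, which is consistent with these counts.

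For the traceless statement, I would first make precise what the trace means. On $\sym{2}V$ the trace is contraction with $\eta^{-1}$, that is $v\odot w \mapsto \eta(v,w)$ up to normalisation. On $\sym{2}V^\top$ it is contraction with $\bar\eta^{-1}$, which satisfies $\bar\eta(\bar v,\bar w) = \eta(v,w)$ by \eqref{eq:euclidean-IP-on-V-transverse}. The required compatibility, namely that $\sym{2}\pi$ intertwines these two trace maps on $\sym{2}p^\circ$, then follows directly from \eqref{eq:euclidean-IP-on-V-transverse}. Consequently $\sym{2}\pi$ sends $\sym{2}_0p^\circ$ into $\sym{2}_0V^\top$, and surjectively so, since the trace on $\sym{2}V^\top$ is nonzero and the trace on $\sym{2}p^\circ$ is nonzero too (for example on $\be_1\odot\be_1$). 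The kernel of the restriction is $\ell_p\odot p^\circ \cap \sym{2}_0p^\circ = \ell_p\odot p^\circ$, because $\eta(p^\sharp,v)=0$ for $v\in p^\circ$, so every element of $\ell_p\odot p^\circ$ is already traceless. This gives the second isomorphism.

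The only delicate point is identifying the trace on $\sym{2}p^\circ$ with the restriction of the trace on $\sym{2}V$, since $\eta|_{p^\circ}$ is degenerate. Viewing the trace as the linear map $v\odot w\mapsto \eta(v,w)$ avoids any need to invert $\eta$ on $p^\circ$. As a cross-check, the traceless dimensions are $324 - 25 = 299 = \dim\sym{2}_0V^\top$.
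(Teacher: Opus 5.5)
Your proposal is correct and follows the paper's proof: the functorial surjection $\sym{2}\pi$, the dimension count $325-300=25=\dim \ell_p\odot p^\circ$ to identify the kernel, and then restriction to traceless tensors. You additionally make explicit the compatibility $\tr^\top\circ\sym{2}\pi=\tr|_{\sym{2}p^\circ}$, which comes from $\bar\eta(\bar v,\bar w)=\eta(v,w)$; the paper leaves this implicit, but it is what justifies the restriction step, and it even gives surjectivity onto $\sym{2}_0V^\top$ directly, since every preimage of a traceless element is traceless.
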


\begin{proof}
  This is very similar to the proof of
  Lemma~\ref{lem:iso-h2rel-skewsymmetric}.  Functorially, we have a
  surjection
  $\sym{2}\pi \colon \sym{2} p^\circ\to \sym{2} V^\top$ whose
  kernel has dimension
  $\dim \sym{2} p^\circ - \dim \sym{2} V^\top = 325 - 300 = 25$,
  which is precisely the dimension of
  $\ell_p \odot p^\circ \subset \ker \sym{2} \pi$.  This shows that
  $\ker \sym{2} \pi = \ell_p \odot p^\circ$ and proves the first
  isomorphism.  The second follow after restricting to traceless
  tensors and noticing that $\ell_p \odot p^\circ \subset \sym{2}_0
  p^\circ$ is already traceless.
\end{proof}

\begin{proposition}
  \label{prop:assoc-graded-symmetric-modules}
  There are additional $H$-module isomorphisms
  \begin{enumerate}
  \item $\ell_p \odot \ell_p \cong \ell_p$
  \item $(\ell_p \odot p^\circ)/(\ell_p \odot \ell_p) \cong V^\top$
  \item $(\ell_p \odot V + \sym{2}p^\circ)_0/\sym{2}_0 p^\circ \cong p$
  \item $(p^\circ \odot V)_0/(\ell_p \odot V + \sym{2}p^\circ)_0 \cong V^\top$
  \item $\sym{2}_0 V/(p^\circ \odot V)_0 \cong V/p^\circ$
  \end{enumerate}
\end{proposition}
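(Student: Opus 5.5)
Each of the five isomorphisms will come from an explicit $H$-equivariant linear map built from the contraction $\imath_p$, the exterior/symmetric product with $p^\sharp$, the canonical projections $p^\circ \to V^\top$ and $V \to V/p^\circ$, and the trace. In every case I then finish with a dimension count via the Rank Theorem, exactly as in Lemmas~\ref{lem:iso-h2rel-skewsymmetric} and~\ref{lem:iso-h2rel-symmetric-submodule}. The relevant dimensions are $\dim \sym{2}V = 351$, $\dim \sym{2}_0 V = 350$, $\dim \sym{2}p^\circ = 325$ and $\dim \sym{2}_0 p^\circ = 324$; the last holds because the restriction of $\eta$ to $p^\circ$ is degenerate but nonzero, so the trace is nonvanishing on $\sym{2}p^\circ$.

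Item (1) is immediate: the map $\lambda \mapsto \lambda \odot p^\sharp$ (that is, $\ell_p \to \ell_p\odot\ell_p$, $p^\sharp\mapsto p^\sharp\odot p^\sharp$ extended linearly — or more invariantly the squaring map, which becomes linear after fixing the generator $p^\sharp$ since $H$ fixes it) is equivariant and bijective, and both sides are trivial modules. For (2), the map $p^\circ \to \ell_p\odot p^\circ$, $v\mapsto p^\sharp\odot v$, is an equivariant surjection with kernel $0$, so it gives $\ell_p\odot p^\circ\cong p^\circ$; this identifies $\ell_p\odot\ell_p$ with $\ell_p$, and the quotient is $p^\circ/\ell_p = V^\top$.

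For (3), (4) and (5) I use the contraction $\imath_p\colon \sym{2}V\to V$ and its square $\imath_p^2\colon \sym{2}V\to\CC$.

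For (5), $(p^\circ\odot V)_0 = \ker \imath_p^2|_{\sym{2}_0V}$. Write $\imath_p^2 S = \langle p\otimes p, S\rangle$; it is surjective on traceless tensors, witnessed by $v\odot v$ minus its trace part for $v$ with $\langle p,v\rangle=1$. So the quotient is one-dimensional and trivial. Identifying it with $V/p^\circ$ (also trivial, via $v\mapsto\langle p,v\rangle$) is then automatic.

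For (4), compose $\imath_p$ with the projection $p^\circ\to V^\top$. On $(p^\circ\odot V)_0$ the contraction $\imath_p S$ lies in $p^\circ$, since $\imath_p^2S=0$. I then check that the kernel of the composite is $(\ell_p\odot V+\sym{2}p^\circ)_0$ and that the composite is surjective, for instance from $w\odot v$ with $w\in p^\circ$, which maps to $\bar w$ modulo $\ell_p$; the trace correction only adds multiples of $p^\sharp$ through $\imath_p\eta^{-1}$. The dimension check is $(\dim(p^\circ\odot V)_0=349)-(\dim(\ell_p\odot V+\sym{2}p^\circ)_0=325)=24$.

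For (3), $\imath_p$ maps $(\ell_p\odot V+\sym{2}p^\circ)_0$ into $\ell_p$ with kernel $\sym{2}_0p^\circ$. This map is nonzero, witnessed by $p^\sharp\odot v$ corrected by trace, whose contraction is $\tfrac12 p^\sharp$ plus terms in $\ell_p$. The quotient is therefore one-dimensional, $325-324=1$, and isomorphic to the trivial module $\ell_p$; this is what the statement's ``$\cong p$'' denotes.

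The main obstacle is the bookkeeping of the trace. Tracelessness constraints mix $\ell_p\odot V$ with $\sym{2}p^\circ$, because $\operatorname{tr}(p^\sharp\odot v)=\langle p,v\rangle$ while traces of elements of $\sym{2}p^\circ$ are arbitrary. So I must compute the dimensions of the intermediate traceless submodules carefully: $\dim(\ell_p\odot V+\sym{2}p^\circ)=326$ gives $325$ after imposing tracelessness, and $\dim p^\circ\odot V=350$ gives $349$. I must also verify that the kernels of the contraction maps are exactly the claimed submodules, not merely contain them. Equivariance is free, since every map is built from $H$-invariant data ($p$, $p^\sharp$, $\eta$).
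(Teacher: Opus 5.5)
Your proposal is correct and follows essentially the same route as the paper: equivariant maps built from contraction with $p$ (and its square), composed with the projections $p^\circ\to V^\top$ and $V\to V/p^\circ$, whose kernels are identified and then intersected with the traceless tensors. The differences are cosmetic. You factor (2) through $\ell_p\odot p^\circ\cong p^\circ$. You prove surjectivity in (4) with an explicit traceless witness rather than the paper's ``no codimension-one $\so(V^\top)$-submodule'' argument. You pin down kernels by dimension counts; in (3) this count is also what guarantees that your trace-corrected witness has nonzero contraction, since ``$\tfrac12 p^\sharp$ plus terms in $\ell_p$'' could a priori cancel.
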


\begin{proof}
  \begin{enumerate}
  \item Since $p$ is invariant under $H$ by definition, the
    linear map sending $p^\sharp \odot p^\sharp \mapsto p^\sharp$ is
    an $H$-equivariant isomorphism $\ell_p \odot \ell_p \cong \ell_p$.
  \item This is induced by the $H$-equivariant projection
    $\pi \colon p^\circ \to V^\top$.  We define a $H$-equivariant
    projection $\ell_p \odot p^\circ \to V^\top$ which sends $p^\sharp \odot
    v \mapsto \pi(v)$, for $v \in p^\circ$.  This has kernel
    precisely $\ell_p \odot p$.
  \item Contraction with $p$ gives a $H$-equivariant surjection
    $\ell_p \odot V + \sym{2}p^\circ \to p$ whose kernel is precisely
    $\ell_p \odot p^\circ + \sym{2}p^\circ = \sym{2}p^\circ$.  Restricting to
    the traceless tensors, we still have a surjection $(\ell_p \odot V +
    \sym{2}p^\circ)_0 \to \ell_p$ whose kernel is now the traceless
    $\sym{2}_0 p^\circ$, hence the isomorphism.
  \item Contraction with $p$ gives a $H$-equivariant linear map
    $p^\circ \odot V \to p^\circ$ which we can compose with the
    $H$-equivariant projection $p^\circ \to V^\top$ to
    arrive at a $H$-equivariant surjection $(p^\circ \odot V)
    \to V^\top$.  Restricting to the traceless $(p^\circ \odot
    V)_0$ still surjects onto $V^\top$, since there is no
    codimension-one $\so(V^\top)$-submodule of $V^\top$.  The
    kernel of this map is precisely $(\ell_p \odot V + \sym{2} p^\circ)_0$,
    hence the isomorphism follows.
  \item Contraction with $p$ gives a $H$-equivariant linear map
    $\sym{2}V \to V$ which we can compose with the
    $H$-equivariant projection $V \to V/p^\circ$ to  arrive at
    a $H$-equivariant surjection $\sym{2}V \to V/p^\circ$,
    whose kernel is precisely $p^\circ \odot V$.  Restricting to
    traceless $\sym{2}_0V$, the map $\sym{2}_0 V \to V/p^\circ$ is
    still surjective and the kernel is now the traceless $(p^\circ
    \odot V)_0$, hence the isomorphism.
  \end{enumerate}
\end{proof}

In summary, the associated graded module
\begin{equation}
  (\ell_p \odot p) \oplus \dfrac{\ell_p\odot p^\circ}{\ell_p\odot \ell_p} \oplus
  \dfrac{\sym{2}_0 p^\circ}{\ell_p \odot p^\circ} \oplus \dfrac{(\ell_p \odot V
    +  \sym{2} p^\circ)_0}{\sym{2}_0 p^\circ} \oplus \dfrac{(p^\circ
    \odot V)_0}{(\ell_p \odot V +  \sym{2} p^\circ)_0} \oplus
  \dfrac{\sym{2}_0V}{(p^\circ \odot V)_0}
\end{equation}
of the filtered module~\eqref{eq:VsymVtraceless-as-H-mod} is
isomorphic to
\begin{equation}
  \sym{2}_0V^\top \oplus 2 V^\top \oplus 2\ell_p \oplus  V/p^\circ,
\end{equation}
where of course $V/p^\circ \cong \RR p \cong \ell_p$ as both are
trivial modules.

Finally, we record in Table~\ref{tab:dimensions} the dimensions of
some of the $H$-modules discussed in this section.

\begin{table}[h!]
  \centering
  \begin{tabular}{>{$}c<{$}|r}
    \multicolumn{1}{l|}{Module} & \multicolumn{1}{l}{$\dim$}\\\toprule
    \ell_p & 1\\
    V/p^\circ & 1\\
    V^\top & 24\\
    \ell_p \wedge p^\circ & 24\\
    p^\circ & 25\\
    \ell_p \wedge V & 25\\
    \ell_p \odot p^\circ & 25\\
    \left(\ell_p \odot V\right)_0 & 25\\\bottomrule
  \end{tabular}\qquad\qquad
  \begin{tabular}{>{$}c<{$}|r}
    \multicolumn{1}{l|}{Module} & \multicolumn{1}{l}{$\dim$}\\\toprule
    V/\ell_p & 25\\
    V & 26\\
    \ell_p \odot V & 26\\
    \ext{2} V^\top & 276\\
    \sym{2}_0V^\top & 299\\
    \sym{2}V^\top & 300\\
    \ext{2} p^\circ & 300\\
    \sym{2}_0 p^\circ & 324\\\bottomrule
  \end{tabular}\qquad\qquad
  \begin{tabular}{>{$}c<{$}|r}
    \multicolumn{1}{l|}{Module} & \multicolumn{1}{l}{$\dim$}\\\toprule
    \sym{2} p^\circ & 325\\
    \ext{2}V & 325\\
    \left(\ell_p \odot V + \sym{2}p^\circ\right)_0 & 325\\
    \ell_p \odot V + \sym{2}p^\circ & 326\\
    \left(p^\circ \odot V\right)_0 & 349\\
    p^\circ \odot V & 350\\
    \sym{2}_0 V & 350\\
    \sym{2}V & 351\\\bottomrule
  \end{tabular}
  \caption{Dimensions of some $H$-modules}
  \label{tab:dimensions}
\end{table}

All of the results in
Sections~\ref{sec:some-submodules-v},~\ref{sec:submodules-skew-2-v} and
\ref{sec:submodules-sym-2-v} survive upon complexifying $V$.  Indeed,
replacing $\RR$ with $\CC$, $(\ell_p, p^\circ,V,V^\top)$ with their
complexifications $(L_p, p^\perp, \VV,\VV^\top)$ and re-interpreting
real dimensions as complex dimensions, we obtain complex versions of
these results.  We will not write them again and simply assume tacitly
any reference to any result in
Sections~\ref{sec:some-submodules-v},~\ref{sec:submodules-skew-2-v}
and \ref{sec:submodules-sym-2-v} applies equally well to the complex
version of the result.

\section{The relative cohomology}
\label{sec:relative-cohomology}

We now start by calculating the relative BRST cohomology. Our approach
is very pedestrian. We determine the kernel of the differential at
every ghost number.  We start by describing the relative complex.

\subsection{The relative subcomplex}
\label{sec:relative-subcomplex}

Recall that the relative subcomplex is given by $\Crel^\bullet(p) :=
\sC^\bullet(p) \cap \ker b_0$. Thus, we can construct an explicit,
finite-dimensional basis of $\Crel^n(p)$ for each ghost number $n \in
\{0,1,\dots, 5\}$ as monomials of the modes $c_1, c_{n \leq -1}, C_{n
  \leq 1}, b_{n \leq -2}, B_{n \leq -2}, x^\mu_{n \leq -1}, \pi^\mu_{n
  \leq -1}$ acting on $\ket{p} = \lim_{z\to 0} W_p(z) \ket{0}$ with
ghost number $n$ and conformal weight zero. For instance, a basis for
$\Crel^1(p)$ is given by
\begin{equation} \label{eq:basis for C1rel in modes}
  C_0\ket{p},\ c_1 C_1 b_{-2}\ket{p},\ c_1 C_1 B_{-2}\ket{p},\ c_1 x^\mu_{-1}\ket{p} ,\ C_1  x^\mu_{-1}\ket{p},\ c_1 \pi^\mu_{-1}\ket{p}, C_1\pi^\mu_{-1}\ket{p},
\end{equation}
so a general element of $\Crel^1(p)$ is a linear combination of these.
Hence, we may deduce the action of $d$ on $\Crel^\bullet(p)$ by
calculating the action of $d$ on each basis element of
$\Crel^\bullet(p)$. This is more easily done using the language of
fields and operator product expansions instead of modes as written
above. The state-field correspondence sets up the following dictionary
(for more details presented in terms of notation compatible with this
paper, we refer the reader to \cite[Appendix
A.1]{Figueroa-OFarrill:2024wgs}):
\begin{equation}
  \label{eq:state-field-corr}
  \begin{aligned}
    c &\leftrightarrow c_1\\
    \d c &\leftrightarrow c_0\\
    \d^2 c &\leftrightarrow 2 c_{-1}\\
    \d^3 c &\leftrightarrow 6 c_{-2}\\
    b &\leftrightarrow b_{-2}
  \end{aligned}
  \qquad\qquad
  \begin{aligned}
    C &\leftrightarrow C_1\\
    \d C &\leftrightarrow C_0\\
    \d^2 C &\leftrightarrow 2 C_{-1}\\
    \d^3 C &\leftrightarrow 6 C_{-2}\\
    B &\leftrightarrow B_{-2}
  \end{aligned}
  \qquad\qquad
  \begin{aligned}
    \d X^\mu &\leftrightarrow x^\mu_{-1}\\
    \d^2 X^\mu &\leftrightarrow 2 x^\mu_{-2}\\
    \Pi^\mu &\leftrightarrow \pi^\mu_{-1}\\
    \d \Pi^\mu &\leftrightarrow \pi^\mu_{-2}.\\
  \end{aligned}
\end{equation}
In this language, the basis \eqref{eq:basis for C1rel in modes} for $\Crel^1(p)$ reads
\begin{equation}
    \d C W_p,\ cCb W_p,\ cCB W_p,\ c\d X^\mu W_p,\ C\d X^\mu W_p,\ c\d \Pi^\mu W_p,\ C\d\Pi^\mu W_p.
\end{equation}
Likewise, every basis state of $\Crel^\bullet(p)$ can be written as a
basis field instead, and the action of the BRST differential $d$ on
any basis state $\Psi$ gives a state $d \Psi$ that corresponds to the field appearing in the first-order pole of the OPE $j_\mathrm{BRST}(z)
\Psi(w)$. We compute this first-order pole for all the basis states of
$\Crel^\bullet(p)$ using the Mathematica package \texttt{OPEdefs}
written by Kris Thielemans
\cite{Thielemans:1991uw,Thielemans:1992mu,Thielemans:1994er} and
record our calculations in Appendix \ref{app:calculation of d on Crel}.

The action of the Lorentz algebra $\so(V)$ on the BRST complex is such
that
\begin{equation}
  [\Lambda(\omega), W_p]_1 = i \omega_{\mu\nu} p^\nu (X^\mu W_p),
\end{equation}
so that it annihilates $W_p$ precisely when $\omega_{\mu\nu} p^\nu =
0$, which says that $\omega \in \ext{2}p^\perp$ is in the image of
$\stab(p)$ under the isomorphism $\so(V) \xrightarrow{\Phi} \ext{2} V$
introduced in Section~\ref{sec:stabp-reps}.

Furthermore, since $[\Lambda(\omega),-]_1$ is a derivation over all
the $[-,-]_\ell$ brackets, we have that for all fields $A$,
\begin{equation}
  [\Lambda(\omega), [T,A]_2]_1 = [[\Lambda(\omega), T]_1, A]_2 + [T, [\Lambda(\omega),A]_1]_2,
\end{equation}
but since $\Lambda(\omega)$ is a conformal primary of weight $1$,
$[\Lambda(\omega),T]_1 = 0$ and hence the zero mode of
$\Lambda(\omega)$ commutes with $L_0 = [T,-]_2$.  It clearly also
commutes with $b_0 = [b,-]_2$ and hence (for $\omega \in
\ext{2}p^\perp$) it preserves the relative subcomplex and therefore it
defines  an action of $\stab(p)$ on the relative BRST cohomology.

\subsection{Numerical results}
\label{sec:numerical-results}

One can put the complex on Mathematica and solve the linear equations
to determine the dimension of the space of cocycles and, by using the
Rank Theorem, that of the coboundaries and hence of the cohomology.
Table~\ref{tab:numerical} is a summary of the results of these
computations.

\begin{table}[h!]
\centering
\caption{Summary of the calculation of $\dim\Hrel^\bullet(p)$}
\label{tab:numerical}
\begin{tabular}{>{$}c<{$}|*{4}{>{$}l<{$}}}
    \toprule\\
    n & \multicolumn{1}{c}{$\dim\Crel^n(p)$} & \multicolumn{1}{c}{$\dim\Zrel^n(p)$} & \multicolumn{1}{c}{$\dim\Brel^n(p)$} & \multicolumn{1}{c}{$\dim\Hrel^n(p)$}\\
    \midrule
    0 & 1 & \makecell[l]{\begin{cases} 1, & p=0\\ 0, & p \neq 0 \end{cases}} & 0 & \makecell[l]{\begin{cases}1,&p=0\\0,&p \neq 0\end{cases}}\\
    1 & 107 & \makecell[l]{\begin{cases}52,&p=0\\ 1,&p \neq 0\end{cases}} & \makecell[l]{\begin{cases}0,&p =0 \\ 1,&p\neq 0\end{cases}}& \makecell[l]{\begin{cases}52,&p=0\\0,&p \neq 0\end{cases}}\\
    2 & 1540 & \makecell[l]{\begin{cases}742,&p=0\\ 706,&p^2 =0\\ 106,&p^2 \neq 0\end{cases}} & \makecell[l]{\begin{cases}55,&p =0 \\ 106,&p\neq 0\end{cases}} & \makecell[l]{\begin{cases}677,&p=0\\ 600,&p^2 =0\\ 0,&p^2 \neq 0\end{cases}}\\
    3 & 1540  & \makecell[l]{\begin{cases}1485,&p=0\\ 1434,&p \neq 0\end{cases}} & \makecell[l]{\begin{cases}808,&p=0\\ 834,&p^2 =0 \\ 1434,&p^2 \neq 0\end{cases}} &  \makecell[l]{\begin{cases}677,&p=0\\ 600,&p^2 =0 \\ 0,&p^2 \neq 0\end{cases}}\\
    4 & 107 & \makecell[l]{\begin{cases}107,&p=0\\ 106,&p \neq 0\end{cases}} & \makecell[l]{\begin{cases}55,&p=0\\ 106,&p \neq 0\end{cases}} & \makecell[l]{\begin{cases}52,&p=0\\0,&p \neq 0\end{cases}}\\
    5 & 1 & 1 & \makecell[l]{\begin{cases}0,&p=0\\ 1,&p \neq 0\end{cases}} & \makecell[l]{\begin{cases}1,&p=0\\0,&p \neq 0\end{cases}}\\
    \bottomrule
  \end{tabular}
  \caption*{The condition $p^2 = 0$ tacitly assumes that $p\neq 0$.}
\end{table}

These calculations again show that Proposition~\ref{prop:on-shell}
holds and, as a further corollary of these calculations, we arrive at
the following result:

\begin{proposition}\label{prop:vector-space-duality-Hrel}
  $\Hrel^n(p) \cong \Hrel^{5-n}(p)$, where the isomorphism is one of complex vector spaces.
\end{proposition}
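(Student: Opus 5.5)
The statement is announced as a corollary of the numerical calculations, so the plan is to read it off Table~\ref{tab:numerical}. Complex vector spaces are classified up to isomorphism by their dimension. It therefore suffices to check that $\dim\Hrel^n(p) = \dim\Hrel^{5-n}(p)$ for every $n\in\{0,\dots,5\}$ and every momentum $p$.

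I will split into the three cases distinguished in the table: $p=0$, $p\neq0$ with $p^2=0$, and $p^2\neq 0$.
\begin{itemize}
\item For $p^2\neq0$, every $\Hrel^n(p)$ vanishes by Proposition~\ref{prop:on-shell}, so the claim is trivial.
\item For $p\neq 0$ massless, the table gives the dimensions $(0,0,600,600,0,0)$ for $n=0,\dots,5$, which is palindromic.
\item For $p=0$, the table gives $(1,52,677,677,52,1)$, again palindromic.
\end{itemize}
In each case one should also confirm the consistency of the entries via the Rank Theorem. Concretely, check that $\dim\Brel^{n+1}(p) = \dim\Crel^n(p)-\dim\Zrel^n(p)$ and that $\dim\Hrel^n=\dim\Zrel^n-\dim\Brel^n$. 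This ensures the cohomology dimensions are correctly derived from the computed kernels.

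As a structural sanity check, it helps to observe where the symmetry comes from. The table shows $\dim\Crel^n(p)=\dim\Crel^{5-n}(p)$, namely $1,107,1540,1540,107,1$. This matches a pairing between states of ghost number $n$ and $5-n$. The pairing is induced by the nondegenerate BPZ-type inner product with ghost saturation on $C_0c_1C_1c_{-1}C_{-1}$, under which $d$ is (anti-)self-adjoint up to sending $p\mapsto -p$. That pairing would give a canonical duality $\Hrel^n(p)\cong\Hrel^{5-n}(-p)^*$. Since dimensions are invariant under $p\mapsto -p$ (complex conjugation), this recovers the statement.

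However, only the vector-space isomorphism is claimed here, so I would not build the pairing now. I would leave the equivariant version to the later Poincaré-duality result.

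The only potential obstacle is the reliability of the numerical input. The statement rests on the Mathematica computation of the cocycle dimensions, which I take as given, as the paper does.
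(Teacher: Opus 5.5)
Your proposal is correct and is the paper's own argument: the proposition is read off from the palindromic dimensions in Table~\ref{tab:numerical}, using that complex vector spaces are classified by their dimension. One caveat about the Rank-Theorem consistency check you suggest: as printed, the entry $\dim\Zrel^2(0)=742$ fails it, since $742-55\neq 677$ and $1540-742\neq 808$. The entry should be $732$, as the $\SO(25,1)$-decomposition $5\CC\oplus 2\VV\oplus\sym{2}_0\VV\oplus\ext{2}\VV$ given in the text confirms, and with that correction every entry of the table is consistent.
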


These numerical results determine the relative cohomology as a vector
space.  However, the vector space description of cohomology is
evidently unsatisfactory since we are left with too little information
to deduce any meaningful physics from it. To infer physics, recall
from Section \ref{sec:what-is-the-cohomology} that we must understand
the (relative) BRST cohomology as modules over $H := \Stab(p)$.

We remark that Table \ref{tab:numerical} already shows that the spectrum
must be larger than the space spanned by the graviton, dilaton and
Kalb--Ramond field.  As explained in
Section~\ref{sec:what-is-the-cohomology}, we view the cohomology as
inducing representations for (massless, due to
Proposition~\ref{prop:on-shell}) Poincaré representations.  The
Kalb--Ramond field is induced from $\ext{2}\VV^\top$ (which has
dimension $276$), whereas the graviton is induced by
$\sym{2}_0\VV^\top$ (which has dimension $299$) and the dilaton is
induced by the trace part $\CC \eta \subset \sym{2}\VV^\top$, which
is one-dimensional.  Adding the dimensions of these inducing
representations gives $576$, which falls short of the numerical result
of $600$ in Table~\ref{tab:numerical}.  The deficit is precisely a
transverse vector $\VV^\top$ and it would be tempting to think that
this is due to a massless vector, but as we will see in
Section~\ref{sec:phys-interpr}, this does not induce a unitary
representation at ghost number $2$.

Recall from Section \ref{sec:stabp-reps} that we must distinguish
between the two cases $p = 0$ and $p \neq 0$ when $p^2 = 0$ (as
dictated by Proposition \ref{prop:on-shell}). In the former case, we
have full Lorentz covariance (in particular, $\Stab(0)$ is the Lorentz
group) and the cohomology can be described as modules over the Lorentz
group $\SO(25,1)$.  However, the latter case, which is also the physically
interesting one, is more challenging due to the non-semisimplicity of
$H$, which is now isomorphic to $\ISO(24)$. To obtain some intuition,
we may first understand relative cohomology as a module over a maximal
compact subgroup $K \subset H$, where $K \cong \SO(V^\top) =
\SO(24)$. Thus, in this section, we go ghost number by ghost number
through the calculation of the $K$-module structure of relative
cohomology. This will also lay the groundwork for the more difficult
analysis of the relative cohomology at ghost numbers 2 and 3 as
$H$-modules carried out in Section~\ref{sec:phys-interpr}.

\subsection{Calculating $\Hrel^0(p)$}
\label{sec:calculating-hrel0p}

The space $\Crel^0(p)$ of relative $0$-cochains is one-dimensional and
spanned by $\vac$.  Using equation~\eqref{eq:dvac}, we see that the
space $\Zrel^0(p)$ of relative $0$-cocycles is given by
\begin{equation}
  \Zrel^0(p) \cong
  \begin{cases}
    \CC & p=0\\
    0 & p \neq 0,
  \end{cases}
\end{equation}
where $\CC$ stands for the one-dimensional complex scalar (i.e.,
trivial) representation of $\SO(25,1)$.  Thus we conclude that
\begin{equation}
  \Hrel^0(p) \cong
  \begin{cases}
    \CC & p=0\\
    0 & p \neq 0.
  \end{cases}
\end{equation}
From the Rank Theorem we also see that the space $\Brel^1(p)$ of
coboundaries at ghost number $1$ is given by
\begin{equation}
  \label{eq:B1rel-as-reps}
  \Brel^1(p) \cong
  \begin{cases}
    0 & p = 0 \\
    \CC & p \neq 0.
  \end{cases}
\end{equation}

\subsection{Calculating $\Hrel^1(p)$}
\label{sec:calculating-hrel1p}

We may write the general relative $1$-cochain $\Psi \in \Crel^1(p)$ as follows:
\begin{equation}
  \Psi = \phi^{(1)} \d C + \phi^{(2)} c C b + \phi^{(3)} c C B + A^{(4)}_\mu c \d X^\mu + A^{(5)}_\mu C \d X^\mu + A^{(6)}_\mu c \Pi^\mu + A^{(7)}_\mu C \Pi^\mu
\end{equation}
and the cocycle condition $d\Psi = 0$ reduces, after some simplification, to the following:
\begin{equation}
  \begin{aligned}
    \phi^{(2)} = \phi^{(3)} &=0\\
    A^{(5)}_\mu &= 0\\
    A^{(7)}_\mu &= - A^{(4)}_\mu
  \end{aligned}
  \qquad\qquad
  \begin{aligned}
    \phi^{(1)} &= - \tfrac{i}2 p^\mu A^{(4)}_\mu\\
    p_{(\mu} A^{(6)}_{\nu)} &= 0\\
    p_{[\mu} A^{(4)}_{\nu]} &= 0.
  \end{aligned}
\end{equation}

\subsubsection{Case $p=0$}
\label{sec:h1-p=0}

Here $\phi^{(1)} =0$ and $A_\mu^{(6)}$ and $A_\mu^{(7)} = -
A_\mu^{(4)}$ remain otherwise unconstrained, so that $\Zrel^1(0)$ is
spanned by
\begin{equation}\label{eq:Zrel10}
  c \Pi^\mu \qquad\text{and}\qquad c \d X^\mu - C \Pi^\mu.
\end{equation}
In other words, we have that
\begin{equation}
  \Crel^1(0) \cong 3 \CC \oplus 4 \VV  \qquad\text{and}\qquad \Zrel^1(0)
  \cong 2 \VV.
\end{equation}
Since from equation \eqref{eq:B1rel-as-reps}, $\Brel^1(0)=0$, we have
that $\Hrel^1(0) \cong 2 \VV$ and from Rank Theorem, $\Brel^2(0)\cong 3
\CC \oplus 2 \VV$.

\subsubsection{Case $p^2 = 0$ but $p\neq 0$}
\label{sec:h1-p2=0}

Here $A^{(4)}_\mu \propto p_\mu$ and hence $\phi^{(1)} =0$.  Similarly
$A^{(6)}_\mu = 0$ and $\Zrel^1(p)$ is spanned by
\begin{equation}
  p_\mu c \d X^\mu - p^\mu C \Pi_\mu,
\end{equation}
which agrees with the $1$-coboundaries $\Brel^1(p)$, as seen from the
expression for $dW_p$ in \eqref{eq:dvac} for $p^2 = 0$.  As
$K$-modules, we have that $\VV \cong 2\CC \oplus \VV^\top$.  Therefore
we have that as $K$-modules, for such $p$,
\begin{equation}
  \Crel^1(p) \cong 11 \CC \oplus 4 \VV^\top \qquad\text{and}\qquad
  \Zrel^1(p) \cong \CC.
\end{equation}
This allows us to conclude, from equation~\eqref{eq:B1rel-as-reps},
that
\begin{equation}
  \Hrel^1(p) \cong 0
\end{equation}
and, from the Rank Theorem, that
\begin{equation}
  \Brel^2(p) \cong 10 \CC \oplus 4 \VV^\top.
\end{equation}

\subsubsection{Summary}
\label{sec:hrel1-summary}

In summary, we see that
\begin{equation}
  \Hrel^1(p) \cong
  \begin{cases}
    2 \VV & p = 0\\
    0 & p \neq 0.
  \end{cases}
\end{equation}

\subsection{Calculating $\Hrel^2(p)$}
\label{sec:calculating-hrel2p}

The general relative $2$-cochain $\Psi \in \Crel^2(p)$ can be written as
\begin{multline}
  \Psi = \phi^{(1)} c C \d C  b + \phi^{(2)} c C \d C B + A^{(3)}_\mu c \d C \d X^\mu + A^{(4)}_\mu c \d C \Pi^\mu + A^{(5)}_\mu C \d C \d X^\mu + A^{(6)}_\mu C \d C \Pi^\mu\\
  + \phi^{(7)} c \d^2 c + \phi^{(8)} c \d^2 C + \phi^{(9)} C \d^2  c + \phi^{(10)} C \d^2 C + A^{(11)}_\mu c C \d^2 X^\mu + A^{(12)}_\mu c C \d \Pi^\mu\\
  + S^{(13)}_{\mu\nu} c C \d X^\mu \d X^\nu + T^{(14)}_{\mu\nu} c C \d X^\mu \Pi^\nu + S^{(15)}_{\mu\nu} c C \Pi^\mu \Pi^\nu,
\end{multline}
where $S^{(13)}_{\mu\nu}$ and $S^{(15)}_{\mu\nu}$ are symmetric, but $T^{(14)}_{\mu\nu}$ is a general $2$-tensor.

\subsubsection{Case $p=0$}
\label{sec:h2-p=0}

The cocycle conditions $d\Psi = 0$ allow us to solve for
\begin{equation}\label{eq:Zrel20}
  \begin{aligned}
    \phi^{(10)} &= -\tfrac32 \phi^{(1)}\\
    \phi^{(9)} &= \phi^{(8)} - \tfrac32 \phi^{(2)}\\
  \end{aligned}
  \qquad\qquad
  \begin{aligned}
    A^{(5)}_\mu &= 0\\
    A^{(6)}_\mu &= - A^{(3)}_\mu\\
    A^{(11)}_\mu &= A^{(3)}_\mu\\
    A^{(12)}_\mu &= A^{(4)}_\mu
  \end{aligned}
  \qquad\qquad
  \begin{aligned}
    S^{(13)}_{\mu\nu} &= \tfrac12 \eta_{\mu\nu} \phi^{(1)}\\
    T^{(14)}_{(\mu\nu)} &= \tfrac12 \eta_{\mu\nu} \phi^{(2)},\\
  \end{aligned}
\end{equation}
leaving free $\phi^{(1)}$, $\phi^{(2)}$, $A^{(3)}_\mu$, $A^{(4)}_\mu$,
$\phi^{(7)}$, $\phi^{(8)}$, $T^{(14)}_{[\mu\nu]}$ and $S^{(15)}_{\mu\nu}$.
Hence as representations of $\SO(25,1)$,
\begin{equation}
  \Zrel^2(0) \cong 5 \CC \oplus 2 \VV \oplus \sym{2}_0 \VV \oplus \ext{2} \VV.
\end{equation}
Since $\Brel^2(0) \cong 3\CC \oplus 2 \VV$, we see that
\begin{equation}
  \Hrel^2(0) \cong 2 \CC \oplus \sym{2}_0 \VV \oplus \ext{2} \VV.
\end{equation}
Since $\Crel^2(0) \cong 9 \CC \oplus 6 \VV \oplus 3 \sym{2}_0 \VV \oplus
\ext{2} \VV$, it follows from the Rank Theorem that
\begin{equation}
  \label{eq:Brel3-0-as-reps}
  \Brel^3(0) \cong 4 \CC \oplus 4 \VV \oplus 2 \sym{2}_0 \VV.
\end{equation}

\subsubsection{Case $p^2=0$ but $p\neq 0$}
\label{sec:h2-p2=0}

Since this is the most physically significant case, detailed
calculations are presented in Appendix \ref{app:H2rel-detailed-calc}.
The cocycle condition $d\Psi = 0$ translates into the following
equations (see Appendix \ref{app:H2rel-detailed-calc} for a detailed
derivation):
\begin{align} \label{eq:H2rel cocycle conditions}
  \begin{aligned}
    \phi^{(1)} &= 0\\
    A_\mu^{(5)} &= 0\\
    A_\mu^{(6)} &= A_\mu^{(3)} + i p_\mu \varphi\qquad (\exists \varphi)\\
    \phi^{(9)} &= \phi^{(8)} - \tfrac32 \phi^{(2)} - \tfrac{i}2 p \cdot A^{(4)}\\
    \phi^{(10)} &= \tfrac{i}2 p \cdot A^{(3)}\\
    A_\mu^{(11)} &= A_\mu^{(3)} + \tfrac{i}2 p_\mu \left(  \phi^{(2)} -\varphi \right) \\
    A_\mu^{(12)} &= A_\mu^{(4)} - i p_\mu \phi^{(7)} - i p^\nu S_{\mu\nu}^{(15)}\\
    S_{\mu\nu}^{(13)} &= i p_{(\mu} A_{\nu)}^{(3)} - \tfrac12 p_\mu p_\nu \varphi\\
    T_{(\mu\nu)}^{(14)} &= i p_{(\mu} A_{\nu)}^{(4)} + \tfrac12 \eta_{\mu\nu} \phi^{(2)}\\
    p^\nu T_{[\mu\nu]}^{(14)} &= p_\mu \left(\tfrac32 \phi^{(2)} + \tfrac{i}2 p \cdot A^{(4)} - 2 \phi^{(8)} - \varphi \right)\\
    p^\mu p^\nu S_{\mu\nu}^{(15)} &= 0.
  \end{aligned}
\end{align}
This results in free variables $\varphi$, $\phi^{(2)}$, $A_\mu^{(3)}$,
$A_\mu^{(4)}$, $\phi^{(7)}$, $\phi^{(8)}$, $T_{[\mu\nu]}^{(14)}$,
$S_{\mu\nu}^{(15)}$ subject to the last two equations above.

As mentioned at the beginning of this section, we view the space spanned by these free variables as a $K$-module. This makes it easier to determine and write down the space of cocycles by allowing us to choose a convenient
nonzero null momentum, but note that it breaks Lorentz covariance. We pick a Witt frame $(\be_+,\be_-,\be_i)$ with
$i=1,\dots,24$, and take the components of the momentum to satisfy
$p^+ = 1$ and $p^- = p^i = 0$ or, equivalently, $p_- = 1$, $p_+ = p_i
= 0$.  The two remaining cocycle conditions say that $S_{++}^{(15)}=0$,
$T_{[i+]}^{(14)} = 0$ and 
\begin{equation}
  T_{[-+]}^{(14)} = \tfrac32 \phi^{(2)} + \tfrac{i}2 p \cdot A^{(4)} - 2 \phi^{(8)} - \varphi,
\end{equation}
leaving the following free components of $S^{(15)}_{\mu\nu}$ and
$T^{(14)}_{[\mu\nu]}$:
\begin{equation}
  S^{(15)}_{+-},   S^{(15)}_{--},   S^{(15)}_{+i},   S^{(15)}_{-i},
  S^{(15)}_{ij}, T_{[-i]}^{(14)}, T^{(14)}_{[ij]}.
\end{equation}
The component $S^{(15)}_{ij}$ further decomposes into a trace part
(the transverse trace $\tr^\top S^{(15)}$) and a traceless part
denoted $S^{(15)}_{\left<ij\right>}$.  In summary, for such $p$,
\begin{equation}
  \Zrel^2(p) \cong 11 \CC \oplus 5 \VV^\top \oplus \sym{2}_0 \VV^\top \oplus \ext{2} \VV^\top
\end{equation}
as $K$-modules.  Since
\begin{equation}
  \Crel^2(p) \cong 31 \CC \oplus 14 \VV^\top \oplus 3 \sym{2}_0 \VV^\top \oplus \ext{2} \VV^\top,
\end{equation}
we see that
\begin{equation}
  \label{eq:B3rel-p-as-reps}
  \Brel^3(p) \cong 20 \CC \oplus 9 \VV^\top \oplus 2 \sym{2}_0 \VV^\top.
\end{equation}
From Section \ref{sec:h1-p2=0}, we know that $\Brel^2(p) \cong 10 \CC \oplus 4 \VV^\top$, which lets us deduce the
following.

\begin{proposition}\label{prop:Hrel2}
  For $p^2=0$, but $p\neq 0$,
  \begin{equation}
    \Hrel^2(p) \cong \CC \oplus \VV^\top \oplus \sym{2}_0 \VV^\top \oplus \ext{2}  \VV^\top.
  \end{equation}
\end{proposition}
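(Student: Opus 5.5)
The plan is to compute $\Hrel^2(p)$ as a $K$-module, where $K \cong \SO(V^\top)$ is a maximal compact subgroup of $H$, by using complete reducibility of $K$. Since $K$ is compact, every finite-dimensional $K$-module is completely reducible. So the short exact sequence of $K$-modules
\begin{equation*}
  0 \to \Brel^2(p) \to \Zrel^2(p) \to \Hrel^2(p) \to 0
\end{equation*}
splits, and $\Hrel^2(p)$ is determined by cancelling isotypic components of $\Brel^2(p)$ against those of $\Zrel^2(p)$. Two decompositions are needed: $\Zrel^2(p) \cong 11\CC \oplus 5\VV^\top \oplus \sym{2}_0\VV^\top \oplus \ext{2}\VV^\top$, obtained above, and $\Brel^2(p) \cong 10\CC \oplus 4\VV^\top$ from Section~\ref{sec:h1-p2=0}. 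Subtracting gives $\CC \oplus \VV^\top \oplus \sym{2}_0\VV^\top \oplus \ext{2}\VV^\top$. The irreducibles $\CC$, $\VV^\top$, $\sym{2}_0\VV^\top$ and $\ext{2}\VV^\top$ of $\so(24)$ are pairwise non-isomorphic, so multiplicities can be compared summand by summand.

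The substantive step is verifying the $K$-decomposition of $\Zrel^2(p)$ from the cocycle conditions \eqref{eq:H2rel cocycle conditions}. Cocycles are parametrised by the free data $\varphi$, $\phi^{(2)}$, $\phi^{(7)}$, $\phi^{(8)}$, $A^{(3)}_\mu$, $A^{(4)}_\mu$, $T^{(14)}_{[\mu\nu]}$ and $S^{(15)}_{\mu\nu}$, subject to the last two equations. The parametrisation is $K$-equivariant, because $K$ fixes $p$ and the dependent components are $p$-covariant expressions in the free ones. In the Witt frame with $p_-=1$, each free vector contributes $2\CC \oplus \VV^\top$, and the constraints remove $S^{(15)}_{++}$ and $T^{(14)}_{[i+]}$ and fix $T^{(14)}_{[-+]}$. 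The surviving pieces are as follows:
\begin{itemize}
\item four scalars $\varphi,\phi^{(2)},\phi^{(7)},\phi^{(8)}$;
\item $2(2\CC\oplus\VV^\top)$ from $A^{(3)}$ and $A^{(4)}$;
\item $T_{[-i]}\in\VV^\top$ and $T_{[ij]}\in\ext{2}\VV^\top$;
\item $S_{+-},S_{--},\tr^\top S\in 3\CC$, $S_{\pm i}\in 2\VV^\top$, and $S_{\langle ij\rangle}\in\sym{2}_0\VV^\top$.
\end{itemize}
This totals $11\CC\oplus5\VV^\top\oplus\sym{2}_0\VV^\top\oplus\ext{2}\VV^\top$, whose dimension $11+120+299+276=706$ matches Table~\ref{tab:numerical}.

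As a check, the result has dimension $1+24+299+276=600$, in agreement with Table~\ref{tab:numerical}. The same rank-theorem bookkeeping against $\Crel^2(p)\cong 31\CC\oplus14\VV^\top\oplus3\sym{2}_0\VV^\top\oplus\ext{2}\VV^\top$ gives \eqref{eq:B3rel-p-as-reps}.

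The main obstacle is the derivation of the cocycle conditions \eqref{eq:H2rel cocycle conditions} themselves, namely the action of $d$ on $\Crel^2(p)$ from the OPE with $j_{\text{BRST}}$ (Appendix~\ref{app:calculation of d on Crel}). In particular, one has to show that the only freedom left in $A^{(6)}-A^{(3)}$ is a single scalar $\varphi$ along $p$; this is what makes the scalar count $11$ rather than $10$ or $12$. I would take those conditions as established and stress that the isomorphism is only one of $K$-modules. As an $H$-module the splitting may fail, since $\fa$ need not act trivially, and that question is deferred to Section~\ref{sec:phys-interpr}.
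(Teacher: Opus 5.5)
Your proposal is correct and is essentially the paper's argument. The paper likewise counts the free cocycle parameters in a Witt frame to get $\Zrel^2(p)\cong 11\CC\oplus 5\VV^\top\oplus\sym{2}_0\VV^\top\oplus\ext{2}\VV^\top$ as $K$-modules, then subtracts $\Brel^2(p)\cong 10\CC\oplus 4\VV^\top$ from the ghost-number-$1$ computation, and your component count and dimension checks ($706$, $106$, $600$) match the paper's.
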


\subsubsection{Summary}
\label{sec:hrel2-summary}

The relative cohomology at ghost number 2 is given by
\begin{equation}
  \Hrel^2(p) \cong
  \begin{cases}
    2 \CC \oplus \sym{2}_0 \VV \oplus \ext{2} \VV & p = 0 \\
    \CC \oplus \VV^\top \oplus \sym{2}_0 \VV^\top \oplus \ext{2} \VV^\top & p^2=0,~p\neq 0\\
    0 & p^2 \neq 0,
  \end{cases}
\end{equation}
where the first and second isomorphisms are of $\SO(25,1)$ and $K$-modules respectively.

We may choose cocycle representatives for these classes.  When $p^2 = 0$ but $p \neq 0$, the
cohomology is carried by the vertex operators
\begin{equation}
  \label{eq:fields-H2rel}
  \Psi = G_{\mu\nu} c C \left( \Pi^\mu \Pi^\nu - i p^{(\mu} \d \Pi^{\nu)} \right) \exp(i p \cdot X) + F_{\mu\nu} c C \d X^\mu \Pi^\nu \exp(i p\cdot X),
\end{equation}
where $G_{\mu\nu} = G_{\nu\mu} = S^{(15)}_{\mu\nu}$ and $F_{\mu\nu} =
- F_{\nu\mu} = T^{(14)}_{[\mu\nu]}$, subject to the equations
\begin{equation}
  \label{eq:field-eqns-H2rel}
  p^\mu p^\nu G_{\mu\nu} = 0 \qquad\text{and}\qquad  p^\mu F_{\mu\nu}
  = 0,
\end{equation}
and gauge symmetries
\begin{equation}
  \label{eq:gauge-symmetries-Hrel2}
  \delta G_{\mu\nu} = -\tfrac i6 \eta_{\mu\nu} p \cdot \chi + i p_{(\mu} \chi_{\nu)}  \qquad\text{and}\qquad
  \delta F_{\mu\nu} = i p_{[\mu} \omega_{\nu]},
\end{equation}
where $p \cdot \omega = 0$.
Crucially, $G_{\mu\nu}$ does not correspond to the graviton and dilaton as we know it from ordinary bosonic string theory. Furthermore, by looking at Proposition \ref{prop:Hrel2}, it may be tempting to claim that there exists a photon in the spectrum. However, this is not exactly the case. By analysing the $H$-module structure of $\Hrel^2(p)$ in Section \ref{sec:phys-interpr}, we elaborate on both these points in more detail. Specifically, we will see that both these peculiarities are a result of the non-unitarity of the spectrum.

\subsection{Calculating $\Hrel^3(p)$}
\label{sec:calculating-hrel3p}

The most general $\Psi \in \Crel^3(p)$ is given by
\begin{multline}
  \Psi = \phi^{(1)} c C \d^3 c + \phi^{(2)} c C \d^3 C + \phi^{(3)} c  \d C \d^2 c + \phi^{(4)} c \d C \d^2 C + \phi^{(5)} C \d C \d^2 c +  \phi^{(6)} C \d C \d^2 C\\
  + A^{(7)}_\mu c C \d^2 c \d X^\mu + A^{(8)}_\mu c C \d^2 C \d X^\mu +   A^{(9)}_\mu c C \d^2 c \Pi^\mu + A^{(10)}_\mu c C \d^2 C \Pi^\mu +   A^{(11)}_\mu c C \d C \d^2 X^\mu\\
  +   A^{(12)}_\mu c C \d C \d  \Pi^\mu + S^{(13)}_{\mu\nu} c C \d C \d X^\mu \d X^\nu + T^{(14)}_{\mu\nu} c C  \d C \d X^\mu \Pi^\nu + S^{(15)}_{\mu\nu} c C \d C \Pi^\mu \Pi^\nu,
\end{multline}
with $S^{(13)}$ and $S^{(15)}$ symmetric tensors.  The cocycle
condition $d\Psi = 0$ breaks up in to the following components:
\begin{align*}
  - \phi^{(4)} - \phi^{(5)} + \tfrac{i}2 p \cdot A^{(7)} + \tfrac{i}2 p \cdot A^{(10)}&= 0 & \tag{$c C \d^2 c \d^2 C$}\\
  2 \phi^{(2)} - \phi^{(4)} + \phi^{(5)} - \tfrac{i}3 p \cdot A^{(12)} - \tfrac16 \tr T^{(14)}&= 0 & \tag{$c C \d C \d^3 c$}\\
  \phi^{(6)} + \tfrac{i}3 p \cdot A^{(11)} + \tfrac16 \tr S^{(13)} &= 0 & \tag{$c C \d C \d^3 C$}\\
  i p_\mu \phi^{(5)} + A^{(8)}_\mu - A^{(11)}_\mu - \tfrac{i}2 p^\nu T^{(14)}_{\mu\nu}&= 0 & \tag{$c C \d C \d^2 c \d X^\mu$}\\
  i p_\mu \phi^{(3)} - A^{(7)}_\mu + A^{(10)}_\mu - A^{(12)}_\mu - i p^\nu S^{(15)}_{\mu\nu}&= 0 & \tag{$c C \d C \d^2c \Pi_\mu$}\\
  i p_\mu \phi^{(6)} + i p^\nu S^{(13)}_{\mu\nu}&= 0 & \tag{$c C \d C \d^2 C \d X^\mu$}\\
  i p_\mu \phi^{(4)} - A^{(8)}_\mu + A^{(11)}_\mu + \tfrac{i}2 p^\nu T^{(14)}_{\nu\mu} &= 0. & \tag{$c C \d C \d^2 C \Pi_\mu$}
\end{align*}
We may use the first five equations to solve for $\phi^{(5)}, \phi^{(2)}, \phi^{(6)}, A^{(11)}_\mu$ and $A^{(12)}_\mu$, respectively:
\begin{align}
  \label{eq:Z3rel-equations-partial}
  \begin{aligned}
    \phi^{(5)} &= - \phi^{(4)} + \tfrac{i}2 p \cdot A^{(7)} + \tfrac{i}2 p \cdot A^{(10)}\\
    \phisup{2} &= \phisup{4} - \tfrac{5i}{12} p \cdot \Asup{7} - \tfrac i{12} p \cdot \Asup{10} + \tfrac 16 p^\mu p^\nu \Ssup{15}_{\mu\nu} + \tfrac1{12} \tr \Tsup{14}\\
    \phi^{(6)} &=  -\tfrac{i}3 p \cdot  A^{(8)} - \tfrac16 \tr S^{(13)} - \tfrac16 p^\mu p^\nu T^{(14)}_{\mu\nu}\\
    A^{(11)}_\mu &= - i p_\mu \phi^{(4)} - \tfrac12 p_\mu p \cdot A^{(7)} + A^{(8)}_\mu - \tfrac{1}2 p_\mu p \cdot A^{(10)} - \tfrac{i}2 p^\nu T^{(14)}_{\mu\nu}\\
    A^{(12)}_\mu &=  i p_\mu \phi^{(3)} - A^{(7)}_\mu + A^{(10)}_\mu - i p^\nu S^{(15)}_{\mu\nu}.
  \end{aligned}
\end{align}
This leaves free $\phi^{(1)},\phi^{(3)},\phi^{(4)},A^{(7)}_\mu, A^{(8)}_\mu, A^{(9)}_\mu, A^{(10)}_\mu, S^{(13)}_{\mu\nu},T^{(14)}_{\mu\nu}, S^{(15)}_{\mu\nu}$, subject to the two last equations above: namely,
\begin{align}
  \label{eq:Z3rel-remaining-equations}
  \begin{aligned}
     p^\nu S^{(13)}_{\mu\nu} - \tfrac16 p_\mu \tr S^{(13)} &=
     \tfrac{i}3 p_\mu p \cdot  A^{(8)} + \tfrac16 p_\mu p^\lambda p^\nu T^{(14)}_{\lambda\nu}\\
      p^\nu T^{(14)}_{[\mu\nu]} &=\tfrac{i}2 p_\mu (p \cdot A^{(7)} + p \cdot A^{(10)}).
  \end{aligned}
\end{align}

\subsubsection{Case $p=0$}
\label{sec:case-p=0}

If $p=0$, then these equations are trivially satisfied and we are left with the following free components spanning $\Zrel^3(0)$:
\begin{equation}
  \phi^{(1)},\phi^{(3)},\phi^{(4)},A^{(7)}_\mu, A^{(8)}_\mu, A^{(9)}_\mu, A^{(10)}_\mu, S^{(13)}_{\mu\nu},T^{(14)}_{\mu\nu}, S^{(15)}_{\mu\nu}.
\end{equation}
In terms of representations of $\SO(25,1)$,
\begin{equation}
  \label{eq:Z3rel-0-as-reps}
  \Zrel^3(0) \cong 6 \CC \oplus 4 \VV \oplus 3 \sym{2}_0 \VV \oplus \ext{2} \VV.
\end{equation}
Taking equation~\eqref{eq:Brel3-0-as-reps} into account, we see that
$\Hrel^3(0) \cong \Hrel^2(0)$, as $\SO(25,1)$-modules and, using the
Rank Theorem, we deduce that
\begin{equation}
  \label{eq:B4rel-0-as-reps}
  \Brel^4(0) \cong 3 \CC \oplus 2 \VV.
\end{equation}

\subsubsection{Case $p^2=0$ but $p\neq 0$}
\label{sec:case-p2=0-but}

As we did for $\Hrel^2(p)$, we break $H$-equivariance by choosing an explicit null momentum $p^+ =1$ with $p^- = p^i = 0$ to study this case as an $K$-module, deferring the analysis of the $H$-module structure of $\Hrel^3(p)$ to Section \ref{sec:rep-theory-hrel3}. With this choice of momentum, the second equation breaks up as
\begin{equation}
    T^{(14)}_{[i +]}  = 0  \qquad\text{and}\qquad T^{(14)}_{[-+]} = \tfrac{i}2  (p \cdot A^{(10)} - p \cdot A^{(7)}),
\end{equation}
leaving free $T^{(14)}_{[ij]}$ and $T^{(14)}_{[-i]}$.  The first
equation breaks up as
\begin{equation}
  S^{(13)}_{++} = S^{(13)}_{+i} = 0 \qquad\text{and}\qquad S^{(13)}_{+-} = \tfrac{i}2 p \cdot A^{(8)} + \tfrac14 \tr^\top S^{(13)} + \tfrac14 p^\mu p^\nu T^{(14)}_{\mu\nu},
\end{equation}
leaving free $S^{(13)}_{ij}, S^{(13)}_{i-}, S^{(13)}_{--}$.  In
summary we have the following free components:
\begin{equation}
  \phi^{(1)},\phi^{(3)},\phi^{(4)},A^{(7)}_\mu, A^{(8)}_\mu, A^{(9)}_\mu, A^{(10)}_\mu, S^{(13)}_{ij}, S^{(13)}_{i-},S^{(13)}_{--}, T^{(14)}_{(\mu\nu)}, T^{(14)}_{[ij]}, T^{(14)}_{[-i]}, S^{(15)}_{\mu\nu}.
\end{equation}
In terms of representations of $K$,
\begin{equation}
  \label{eq:Z3rel-as-reps}
  \Zrel^3(p) \cong 21 \CC \oplus 10 \VV^\top \oplus 3 \sym{2}_0 \VV^\top \oplus \ext{2} \VV^\top.
\end{equation}
Taking equation~\eqref{eq:B3rel-p-as-reps} into account, we see that
$\Hrel^3(p) \cong \Hrel^2(p)$.  Using the Rank Theorem, we also see
that
\begin{equation}
  \label{eq:B4rel-p-as-reps}
  \Brel^4(p) \cong 10 \CC \oplus 4 \VV^\top.
\end{equation}

In summary, we have that for all $p$, $\Hrel^3(p) \cong \Hrel^2(p)$. When $p^2 = 0$ but $p\neq 0$, we stress that this is an isomorphism of $K$-modules. As we will see in Section \ref{sec:phys-interpr}, this is not quite true as $H$-modules. 

\subsection{Calculating $\Hrel^4(p)$}
\label{sec:calculating-hrel4p}

The general relative $4$-cocycle $\Psi \in \Crel^4(p)$ is given by
\begin{multline}
  \Psi = \phi^{(1)} c C \d^2 c \d^2 C + \phi^{(2)} c C \d C \d^3 c +  \phi^{(3)} c C \d C \d^3 C\\
  + A^{(4)}_\mu c C \d C \d^2 c \d X^\mu + A^{(5)}_\mu c C \d C \d^2 C \d X^\mu + A^{(6)}_\mu c C \d C \d^2 c \Pi^\mu + A^{(7)}_\mu c C \d C \d^2 C \Pi^\mu.
\end{multline}
The cocycle condition $d\Psi = 0$ reduces to
\begin{equation}
  p \cdot A^{(4)} = p \cdot A^{(7)},
\end{equation}
which is identically satisfied if $p=0$, in which case $\Zrel^4(0) =
\Crel^4(0)$, implying already that $\Brel^5(0) = 0$.  If $p^2=0$, but
$p\neq 0$, then $\Zrel^4(p) \cong \Crel^4(p) \ominus \CC$.  In
summary, we have that $\Crel^4(p) \cong \Crel^1(p)$ and, in addition,
\begin{equation}
  \label{eq:Z4rel-as-reps}
  \Zrel^4(p) \cong
  \begin{cases}
    3 \CC \oplus 4 \VV & p=0\\
    10\CC \oplus 4 \VV^\top & p^2=0,~p\neq 0,
  \end{cases}
\end{equation}
so that
\begin{equation}
  \label{eq:B5rel-as-reps}
  \Brel^5(p) \cong
  \begin{cases}
    0 & p = 0\\
    \CC & p^2=0,~p\neq 0.
  \end{cases}
\end{equation}
Since, as calculated above,
\begin{equation}
  \Brel^4(p) \cong
  \begin{cases}
    3 \CC \oplus 2 \VV  & p = 0\\
    10 \CC \oplus 4 \VV^\top & p^2=0,~p\neq 0,
  \end{cases}
\end{equation}
we conclude that $\Hrel^4(p) \cong \Hrel^1(p)$ for all $p$.

\subsection{Calculating $\Hrel^5(p)$}
\label{sec:calculating-hrel5p}

Here $\Crel^5(p) \cong \CC$ and $\Zrel^5(p) = \Crel^5(p)$, so that
from equation~\eqref{eq:B5rel-as-reps}, we have that $\Hrel^5(p) \cong
\Hrel^0(p)$ for all $p$.

\subsection{Summary}
\label{sec:hrel-summary}

We conclude this section with a summary of the $K$-module structure of $\Hrel^\bullet(p)$ . 

\begin{proposition}\label{prop:Hrel}
  The relative cohomology exhibits Poincaré duality $\Hrel^n(p) \cong
  \Hrel^{5-n}(p)$ as $\SO(25,1)$-modules when $p=0$ and as $K$-modules when $p \neq 0$. 
  Hence, by the above calculations,
  \begin{equation}
    \Hrel^n(0) \cong
    \begin{cases}
      \CC & n = 0,5\\
      2 \VV & n = 1,4\\
      2 \CC \oplus \sym{2}_0 \VV \oplus \ext{2} \VV & n = 2,3,
    \end{cases}
  \end{equation}
  as $\SO(25,1)$-modules
  and for $p^2 = 0$ but $p\neq 0$,
  \begin{equation}
    \Hrel^n(p) \cong
    \begin{cases}
      0 & n = 0,1,4,5\\
      \CC \oplus \VV^\top \oplus \sym{2}_0 \VV^\top \oplus \ext{2} \VV^\top & n = 2,3.
    \end{cases}
  \end{equation}
  as $K$-modules.
\end{proposition}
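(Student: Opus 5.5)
The proposition is essentially a collation of the ghost-number-by-ghost-number computations of Sections~\ref{sec:calculating-hrel0p}--\ref{sec:calculating-hrel5p}. The plan is to assemble them and to check each claimed duality $\Hrel^n(p)\cong\Hrel^{5-n}(p)$ by comparing the decompositions into irreducibles. For $p=0$ the relevant group is $\SO(25,1)$, which is semisimple, so every finite-dimensional module is determined up to isomorphism by its multiset of irreducible constituents. For $p^2=0$, $p\neq 0$ we work only with the compact group $K\cong\SO(24)$, which again acts fully reducibly. In both cases it therefore suffices to match isomorphism classes of summands. Throughout we use that cocycles and coboundaries are submodules, by Lorentz-equivariance of $d$. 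We also use that, by the Rank Theorem applied equivariantly, $\Crel^n\cong\Zrel^n\oplus\Brel^{n+1}$ and $\Zrel^n\cong\Brel^n\oplus\Hrel^n$ as modules over the reductive group.

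\emph{Ghost numbers $0$ and $5$.} By Section~\ref{sec:calculating-hrel0p}, $\Hrel^0(p)\cong\CC$ if $p=0$ and $0$ otherwise. By Section~\ref{sec:calculating-hrel5p}, $\Zrel^5(p)=\Crel^5(p)\cong\CC$, and by \eqref{eq:B5rel-as-reps} we have $\Brel^5(p)=0$ for $p=0$ and $\Brel^5(p)\cong\CC$ for $p\neq 0$. This gives the same answer at ghost number $5$.

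\emph{Ghost numbers $1$ and $4$.} Section~\ref{sec:hrel1-summary} gives $\Hrel^1(0)\cong 2\VV$ and $\Hrel^1(p)=0$ for $p\neq0$. At ghost number $4$, $\Zrel^4$ is given by \eqref{eq:Z4rel-as-reps} and $\Brel^4$ by \eqref{eq:B4rel-0-as-reps} and \eqref{eq:B4rel-p-as-reps}. Subtracting summands gives $(3\CC\oplus4\VV)\ominus(3\CC\oplus2\VV)=2\VV$ for $p=0$ and $(10\CC\oplus4\VV^\top)\ominus(10\CC\oplus4\VV^\top)=0$ for $p\neq0$.

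\emph{Ghost numbers $2$ and $3$.} For $p=0$, Section~\ref{sec:h2-p=0} gives $\Hrel^2(0)\cong2\CC\oplus\sym{2}_0\VV\oplus\ext{2}\VV$. Subtracting \eqref{eq:Brel3-0-as-reps} from \eqref{eq:Z3rel-0-as-reps} gives $2\CC\oplus\sym{2}_0\VV\oplus\ext{2}\VV$ as well. For $p\neq 0$, Proposition~\ref{prop:Hrel2} gives $\Hrel^2(p)$. Subtracting \eqref{eq:B3rel-p-as-reps} from \eqref{eq:Z3rel-as-reps} gives $\CC\oplus\VV^\top\oplus\sym{2}_0\VV^\top\oplus\ext{2}\VV^\top$, which matches. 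The case $p^2\neq0$ is excluded by Proposition~\ref{prop:on-shell}.

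The only real obstacle is not in this bookkeeping but in its inputs. One must make sure that each free-parameter count truly yields the stated \emph{module} decompositions, rather than merely the right dimensions. In the Witt frame this means grouping components such as $S^{(15)}_{\pm i}$ and $T^{(14)}_{[-i]}$ into $\VV^\top$-multiplets, and the transverse traces into copies of $\CC$. I would check this by confirming that the dimensions in each decomposition agree with Table~\ref{tab:numerical}; for example, $1+24+299+276=600$. I would also emphasise in the write-up that for $p\neq0$ the duality is asserted only for $K$, since $H$ does not act reducibly and equal composition factors do not imply an $H$-isomorphism.
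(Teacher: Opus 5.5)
Your proposal is correct and is essentially the paper's own argument: the proposition is just the collation of the ghost-number-by-ghost-number computations of $\Zrel^n$ and $\Brel^n$ as $\SO(25,1)$-modules (for $p=0$) and as $K$-modules (for $p^2=0$, $p\neq 0$), with the duality $\Hrel^n(p)\cong\Hrel^{5-n}(p)$ read off by comparing the resulting decompositions, exactly as your subtractions do. You also make explicit two points the paper leaves tacit: complete reducibility is what justifies the equivariant Rank Theorem subtractions, and the duality is claimed only for $K$, not for $H$.
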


\section{The BRST cohomology}
\label{sec:brst-cohomology}

As was mentioned at the end of Section~\ref{sec:bosonic-ambi}, the
short exact sequence \eqref{eq:ses} of complexes induces a long exact
sequence in cohomology:
\begin{equation}
  \label{eq:les}
  \begin{tikzcd}
    \cdots \arrow[r] & \Hrel^{n-2}(p) \arrow[r] & \Hrel^n(p) \arrow[r] & \sH^n(p) \arrow[r] & \Hrel^{n-1}(p)
    \arrow[r] & \Hrel^{n+1}(p) \arrow[r] & \cdots
  \end{tikzcd}
\end{equation}
where $\sH^n(p)$ and $\Hrel^n(p)$ are the cohomologies of the absolute and
relative complexes, respectively, at momentum $p$.  The maps
$\Hrel^n(p) \to \sH^n(p)$ and $\sH^n(p) \to \Hrel^{n-1}(p)$ are
induced by the inclusion and $b_0$ respectively, whereas the
connecting homomorphism $\Delta: \Hrel^{n-1}(p) \to \Hrel^{n+1}(p)$ is given
by $\Delta = [d,c_0]$, whose action on a relative cocycle $\zeta \in \Zrel^\bullet(p)$ is
given by
\begin{equation}
  \Delta \zeta = (4 c_2 c_{-2} + 2 c_1 c_{-1}) \zeta.
\end{equation}
Note in particular that the long exact sequence \eqref{eq:les} is $H$-equivariant.

Since $\Hrel^{n<0}(p)=0$, taking $n=0$ in the long exact
sequence~\eqref{eq:les} gives that $\sH^0(p) \cong \Hrel^0(p)$, and
since $\Hrel^{n>5}(p)=0$, taking $n=6$ in the long exact
sequence~\eqref{eq:les} gives that $\sH^6(p) \cong \Hrel^5(p)$.

\begin{corollary}
  From Proposition~\ref{prop:on-shell} and the long exact sequence
  \eqref{eq:les}, we see that $\sH^\bullet(p) = 0$ for $p^2 \neq 0$.
\end{corollary}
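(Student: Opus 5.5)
The corollary follows by combining Proposition~\ref{prop:on-shell} with the exactness of the long exact sequence~\eqref{eq:les}. Fix a momentum $p$ with $p^2 \neq 0$. By Proposition~\ref{prop:on-shell}, $\Hrel^m(p) = 0$ for every ghost number $m$. The numerical results of Table~\ref{tab:numerical} corroborate this, but they are not needed.

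For each $n$, I isolate the segment of~\eqref{eq:les}
\begin{equation*}
  \begin{tikzcd}
    \Hrel^n(p) \arrow[r] & \sH^n(p) \arrow[r,"b_0"] & \Hrel^{n-1}(p).
  \end{tikzcd}
\end{equation*}
Exactness at $\sH^n(p)$ says that the kernel of the map induced by $b_0$ equals the image of the map induced by the inclusion $\Crel^n(p) \hookrightarrow \sC^n(p)$. The target $\Hrel^{n-1}(p)$ vanishes, so that kernel is all of $\sH^n(p)$. The source $\Hrel^n(p)$ also vanishes, so that image is zero. Hence $\sH^n(p) = 0$.

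This argument holds for every $n$, including the boundary cases $n=0$ and $n=6$, since $\Hrel^{-1}(p) = 0$ and $\Hrel^{6}(p)=0$ trivially. It requires nothing about the connecting homomorphism $\Delta$ and nothing about $H$-module structure.

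Nothing in this deduction is delicate. The only point to check is that the long exact sequence really exists, i.e.\ that~\eqref{eq:ses} is a short exact sequence of complexes. This means $b_0$ must be a chain map up to sign from $\sC^\bullet(p)$ to $\Crel^{\bullet-1}(p)$. On $\sC^\bullet(p) = \ker \Ltot_0$ one has $[d,b_0] = \Ltot_0 = 0$, so $b_0$ anticommutes with $d$ there, which is enough for the standard snake-lemma construction. Once that is granted, the corollary is immediate.
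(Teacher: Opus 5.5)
Your proposal is correct and is the same argument as the paper's: since $\Hrel^\bullet(p)=0$ by Proposition~\ref{prop:on-shell}, each $\sH^n(p)$ sits between $\Hrel^n(p)$ and $\Hrel^{n-1}(p)$ in \eqref{eq:les}, both of which vanish, so exactness forces $\sH^n(p)=0$ for every $n$. Your further check that $b_0$ anticommutes with $d$ on $\ker\Ltot_0$ (because $[d,b_0]=\Ltot_0$), so that \eqref{eq:ses} really is a short exact sequence of complexes, is correct; the paper takes this for granted.
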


\subsection{Case $p^2=0$ but $p\neq 0$}
\label{sec:nonzero-massless}

First, we summarise the $H$-module isomorphisms between relative and
absolute cohomology that we can infer at this point.

\begin{proposition}
  \label{prop:Summary-H}
  For $p^2=0$ but $p \neq 0$ and as $H$-modules,
  \begin{equation}
    \sH^n(p) \cong
    \begin{cases}
      0 & n = 0,1,5,6\\
      \Hrel^2(p) &   n=2\\
      \Hrel^3(p) & n=4,
    \end{cases}
  \end{equation}
  and $\sH^3(p)$ is an extension of $\Hrel^2(p)$ by $\Hrel^3(p)$.
\end{proposition}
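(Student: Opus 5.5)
We will read off Proposition~\ref{prop:Summary-H} from the long exact sequence~\eqref{eq:les} together with the vanishing results for the relative cohomology in Proposition~\ref{prop:Hrel}. For $p^2=0$, $p\neq 0$, the relative cohomology vanishes except in ghost numbers $2$ and $3$. The first point to record is that every map in~\eqref{eq:les} is $H$-equivariant. The maps $\Hrel^n(p)\to\sH^n(p)$ and $\sH^n(p)\to\Hrel^{n-1}(p)$ are induced by the inclusion and by $b_0$. The connecting map $\Delta$ is induced by $[d,c_0]$. Each of these operators commutes with the zero modes $Q(\omega)$ for $\omega\in\ext{2}p^\perp$: $Q(\omega)$ commutes with $d$ (shown in Section~\ref{sec:what-is-the-cohomology}), and it commutes with $b_0$ and $c_0$ because $\Lambda(\omega)$ has ghost number zero and contains no ghosts. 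Hence every isomorphism we derive is an isomorphism of $H$-modules, not merely of vector spaces or $K$-modules.

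Next we write out the segments of~\eqref{eq:les} ghost number by ghost number.

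For $n=0,1$, the segment $\Hrel^{n}(p)\to\sH^n(p)\to\Hrel^{n-1}(p)$ has both ends zero, so $\sH^0(p)=\sH^1(p)=0$.

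For $n=2$, we have
\[
\Hrel^0(p)=0\to\Hrel^2(p)\to\sH^2(p)\to\Hrel^1(p)=0,
\]
so inclusion gives $\sH^2(p)\cong\Hrel^2(p)$.

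For $n=4$, we have
\[
\Hrel^2(p)\xrightarrow{\Delta}\Hrel^4(p)=0\to\sH^4(p)\to\Hrel^3(p)\xrightarrow{\Delta}\Hrel^5(p)=0,
\]
so $b_0$ gives $\sH^4(p)\cong\Hrel^3(p)$.

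For $n=5,6$, both neighbours $\Hrel^{n}(p)$ and $\Hrel^{n-1}(p)$ vanish (with $\Hrel^6(p)=0$ trivially), so $\sH^5(p)=\sH^6(p)=0$.

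Finally, for $n=3$, the relevant segment is
\[
\Hrel^1(p)=0\xrightarrow{\Delta}\Hrel^3(p)\to\sH^3(p)\xrightarrow{b_0}\Hrel^2(p)\xrightarrow{\Delta}\Hrel^4(p)=0.
\]
This is a short exact sequence of $H$-modules
\[
0\to\Hrel^3(p)\to\sH^3(p)\to\Hrel^2(p)\to0,
\]
which is precisely the statement that $\sH^3(p)$ is an extension of $\Hrel^2(p)$ by $\Hrel^3(p)$.

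None of these steps is hard. The only point needing care is the equivariance of the connecting homomorphism, since it is built from the non-invariant-looking operator $c_0$. We handle it by noting that $\Delta\zeta=(4c_2c_{-2}+2c_1c_{-1})\zeta$ involves only ghost modes, which commute with the Lorentz zero modes. Alternatively, one can simply observe that the splitting $c_0$ commutes with $\stab(p)$, so the whole short exact sequence~\eqref{eq:ses} is one of $H$-equivariant complexes, and the induced long exact sequence is then $H$-equivariant by naturality.

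We do not claim that the extension in ghost number $3$ splits. Even at the $K$-module level it does, since $K$ is compact, but as $H$-modules the splitting is left undetermined, consistent with the statement.
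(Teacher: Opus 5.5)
Your proposal is correct and follows the paper's proof essentially verbatim: both feed the vanishing $\Hrel^n(p)=0$ for $n=0,1,4,5$ (and trivially for $n<0$, $n>5$) into the long exact sequence~\eqref{eq:les} and read off the isomorphisms in ghost numbers $2$ and $4$ and the short exact sequence in ghost number $3$. Your justification of $H$-equivariance — the zero modes $Q(\omega)$ involve only matter fields, so they commute with $d$, $b_0$ and the inclusion — supplies what the paper merely asserts. Equivariance of $\Delta$ is not actually needed, since in every segment you use it has zero source or zero target.
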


\begin{proof}
  For $p^2 = 0$ but $p \neq 0$, Proposition~\ref{prop:Hrel} tells us
  that $\Hrel^n(p) = 0$ for $n = 0,\,1,\,4,$ and $5$. These zeroes
  reduce the long exact sequence \eqref{eq:les} to multiple short
  exact sequences. Firstly, we have
  \begin{equation}
    \begin{tikzcd}
      0 \arrow[r] & \sH^1(p) \arrow[r] & 0
    \end{tikzcd}
    \qquad\text{and}\qquad
    \begin{tikzcd}
      0 \arrow[r] & \sH^5(p) \arrow[r] & 0
    \end{tikzcd}
  \end{equation}
  so that $\sH^1(p) = \sH^5(p) = 0$.  We then have
  \begin{equation}
    \begin{tikzcd}
      0 \arrow[r] & \Hrel^2(p) \arrow[r] & \sH^2(p) \arrow[r] & 0 
    \end{tikzcd}
    \qquad\text{and}\qquad
    \begin{tikzcd}
      0 \arrow[r] & \sH^4(p) \arrow[r] & \Hrel^3(p) \arrow[r] & 0 
    \end{tikzcd}
  \end{equation}
  so that $\sH^2(p) \cong \Hrel^2(p)$  and $\sH^4(p) \cong \Hrel^3(p)$.
  Finally we have
  \begin{equation} \label{eq:Hrel3-H3-Hrel2-SES}
    \begin{tikzcd}
      0 \arrow[r] & \Hrel^3(p) \arrow[r] & \sH^3(p) \arrow[r] &
      \Hrel^2(p) \arrow[r] & 0,
    \end{tikzcd}
  \end{equation}
  which shows that $\sH^3(p)$ is an extension of $\Hrel^2(p)$ by
  $\Hrel^3(p)$.
\end{proof}

\begin{remark}
  Extensions of $\Hrel^2(p)$ by $\Hrel^3(p)$ are classified
  cohomologically.  Working at the level of the Lie algebra $\h$, the
  extension defines a class in $H^1(\h,\Hom(\Hrel^2(p), \Hrel^3(p)))$.
  From the structure of $\h$, as a direct product
  $\h \cong \fk \ltimes \fa$, with $\fk$ simple and
  $\fa$ abelian, it follows that $H^1(\h,\Hom(\Hrel^2(p), \Hrel^3(p)))
  \cong H^1(\fa,\Hom(\Hrel^2(p), \Hrel^3(p)))^\fk$.  We have not
  calculated the extension class, but in Appendix~\ref{app:splitting}
  we prove that $\dim H^1(\fa,\Hom(\Hrel^2(p), \Hrel^3(p)))^\fk = 3$,
  thereby concluding that we cannot deduce a priori that the extension
  is trivial.
\end{remark}

All $K$-module extensions split, so making use of the $K$-module
structure of relative cohomology at ghost numbers 2 and 3 as given by
Proposition~\ref{prop:Hrel} leads to the following corollary.

\begin{corollary}\label{cor:H-on-shell}  For $p^2=0$ but $p \neq 0$, we have
  \begin{equation}
    \sH^n(p) \cong
    \begin{cases}
      0 & n = 0,1,5,6\\
      \CC \oplus \VV^\top \oplus \sym{2}_0 \VV^\top \oplus \ext{2} \VV^\top &   n=2,4\\
      2 \CC \oplus 2 \VV^\top \oplus 2 \sym{2}_0 \VV^\top \oplus 2 \ext{2} \VV^\top &   n=3,
    \end{cases}
  \end{equation}
  as $K$-modules.
\end{corollary}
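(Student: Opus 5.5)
The corollary follows by combining Proposition~\ref{prop:Summary-H} with the $K$-module description of the relative cohomology in Proposition~\ref{prop:Hrel}. I would treat the ghost numbers in three groups.

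First, the vanishing cases. For $n=1,5$ Proposition~\ref{prop:Summary-H} already gives $\sH^n(p)=0$. For $n=0,6$ I would use the isomorphisms $\sH^0(p)\cong\Hrel^0(p)$ and $\sH^6(p)\cong\Hrel^5(p)$ extracted from the long exact sequence~\eqref{eq:les}. Both right-hand sides vanish for $p\neq 0$ by Proposition~\ref{prop:Hrel}.

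Second, ghost numbers $2$ and $4$. Here Proposition~\ref{prop:Summary-H} gives $H$-module isomorphisms $\sH^2(p)\cong\Hrel^2(p)$ and $\sH^4(p)\cong\Hrel^3(p)$. Since $K\subset H$, these are in particular $K$-module isomorphisms. Proposition~\ref{prop:Hrel} identifies both $\Hrel^2(p)$ and $\Hrel^3(p)$, as $K$-modules, with $\CC\oplus\VV^\top\oplus\sym{2}_0\VV^\top\oplus\ext{2}\VV^\top$.

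Third, ghost number $3$, which is the only step needing an argument beyond citation. Restrict the short exact sequence~\eqref{eq:Hrel3-H3-Hrel2-SES} to $K$. This is legitimate because the long exact sequence is $H$-equivariant, hence $K$-equivariant.
\begin{itemize}
\item $K\cong\SO(24)$ is compact, and all modules involved are finite-dimensional, since $\Crel^\bullet(p)$ is finite-dimensional and $\sC^\bullet(p)$ is finite-dimensional in each ghost number.
\item Averaging an arbitrary linear section with Haar measure therefore produces a $K$-equivariant splitting. One could instead invoke complete reducibility of $\so(24)$, or Weyl's theorem for the simple Lie algebra $\fk$.
\item It follows that $\sH^3(p)\cong\Hrel^3(p)\oplus\Hrel^2(p)$ as $K$-modules, even though the $H$-extension may be nontrivial, as discussed in the preceding remark.
\end{itemize}
Substituting the $K$-decompositions from Proposition~\ref{prop:Hrel} then doubles each summand, giving $2\CC\oplus2\VV^\top\oplus2\sym{2}_0\VV^\top\oplus2\ext{2}\VV^\top$.

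I expect no real obstacle. The only point to state explicitly is the splitting over $K$ of a sequence that need not split over $H$.
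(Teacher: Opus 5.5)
Your proposal is correct and follows the paper's argument: Proposition~\ref{prop:Summary-H} combined with the $K$-module decomposition in Proposition~\ref{prop:Hrel} gives the cases $n \neq 3$, and for $n=3$ you use that the sequence~\eqref{eq:Hrel3-H3-Hrel2-SES} splits over the compact $K$ (equivalently over the simple $\fk$). You also handle $n=0,6$ explicitly via $\sH^0(p)\cong\Hrel^0(p)$ and $\sH^6(p)\cong\Hrel^5(p)$, and you note the finite-dimensionality needed for the splitting; the paper states the splitting in one line, so these are helpful additions rather than a different route.
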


\subsection{Case $p=0$}
\label{sec:zero-momentum}

If $p=0$, $\sH^0(0) \cong \sH^6(0) \cong \CC$, but there are not
enough zeroes in the long exact sequence to be able to deduce the
cohomology.  One needs a more careful look at the connecting
homomorphism.  This is in principle possible, but assuming
(conjecturally) Poincaré duality for the absolute cohomology, we may
bootstrap the calculation of $\sH^\bullet(0)$ simply from the
knowledge of $\sH^n(0)$ for $n=0,1,2$.  Indeed, noticing that the
Euler--Poincaré principle relates $\sH^3(0)$ to $\sH^n(0)$ for
$n=0,1,2$, it is only necessary to calculate $\sH^1(0)$ and $\sH^2(0)$
to be able to conjecturally calculate $\sH^\bullet(0)$.

In more detail, the Euler--Poincaré principle says the following.  Let
$D$ be one of the isotypical representations occurring in the
cohomology; that is, $D$ is one of or one of $\CC$, $\VV$, $\sym{2}_0 \VV$
or $\ext{2} \VV$ in the case of $\SO(25,1)$-representations ($p=0$) or
one of $\CC$, $\VV^\top$, $\sym{2}_0 \VV^\top$ or $\ext{2} \VV^\top$ in the
case of $\SO(24)$-representations ($p^2=0$, but $p \neq 0$).  We
define $h_n(D)$ to be the multiplicity of $D$ in $\sH^n(p)$ and
$d_n(D)$ to be the multiplicity of $D$ in $\sC^n(p)$. Then the
Euler--Poincaré principle, which is a consequence of the Rank Theorem
applied to the differential, says that
\begin{equation}
  \sum_{n=0}^6 (-1)^n h_n(D) = \sum_{n=0}^6 (-1)^n d_n(D).
\end{equation}
It is easy to check that the RHS is zero and hence so is the LHS.
Poincaré duality (thus far conjectural) says that $h_n(D) =
h_{6-n}(D)$ and therefore
\begin{equation}
  h_3(D) = 2 (h_2(D) - h_1(D) + h_0(D)).
\end{equation}
Corollary~\ref{cor:H-on-shell} shows that this holds for $\sH^\bullet(p)$ for
$p^2=0$ and $p\neq 0$.  We will now calculate $\sH^1(0)$ and $\sH^2(0)$.

We observe that under $\sC^1(0) = \Crel^1(0) \oplus c_0 \Crel^0(0)$, a
cochain $\Psi \in \sC^1(0)$ can be written as $\Psi = \psi + c_0
\zeta$, where $\psi \in \Crel^1(0)$ and $\zeta \in \Crel^0(0)$.  The
cocycle condition $d\Psi = 0$ decomposes into two:
\begin{equation}
  d\zeta = 0 \qquad\text{and}\qquad d\psi + \Delta\zeta = 0,
\end{equation}
where $\Delta$ is the connecting homomorphism on cocycles.  As we saw
above, for $p=0$, $\Zrel^0(0) = \CC \ket{0}$ and $\Delta \ket{0} = 0$.  Therefore,
\begin{equation}
  \sZ^1(0) \cong \Zrel^1(0) \oplus \Zrel^0(0) \cong \CC \oplus 2 \VV.
\end{equation}
Since $\sC^1(0) \cong \Crel^1(0) \oplus \Crel^0(0) \cong 4 \CC \oplus
4 \VV$, it follows that
\begin{equation}
  \sB^2(0) \cong 3 \CC \oplus 2 \VV.
\end{equation}
Finally, since $\sB^1(0) = 0$, it follows that
\begin{equation}
  \sH^1(0) \cong \CC \oplus 2 \VV
\end{equation}
as a representation of $\SO(25,1)$.

Now under $\sC^2(0) = \Crel^2(0) \oplus c_0 \Crel^1(0)$, we again have
that $\Psi  = \psi + c_0 \zeta \in \sZ^2(0)$ if and only if $\zeta \in
\Zrel^1(0)$ and $d \psi + \Delta \zeta = 0$.  From
equation~\eqref{eq:Zrel10}, we see that $\zeta \in \Zrel^1(0)$ is
given by
\begin{equation}
  \zeta = \chi^{(1)}_\mu c \Pi^\mu + \chi^{(2)}_\mu\left( c \d X^\mu - C \Pi^\mu \right),
\end{equation}
so that one finds
\begin{equation}
  \Delta \zeta = \chi^{(2)}_\mu c C \d c  \Pi^\mu.
\end{equation}
This only affects one of the cocycle conditions in
equation~\eqref{eq:Zrel20}; namely,
\begin{equation}
  A^{(12)}_\mu = - A^{(4)}_\mu - \chi^{(2)}_\mu, 
\end{equation}
but does not alter the combinatorics.  Therefore we conclude that
\begin{equation}
  \sZ^2(0) \cong \Zrel^2(0) \oplus \Zrel^1(0) \cong 5 \CC \oplus 4 \VV
  \oplus \sym{2}_0 \VV \oplus \ext{2} \VV.
\end{equation}
Since $\sB^2(0) = 3 \CC \oplus 2 \VV$, it follows that
\begin{equation}
  \sH^2(0) \cong 2 \CC \oplus 2 \VV \oplus \sym{2}_0 \VV \oplus \ext{2} \VV.
\end{equation}

Finally, we arrive at the following corollary of
Proposition~\ref{prop:Hrel}, which remains conjectural subject to the
Poincaré duality of the absolute BRST cohomology at zero momentum.

\begin{conjecture}
  \label{sec:conj:zero-momentum}
  The BRST cohomology at momentum $p=0$ is given by
  \begin{equation}
    \sH^n(0) \cong
    \begin{cases}
      \CC & n=0,6\\
      \CC \oplus 2 \VV & n=1,5\\
      2 \CC \oplus 2 \VV \oplus \sym{2}_0 \VV \oplus \ext{2} \VV & n=2,4\\
      4 \CC \oplus 2 \sym{2}_0 \VV \oplus 2 \ext{2} \VV & n=3\\
    \end{cases}
  \end{equation}
  as a representation of $\SO(25,1)$.
\end{conjecture}

\section{Discussion of the spectrum}
\label{sec:phys-interpr}

In this section we discuss the spectrum in representation-theoretic
terms.  Since the cohomology resides in the massless sector, we expect
that we should interpret our results in terms of massless
representations of the Poincaré group.

Massless unitary irreducible representations (UIRs) of the Poincaré
group can all be obtained via the method of induced representations of
Wigner and Mackey (see, e.g.,
\cite[Appendix~A]{Figueroa-OFarrill:2023qty} and references therein).
In a nutshell, once we pick a nonzero $p$ with $p^2 = 0$, there is an
equivalence between the massless UIRs of the Poincaré group and the
UIRs of $\Stab(p)$ in the Lorentz group. The latter are called the
\emph{inducing representations}. We may also induce non-unitary
representations of the Poincaré group from finite-dimensional
representations of $\Stab(p)$, such as $\sH^\bullet(p)$. Indeed, we
let $\eO_p$ denote the orbit of $p \in V^*$ under the proper
orthochronous Lorentz group, which for a massless vector is, say, the
future lightcone. This is a homogeneous space of (the identity
component of) the Poincaré group $P$; although the translation
subgroup $T$ acts trivially. The representation $\sH^\bullet(p)$ of
$\Stab(p)$ can be made into a representation of
$S := \Stab(p) \ltimes T$ by having $T$ act via the unitary character
$\chi_p$ defined by $p$:
\begin{equation}
  \chi_p(\exp t) = e^{i \left<p,t\right>}
\end{equation}
for all $t \in T$.  This data defines a homogeneous vector bundle $P
\times_S  \sH^\bullet$ over $\eO_p$ whose sections each carry a
representation of the Poincaré group.  A natural question is whether
the representation induced from $\sH^\bullet$ is unitary.

Let us recall our notation: $H = \Stab(p)$, $\h = \stab(p)$ and
$K \subset H$ is a maximal compact subgroup whose Lie algebra
$\fk \subset \h$ splits the sequence~\eqref{eq:SES-stab-h-semidirect}.
A finite-dimensional unitary representation of $H$ must be such that
the normal subgroup with Lie algebra the abelian ideal $\fa$, being
non-compact, acts trivially, hence the representation factors through
a unitary representation of $K$. We will show below that this fails to
be the case: we will show in Section~\ref{sec:non-unitarity-hrel2p}
that $\Hrel^2(p)$ is not a unitary $H$-module and the $H$-module
isomorphism $\sH^2(p) \cong \Hrel^2(p)$ says that the same is true for
$\sH^2(p)$. In Section~\ref{sec:rep-theory-hrel3} we will show that
the same holds for $\Hrel^3(p)$. Indeed in
Section~\ref{sec:poinc-dual-betw} we show that
$\Hrel^3(p) \cong \Hrel^2(p)^*$ (Poincaré duality) and hence
$\Hrel^3(p)$ is not unitary. The same can be proved directly from
either of the isomorphisms~\eqref{eq:hrel3-as-h-mod-v1} or
\eqref{eq:hrel3-as-h-mod-v2}. Since $\sH^3(p)$ is an extension of
$\Hrel^2(p)$ by $\Hrel^3(p)$, it has a non-unitary submodule
isomorphic to $\Hrel^3(p)$ and hence is itself non-unitary.

\subsection{$\Hrel^2(p)$ as a $\Stab(p)$-module: non-unitarity}
\label{sec:rep-theory-hrel2}

As shown in Appendix~\ref{app:H2rel-detailed-calc}, the relative
cohomology $\Hrel^2(p)$ for $p^2 =0$ (but $p\neq 0$) has
representative cocycles
\begin{equation}
  G_{\mu\nu} c C (\Pi^\mu \Pi^\nu -i p^{(\mu} \d \Pi^{\nu)}) e^{i   p\cdot X}  \qquad\text{and}\qquad F_{\mu\nu} c C \d X^\mu \Pi^\nu  e^{i p \cdot X},
\end{equation}
where $G_{\mu\nu}$ is symmetric and obeys $p^\mu p^\nu G_{\mu\nu}= 0$
and $F_{\mu\nu}$ is skew-symmetric and obeys $p^\mu F_{\mu\nu}= 0$.
These cocycles are coboundaries if
\begin{equation}
  G_{\mu\nu} = \tfrac{i}6 \eta_{\mu\nu} p \cdot \chi - i p_{(\mu}  \chi_{\nu)}\qquad\text{and}\qquad F_{\mu\nu} = i p_{[\mu} \omega_{\nu]},
\end{equation}
where $\omega_\mu$ obeys $p \cdot \omega = 0$.  We quickly remark that
there exists a slightly different approach to studying $\Hrel^2(p)$ as
a $H$-module which involves using the above gauge freedom to gauge
away $\tr G$. This leads to the traceless part of $G_{\mu\nu}$
admitting the same gauge transformations as the graviton, as was
already shown by Berkovits and Lize \cite{Berkovits:2018jvm}. This is
rederived in Section \ref{sec:more-refined-version}.

\subsubsection{Action of $\h$ on cocycles}
\label{sec:action-h-2-cocycles}

Let us check that the action of $\h = \stab(p)$ on the cocycle coefficients
is the expected one.  We will only do this in this case for
illustration, but of course it is a general result.

The action of the zero mode of $\Lambda(\omega) = \frac12
\omega_{\mu\nu} (X^\mu \Pi^\nu)$, for $\omega_{\mu\nu} p^\nu = 0$,  on
these cocycles is given explicitly by
\begin{equation}
  \begin{split}
    [\Lambda(\omega), G_{\mu\nu} c C \left(\Pi^\mu \Pi^\nu - i p^{(\mu} \d \Pi^{\nu)}  \right) e^{i p \cdot X}]_1
    &= \left(\omega_\mu{}^\rho G_{\rho\nu} + \omega_\nu{}^\rho G_{\rho\mu}\right) c C \left(\Pi^\mu \Pi^\nu - i p^{(\mu} \d \Pi^{\nu)}  \right) e^{i p \cdot X}\\
    &= \delta_\omega G_{\mu\nu} c C \left(\Pi^\mu \Pi^\nu - i p^{(\mu} \d \Pi^{\nu)}  \right) e^{i p \cdot X}
  \end{split}
\end{equation}
and
\begin{equation}
  \begin{split}
    [\Lambda(\omega), F_{\mu\nu} c C \d X^\mu \Pi^\nu e^{i p \cdot X}]_1
    &= (\omega_\mu{}^\rho F_{\rho\nu} - \omega_\nu{}^\rho F_{\rho\mu})  c C \d X^\mu \Pi^\nu e^{i p \cdot X}\\
    &= \delta_\omega F_{\mu\nu} c C \d X^\mu \Pi^\nu e^{i p \cdot X},
  \end{split}
\end{equation}
which define $\delta_\omega G_{\mu\nu}$ and
$\delta_\omega F_{\mu\nu}$.  Notice that if $p^\mu F_{\mu\nu}= 0$,
then also $p^\mu \delta_\omega F_{\mu\nu} = 0$ and if
$F_{\mu\nu} = i p_{[\mu} \chi_{\nu]}$, then
$\delta_\omega F_{\mu\nu} = i p_{[\mu} \omega_{\nu]}{}^\rho
\chi_\rho$.  Similarly, $p^\mu p^\nu \delta_\omega G_{\mu\nu} = 0$
without having to impose $p^\mu p^\nu G_{\mu\nu} = 0$ and if
$G_{\mu\nu} = \tfrac{i}6 \eta_{\mu\nu} p \cdot \chi - i p_{(\mu}
\chi_{\nu)}$ then $\delta_\omega G_{\mu\nu}$ takes the same form with
$\chi_\mu$ replaced by $\omega_\mu{}^\rho \chi_\rho$.  These facts
simply restate that that $\delta_\omega$ is a chain map.  We see that
the infinitesimal $\stab(p)$ action on the cocycles is the standard
one:
\begin{equation}
  \delta_\omega G_{\mu\nu} = \omega_\mu{}^\rho G_{\rho\nu} +
  \omega_\nu{}^\rho G_{\rho\mu} \qquad\text{and}\qquad \delta_\omega
  F_{\mu\nu} = \omega_\mu{}^\rho F_{\rho\nu} - \omega_\nu{}^\rho
  F_{\rho\mu}.
\end{equation}

\subsubsection{A first version of $\Hrel^2(p)$ as an $H$-module}
\label{sec:hrel2-as-H-module-v1}

We now begin to identify $\Hrel^2(p)$ as an $H$-module, for
nonzero massless momentum $p$.  Since $p^\mu p^\nu G_{\mu\nu} = 0$,
the symmetric cocycle $G_{\mu\nu} c C (\Pi^\mu \Pi^\nu - i p^{(\mu} \d
\Pi^{\nu)}) e^{i p\cdot X}$ belongs to the $H$-module $p^\perp \odot
\VV \subset \sym{2}\VV$.  The coboundary $\delta G_{\mu\nu}$
belongs to the $H$-module $L_p \odot \VV$, so that this subspace of
$\Hrel^2(p)$ is isomorphic to the $324$-dimensional $H$-module
\begin{equation}
  \label{eq:h2rel-symmetric}
  \dfrac{p^\perp \odot \VV}{L_p \odot \VV}.
\end{equation}

Since $p^\mu F_{\mu\nu} = 0$, the skewsymmetric cocycle $F_{\mu\nu} c C
\d X^\mu \Pi^\nu e^{i p\cdot X}$ belongs to the $H$-submodule
$\ext{2}p^\perp \subset \ext{2} \VV$.  The cocycle is a coboundary if
it of the form $i p_{[\mu} \omega_{\nu]}$ with $\omega \in p^\perp$.
This means that it lies in the $H$-submodule $L_p \wedge p^\perp$.
This subspace of the cohomology corresponds to the $276$-dimensional
$H$-module
\begin{equation}
  \label{eq:h2rel-skewsymmetric}
  \dfrac{\ext{2}p^\perp}{L_p \wedge p^\perp} \cong \ext{2}\VV^\top,
\end{equation}
where the isomorphism was proved in
Lemma~\ref{lem:iso-h2rel-skewsymmetric}.  In summary, we have proved
that as an $H$-module, the second relative cohomology is given by
\begin{equation}
  \label{eq:Hrel2-as-H-mod}
  \Hrel^2(p) \cong \ext{2}\VV^\top \oplus \dfrac{p^\perp \odot \VV}{L_p \odot \VV}.
\end{equation}

\begin{proposition}
  \label{prop:H2rel-SES-duality}
  There is a short exact sequence of $H$-modules
  \begin{equation}
    \label{eq:H2rel-SES-duality}
    \begin{tikzcd}
      0 \arrow[r] & \ext{2}\VV^\top \oplus \sym{2}\VV^\top  \arrow[r] & \Hrel^2(p) \arrow[r] & \VV^\top \arrow[r] & 0.
    \end{tikzcd}
  \end{equation}
\end{proposition}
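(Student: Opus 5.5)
Using the isomorphism \eqref{eq:Hrel2-as-H-mod}, it suffices to produce a short exact sequence of $H$-modules
\begin{equation*}
  0 \to \sym{2}\VV^\top \to \dfrac{p^\perp \odot \VV}{L_p \odot \VV} \to \VV^\top \to 0,
\end{equation*}
and then take its direct sum with the identity on $\ext{2}\VV^\top$. The natural candidate for the submodule is the image of $\sym{2}p^\perp$ in the quotient. Since $\sym{2}p^\perp \cap (L_p \odot \VV) = L_p \odot p^\perp$, the Second Isomorphism Theorem (as used in Lemma~\ref{lem:module-iso}) identifies this image with
\begin{equation*}
  \dfrac{\sym{2}p^\perp + L_p \odot \VV}{L_p \odot \VV} \cong \dfrac{\sym{2}p^\perp}{L_p\odot p^\perp}.
\end{equation*}
By Lemma~\ref{lem:iso-h2rel-symmetric-submodule} (complexified), the right-hand side is $\sym{2}\VV^\top$. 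This gives an injective $H$-map $\sym{2}\VV^\top \hookrightarrow (p^\perp\odot\VV)/(L_p\odot\VV)$.

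For the quotient, the Third Isomorphism Theorem gives
\begin{equation*}
  \dfrac{(p^\perp\odot\VV)/(L_p\odot\VV)}{(\sym{2}p^\perp + L_p\odot\VV)/(L_p\odot\VV)} \cong \dfrac{p^\perp\odot\VV}{\sym{2}p^\perp + L_p \odot \VV}.
\end{equation*}
To identify this with $\VV^\top$, I will mimic Proposition~\ref{prop:assoc-graded-symmetric-modules}(4), but without restricting to traceless tensors. Consider contraction with $p$, $\imath_p\colon p^\perp\odot\VV \to p^\perp$, composed with the projection $p^\perp \to \VV^\top$. This map is $H$-equivariant because $p$ is $H$-invariant. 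It is surjective: for $v\in p^\perp$ and $u$ with $\langle p,u\rangle=1$, one has $\imath_p(v\odot u)\propto v$. It remains to show that its kernel is exactly $\sym{2}p^\perp + L_p\odot\VV$. The inclusion $\supseteq$ is clear, since $\imath_p$ kills $\sym{2}p^\perp$ and sends $p^\sharp\odot u$ into $L_p$.

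The main point to verify is equality of the kernel, and I will check it by a dimension count. We have $\dim p^\perp\odot\VV = 350$ and $\dim(\sym{2}p^\perp + L_p\odot\VV) = 325 + 26 - 25 = 326$ from Table~\ref{tab:dimensions}. Their difference is $24 = \dim\VV^\top$, so the surjection has kernel of the right dimension and the inclusion is an equality. As a cross-check, the resulting total dimension $276 + 300 + 24 = 600$ agrees with Table~\ref{tab:numerical} and with the $324$-dimensional symmetric piece.

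Assembling the two steps gives the short exact sequence \eqref{eq:H2rel-SES-duality}. I do not expect a genuine obstacle here, only bookkeeping. The one subtlety is that the trace is \emph{not} split off: the submodule is $\sym{2}\VV^\top$ with its trace, rather than $\sym{2}_0\VV^\top$. This is consistent with the fact that $\eta$ itself lies in $p^\perp\odot\VV$ and does not lie in $\sym{2}p^\perp$.
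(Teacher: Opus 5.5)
Your proposal is correct and is essentially the paper's proof. The paper places the contraction sequences of Lemmas~\ref{lem:SES-1-for-H2rel} and \ref{lem:SES-2-for-H2rel} into the $3\times 3$ diagram~\eqref{eq:3x3-comm-diag} and reads off the symmetric part of \eqref{eq:H2rel-SES-duality} as its exact right-hand column, using the same maps (contraction with $p$ followed by projection to $\VV^\top$) and the same identification $\sym{2}p^\perp/(L_p\odot p^\perp)\cong\sym{2}\VV^\top$ as you, whereas you check that column's exactness directly via the Second and Third Isomorphism Theorems and the dimension count $350-326=24$.
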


The proof will follow after the following two lemmas, which we will
reuse later.

\begin{lemma}
  \label{lem:SES-1-for-H2rel}
  There is a short exact sequence of $H$-modules
  \begin{equation}
    \begin{tikzcd}
      0 \arrow[r] & \sym{2}p^\perp \arrow[r] & p^\perp \odot \VV \arrow[r] & p^\perp \arrow[r] & 0,
    \end{tikzcd}
  \end{equation}
  and hence an $H$-module isomorphism
  \begin{equation}
    p^\perp \cong \dfrac{p^\perp\odot \VV}{\sym{2}p^\perp}.
  \end{equation}
\end{lemma}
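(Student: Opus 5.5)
We construct the surjection $p^\perp \odot \VV \to p^\perp$ by contracting with $p$, mirroring part~(4) of Proposition~\ref{prop:assoc-graded-symmetric-modules} and Lemma~\ref{lem:stabp-in-sov}. Identifying $\sym{2}\VV$ with symmetric tensors $G^{\mu\nu}$ (equivalently, via the musical isomorphisms, with $G_{\mu\nu}$), define $\imath_p \colon \sym{2}\VV \to \VV$ by $G \mapsto p_\mu G^{\mu\nu}$. On decomposables this reads $\imath_p(u \odot v) = \tfrac12\left(\left<p,u\right> v + \left<p,v\right> u\right)$, up to a harmless normalisation. The map is $H$-equivariant because $p$ is $H$-invariant. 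Restricted to $p^\perp \odot \VV$, each term lies in $p^\perp$: if $u \in p^\perp$, then $\left<p,u\right>=0$ and the image is $\tfrac12\left<p,v\right> u \in p^\perp$. Hence we get an $H$-equivariant map $\imath_p \colon p^\perp \odot \VV \to p^\perp$. Equivalently, $p^\perp\odot\VV = \ker(\imath_p^2)$, so $\imath_p$ of an element of this kernel is annihilated by $p$.

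Surjectivity follows by fixing $v \in \VV$ with $\left<p,v\right> = 1$, which exists since $p \neq 0$. For any $w \in p^\perp$ we then have $\imath_p(2\, w \odot v) = w$. For the kernel, $\sym{2}p^\perp$ is clearly contained in it, since both factors are annihilated by $p$. Equality is a dimension count via the Rank Theorem, using Table~\ref{tab:dimensions} (complexified): $\dim \ker = \dim(p^\perp \odot \VV) - \dim p^\perp = 350 - 25 = 325 = \dim \sym{2}p^\perp$. This gives the exactness of
\begin{equation*}
  0 \to \sym{2}p^\perp \to p^\perp \odot \VV \xrightarrow{\imath_p} p^\perp \to 0,
\end{equation*}
and the First Isomorphism Theorem for modules yields the claimed $H$-module isomorphism $p^\perp \cong (p^\perp\odot\VV)/\sym{2}p^\perp$.

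The only points needing care are the dimension of $p^\perp\odot\VV$ and that the tensor convention for $\imath_p$ is consistent with the one used earlier. For the dimension, one checks it directly in a Witt frame: $p^\perp\odot\VV$ is all of $\sym{2}\VV$ except the $\be_-\odot\be_-$ direction, giving $351-1 = 350$, in agreement with the table. Covariance is not at issue, since only equivariance of $\imath_p$ matters. I expect no genuine obstacle; the argument is the symmetric analogue of Lemma~\ref{lem:stabp-in-sov}.
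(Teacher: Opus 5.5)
Your proposal is correct and follows essentially the paper's proof: contract with $p$, restrict $\imath_p$ to $p^\perp\odot\VV$ so that it lands in $p^\perp$, and obtain surjectivity from a vector $v$ with $\langle p,v\rangle=1$. The only minor difference is how you identify the kernel: you use a Rank Theorem count ($350-25=325$), whereas the paper notes directly that the kernel of $\imath_p$ on all of $\sym{2}\VV$ is already $\sym{2}p^\perp$, so it remains the kernel after restriction.
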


\begin{proof}
  Contraction with $p$ gives an $H$-equivariant map $\imath_p \colon
  \sym{2} \VV \to \VV$, whose kernel is $\sym{2} p^\perp$.  The
  restriction of this map to $p^\perp \odot \VV \subset \sym{2} \VV$
  is also an $H$-equivariant map $p^\perp \odot \VV \to p^\perp$,
  whose kernel is still $\sym{2} p^\perp \subset p^\perp \odot \VV$.
  Furthermore it is surjective: choose $v \in \VV$ with
  $\left<p,v\right>=1$.  Then $w \in p^\perp$ can be written as
  $w = \imath_p(v \odot w)$.
\end{proof}

\begin{lemma}
  \label{lem:SES-2-for-H2rel}
  There is a short exact sequence of $H$-modules
  \begin{equation}
    \begin{tikzcd}
      0 \arrow[r] & L_p \odot p^\perp \arrow[r] & L_p \odot \VV \arrow[r] & L_p \arrow[r] & 0,
    \end{tikzcd}
  \end{equation}
  and hence an $H$-module isomorphism
  \begin{equation}
    L_p \cong \dfrac{L_p\odot \VV}{L_p \odot p^\perp}.
  \end{equation}
\end{lemma}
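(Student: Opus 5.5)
The plan is to copy the proof of Lemma~\ref{lem:SES-1-for-H2rel}: restrict the contraction map $\imath_p$ to $L_p \odot \VV$, and identify its image and its kernel. The whole argument is the complexified version of the first (real) Hasse-diagram relation in Section~\ref{sec:submodules-sym-2-v}, namely $\ell_p \odot p^\circ = \ker \imath_p \colon \ell_p \odot V \to \ell_p$.

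\emph{Equivariance and image.} Contraction with $p$ is a map $\imath_p \colon \sym{2}\VV \to \VV$. It is $H$-equivariant because $H$ fixes $p$. On a generator of $L_p \odot \VV$ it gives
\begin{equation*}
  \imath_p (p^\sharp \odot v) = \tfrac12\left( \left<p,p^\sharp\right> v + \left<p,v\right> p^\sharp \right) = \tfrac12 \left<p,v\right> p^\sharp,
\end{equation*}
where I normalise $\odot$ as the symmetrised tensor product and use $p^2=0$. Any other normalisation only rescales the map. So the image lies in $L_p$. It is all of $L_p$: choose $v\in\VV$ with $\left<p,v\right>=1$, and then $p^\sharp \odot v \mapsto \tfrac12 p^\sharp$, which spans $L_p$.

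\emph{Kernel.} Every element of $L_p \odot \VV$ can be written as $p^\sharp \odot v$ for a single $v \in \VV$. By the formula above, it is killed by $\imath_p$ exactly when $\left<p,v\right>=0$, that is, when $v \in p^\perp$. One subtlety is that $v$ is only determined up to adding multiples of $p^\sharp$. This causes no problem, because $p^\sharp \in p^\perp$ (again since $p^2=0$), so the condition $v\in p^\perp$ does not depend on the choice. Hence the kernel is $L_p \odot p^\perp$. As a cross-check, the dimensions from Table~\ref{tab:dimensions} give $26 - 25 = 1 = \dim L_p$.

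Putting these together gives the short exact sequence of $H$-modules, with the inclusion $L_p\odot p^\perp \hookrightarrow L_p \odot \VV$ followed by the surjection $\imath_p$ (rescaled by $2$ if desired). The isomorphism $L_p \cong (L_p\odot\VV)/(L_p\odot p^\perp)$ is then the First Isomorphism Theorem. I expect no step to be a real obstacle. The only places needing care are the two uses of $p^2=0$ (one for the image, one for the well-definedness of the kernel condition) and keeping track of the normalisation of $\odot$.
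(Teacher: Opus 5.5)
Your proposal is correct and is essentially the paper's proof: restrict $\imath_p$ to $L_p \odot \VV$, observe that it lands in and surjects onto the one-dimensional $L_p$ (the paper simply notes the map is nonzero), and identify the kernel as $L_p \odot p^\perp$ (the paper writes this as $(L_p\odot\VV)\cap\sym{2}p^\perp$). One correction: $p^\sharp \odot v$ determines $v$ uniquely, because $v \mapsto p^\sharp\odot v$ is injective (your own count $\dim L_p\odot\VV = 26$ confirms this); the ambiguity up to multiples of $p^\sharp$ arises for $p^\sharp\wedge v$, not for $p^\sharp\odot v$, so your well-definedness check is unnecessary, though harmless.
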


\begin{proof}
  Restricting $\imath_p \colon \sym{2} \VV \to \VV$ now to $L_p \odot
  \VV \subset \sym{2} \VV$, we have an $H$-equivariant map $L_p \odot
  \VV \to L_p$, whose kernel is $\left( L_p \odot \VV\right) \cap
  \sym{2} p^\perp = L_p \odot p^\perp$.  Again it is surjective since
  $L_p$ is one-dimensional and this map is not identically zero: there
  exist $v \in \VV$ with $\left<p,v\right> \neq 0$.
\end{proof}

\begin{proof}[Proof of the Proposition]
  The two short exact sequences in Lemmas~\ref{lem:SES-1-for-H2rel} and
  \ref{lem:SES-2-for-H2rel} fit together as the leftmost two columns
  in the following commutative diagram of $H$-modules:
  \begin{equation}
    \label{eq:3x3-comm-diag}
    \begin{tikzcd}
      & 0 \arrow[d] & 0 \arrow[d] & 0 \arrow[d] & \\
      0 \arrow[r] & L_p \odot p^\perp \arrow[d] \arrow[r] & \sym{2} p^\perp \arrow[r] \arrow[d] & \sym{2} \VV^\top \arrow[r] \arrow[d] & 0\\
      0 \arrow[r] & L_p \odot \VV \arrow[d] \arrow[r] & p^\perp \odot \VV \arrow[r] \arrow[d] & \dfrac{p^\perp \odot \VV}{L_p \odot \VV} \arrow[r] \arrow[d] & 0\\
      0 \arrow[r] & L_p \arrow[d] \arrow[r] & p^\perp \arrow[r]  \arrow[d]  & \VV^\top \arrow[r] \arrow[d] & 0\\
      & 0  & 0 & 0 &\\
    \end{tikzcd}
  \end{equation}
  where the rightmost column contains the canonical quotients.
  Exactness of the rightmost column shows that
  \begin{equation}
    \dfrac{(p^\perp \odot \VV)/(L_p \odot \VV)}{\sym{2}\VV^\top} \cong  \VV^\top.
  \end{equation}
  Putting the rightmost column together with isomorphism~\eqref{eq:h2rel-skewsymmetric},
  we obtain the following short exact sequence:
  \begin{equation}
    \label{eq:SES-Hrel2-unitary}
    \begin{tikzcd}
      0 \arrow[r] & \ext{2} \VV^\top \oplus \sym{2} \VV^\top \arrow[r]
      & \dfrac{\ext{2}p^\perp}{L_p \wedge p^\perp} \oplus \dfrac{p^\perp \odot \VV}{L_p \odot \VV} \arrow[r] & \VV^\top \arrow[r] & 0,
    \end{tikzcd}
  \end{equation}
  which, taking \eqref{eq:Hrel2-as-H-mod} into account, is precisely the sequence~\eqref{eq:H2rel-SES-duality}.
\end{proof}

\subsubsection{A refined version of $\Hrel^2(p)$ as an  $H$-module}
\label{sec:more-refined-version}

We may refine the description of $\Hrel^2(p)$ as an $H$-module by
decomposing the symmetric cocycle $G_{\mu\nu}$ into traceless and
trace parts:
\begin{equation}
  G_{\mu\nu} = G_{\left<\mu\nu\right>} + \tfrac1{26} \eta_{\mu\nu} \tr
  G, \quad\text{where}\quad \tr G = \eta^{\mu\nu} G_{\mu\nu},
\end{equation}
and we we have introduced the notation $G_{\left<\mu\nu\right>}$ for
traceless symmetric tensors:
$\eta^{\mu\nu}G_{\left<\mu\nu\right>} = 0$.  It follows that
$G_{\left<\mu\nu\right>}$ is a coboundary if it is of the form
$G_{\left<\mu\nu\right>} = -i p_{\left<\mu\right.}
\chi_{\left.\nu\right>}$ and the trace part is a coboundary if
$\tr G = i \tfrac{10}3 p \cdot \chi$.  This last equation can always
be satisfied by choosing $\chi$ appropriately (in particular, we can always choose $\chi$ such that $p \cdot \chi = -\tfrac{3}{10}i \tr G$), so that we can restrict
for the purposes of computing the cohomology to traceless
$G_{\mu\nu}$. In this case, we have $p \cdot \chi = 0$ together with the familiar gauge transformation of the graviton for $G_{\left<\mu\nu\right>}$. As mentioned earlier, this is in line with the findings of Berkovits and Lize, with the caveat that our ``field redefinition'' to obtain this familiar form is not the same as theirs (c.f. \cite[Section 2.3]{Berkovits:2018jvm}).

The traceless cocycles then live in the $H$-module
$(p^\perp \odot \VV)_0$, whereas the traceless coboundaries live in the
$H$-module $(L_p \odot \VV)_0 = L_p \odot p^\perp$.  This part of the
cohomology corresponds to the $324$-dimensional $H$-module
\begin{equation}
  \label{eq:h2rel-symmetric-traceless}
  \dfrac{\left( p^\perp \odot \VV \right)_0}{L_p \odot p^\perp}.
\end{equation}

We have a version of Lemma~\ref{lem:SES-1-for-H2rel} for the traceless
subspace.

\begin{lemma}
  \label{lem:SES-1-for-H2rel-traceless}
  There is a short exact sequence of $H$-modules
  \begin{equation}
    \begin{tikzcd}
      0 \arrow[r] & \sym{2}_0p^\perp \arrow[r] & \left( p^\perp \odot \VV \right)_0 \arrow[r] & p^\perp \arrow[r] & 0,
    \end{tikzcd}
  \end{equation}
  and hence an $H$-module isomorphism
  \begin{equation}
    p^\perp \cong \dfrac{(p^\perp\odot \VV)_0}{\sym{2}_0p^\perp}.
  \end{equation}
\end{lemma}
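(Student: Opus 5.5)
The plan is to mimic the proof of Lemma~\ref{lem:SES-1-for-H2rel}, restricting the contraction map to traceless tensors. Consider the $H$-equivariant contraction $\imath_p \colon \sym{2}\VV \to \VV$. Restrict it to $(p^\perp \odot \VV)_0 = (p^\perp\odot\VV) \cap \sym{2}_0\VV$. Since $p^\perp \odot \VV$ already maps into $p^\perp$ (as in Lemma~\ref{lem:SES-1-for-H2rel}), the restriction is an $H$-equivariant map $(p^\perp\odot\VV)_0 \to p^\perp$. Here $H$-equivariance is automatic: $\imath_p$ is equivariant because $p$ is $H$-invariant, and $\sym{2}_0\VV$ is a Lorentz-, hence $H$-, submodule.

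First I identify the kernel. The kernel of $\imath_p$ on $\sym{2}\VV$ is $\sym{2}p^\perp$. Intersecting with the traceless tensors gives $\sym{2}p^\perp \cap \sym{2}_0\VV = \sym{2}_0 p^\perp$, which is exactly the kernel of the restricted map. Here $\sym{2}_0 p^\perp$ means $(\sym{2}p^\perp)_0$, following the paper's convention $M_0 = M \cap \sym{2}_0\VV$.

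The main step is surjectivity, since the naive preimage $v\odot w$ used in Lemma~\ref{lem:SES-1-for-H2rel} need not be traceless. Fix $v \in \VV$ with $\langle p,v\rangle = 1$. For $w \in p^\perp$, take the candidate $v \odot w$ and correct its trace by subtracting a multiple of a symmetric tensor that lies in $\sym{2}p^\perp$, so that the contraction is unchanged, but has nonzero trace. Such a tensor exists: $\sym{2}p^\perp$ contains $e_i\odot e_i$ for a transverse Witt-frame vector, and its trace is nonzero. Covariantly, the trace functional does not vanish identically on $\sym{2}p^\perp$. Subtracting the appropriate multiple yields a traceless preimage of $w$ inside $p^\perp\odot\VV$, which establishes surjectivity.

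As a check, the dimensions should match. From Table~\ref{tab:dimensions}, $(p^\circ\odot V)_0$ has dimension $349$ and $\sym{2}_0p^\circ$ has dimension $324$, and their difference is $25 = \dim p^\circ$, as required. Exactness of the resulting sequence then gives the stated isomorphism $p^\perp \cong (p^\perp\odot\VV)_0/\sym{2}_0p^\perp$.
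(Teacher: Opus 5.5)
Your proof is correct and follows the paper's argument: restrict $\imath_p$ to the traceless tensors $(p^\perp\odot\VV)_0$, identify the kernel as $\sym{2}p^\perp\cap\sym{2}_0\VV=\sym{2}_0p^\perp$, and prove surjectivity by exhibiting a traceless preimage. The only difference is the surjectivity witness. The paper picks $v$ with $\eta(u,v)=0$ and $\langle p,v\rangle=1$ so that $u\odot v$ is already traceless; this is impossible for nonzero $u\in L_p$, so the paper's argument needs a linearity step there. Your correction by a trace-carrying element of $\sym{2}p^\perp$ works uniformly for every $w\in p^\perp$.
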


\begin{proof}
  We restrict the domain of the contraction $\imath_p \colon p^\perp \odot \VV \to
  p^\perp$ to the traceless subspace, resulting in $\imath_p \colon
  (p^\perp \odot \VV)_0 \to p^\perp$, which is still surjective: for any $u
  \in p^\perp$, choose $v \in V$ such that $\eta(u,v) = 0$ and
  $\left<p,v\right> =1$, so that $\imath_p (u \odot v) = u$.  The
  kernel of the map is the intersection of $\sym{2}p^\perp \cap
  (p^\perp \odot \VV)_0 = \sym{2}_0 p^\perp$.
\end{proof}

We may now restrict to traceless tensors in the top two rows of the
commutative diagram~\eqref{eq:3x3-comm-diag} and, using that $L_p
\odot p^\perp = (L_p \odot \VV)_0$ is already traceless and
Lemma~\ref{lem:SES-1-for-H2rel-traceless}, we arrive at the new diagram:
\begin{equation}
  \begin{tikzcd}
    & & 0 \arrow[d] & 0 \arrow[d] & \\
    0 \arrow[r] & L_p \odot p^\perp \arrow[d, equal] \arrow[r] & \sym{2}_0 p^\perp \arrow[r] \arrow[d] & \sym{2}_0 \VV^\top \arrow[r] \arrow[d] & 0\\
    0 \arrow[r] & (L_p \odot \VV)_0 \arrow[r] & (p^\perp \odot \VV)_0 \arrow[r] \arrow[d] & \dfrac{(p^\perp \odot \VV)_0}{(L_p \odot \VV)_0} \arrow[r] \arrow[d] & 0\\
    & & p^\perp \arrow[r, equal] \arrow[d] & p^\perp \arrow[d] & \\
    & & 0 & 0 &\\
  \end{tikzcd}
\end{equation}
and hence exactness of the rightmost column now gives
\begin{equation}
  \dfrac{(p^\perp \odot \VV)_0/(L_p \odot \VV)_0}{\sym{2}_0\VV^\top}  \cong p^\perp.
\end{equation}
That column, together with the
isomorphism~\eqref{eq:h2rel-skewsymmetric}, gives the following short
exact sequence of $H$-modules:
\begin{equation}
  \label{eq:SES-H2rel-v2}
  \begin{tikzcd}
    0 \arrow[r] & \ext{2}\VV^\top \oplus \sym{2}_0 \VV^\top \arrow[r] & \Hrel^2(p) \arrow[r] & p^\perp \arrow[r] & 0.
  \end{tikzcd}
\end{equation}
Note, however, that equation \eqref{eq:SES-H2rel-v2} does \emph{not}
imply that the maximal unitary $H$-submodule in $\Hrel^2(p)$ is
$\ext{2}\VV^\top \oplus \sym{2}_0 \VV^\top$, as one might be tempted
to conclude.

\subsubsection{Non-unitarity of $\Hrel^2(p)$}
\label{sec:non-unitarity-hrel2p}

Having determined $\Hrel^2(p)$ as an $H$-module, it remains to
investigate whether it is a unitary module.  Since $H$ is non-compact,
a finite-dimensional unitary module has to factor through a maximal
compact subgroup and hence the normal subgroup corresponding to the
ideal $\fa$ must act trivially.

The real $H$-module $V^\top = p^\circ/\ell_p$ is unitary since $\fa$
sends $p^\circ$ to $\ell_p$ and hence the action is trivial in the
quotient.  The $H$-invariant positive-definite inner product is
induced by $\eta$ was defined in
equation~\eqref{eq:euclidean-IP-on-V-transverse}.

Similarly, $\ext{2}V^\top$ and $\sym{2}V^\top$ are also unitary
representations, where the positive-definite $H$-invariant inner
product is induced from that on $V^\top$.  For $\ext{2}V^ \top$, it is
given by the Gram determinant
\begin{equation}
  \label{eq:gram-det}
  \bar\eta(\bar u_1 \wedge \bar v_1, \bar u_2 \wedge \bar v_2) =
  \det
  \begin{pmatrix}
    \eta(u_1,u_2) & \eta(u_1,v_2) \\
    \eta(v_1,u_2) & \eta(v_1,v_2)
  \end{pmatrix}
\end{equation}
and for $\sym{2}V^\top$ it is given by the analogous
\begin{equation}
  \bar\eta(\bar u_1 \odot \bar v_1, \bar u_2 \odot \bar v_2)=
  \eta(u_1,u_2) \eta(v_1,v_2) + \eta(u_1,v_2) \eta(u_2,v_1).
\end{equation}
Both of these inner products are clearly well-defined and
positive-definite.  The $H$-invariance follows from the following
facts.  First of all, we have the $H$-module isomorphism
\begin{equation}
  \ext{2}V^\top \cong \dfrac{\ext{2}p^\circ}{\ell_p \wedge p^\circ}
\end{equation}
and we see that $\fa$ sends $\ext{2}p^\circ$ to
$\ell_p \wedge p^\circ$, so it acts trivially in the quotient.
Similarly, we have the $H$-module isomorphism
\begin{equation}
  \sym{2}V^\top \cong \dfrac{\sym{2}p^\circ}{\ell_p \odot p^\circ}
\end{equation}
and again we see that  $\fa$ sends $\sym{2}p^\circ$ to
$\ell_p \odot p^\circ$, so it acts trivially in the quotient.

The complexification of a real orthogonal $H$-module
is a complex unitary $H$-module.  For example, $\VV^\top = \CC
\otimes_\RR V^\top$.  We define a hermitian inner product $h$ on
$\VV^\top$ by
\begin{equation}
  h( z_1 \otimes v_1, z_2 \otimes v_2) = \overline{z}_1 z_2 \bar\eta(v_1,v_2)
\end{equation}
for all $z_1,z_2 \in \CC$ and $v_1,v_2 \in V^\top$, and extending
additively.  It follows that if $H$ preserves the real bilinear inner
product $\bar\eta$ on $V^\top$, then it also preserves the complex
hermitian inner product $h$ on $\VV^\top$ provided that we extend the
representation complex linearly; so that $g \in H$ acts on $z \otimes
v$ by $g \cdot (z \otimes v) = z \otimes (g \cdot v)$.  The same holds
for $\ext{2}\VV^\top$ and $\sym{2}\VV^\top$, which are also unitary
$H$-modules.

\begin{proposition} \label{prop:Hrel2_not_unitary}
  $\Hrel^2(p)$ is not a unitary representation of $H$. Its maximal
  unitary submodule is isomorphic to $\ext{2}\VV^\top \oplus
  \sym{2}\VV^\top$.
\end{proposition}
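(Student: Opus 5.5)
The plan is to exhibit an explicit element of the abelian ideal $\fa$ that acts nontrivially on $\Hrel^2(p)$. Since a finite-dimensional unitary $H$-module must have $\fa$ acting trivially, this shows non-unitarity. We then identify the $\fa$-invariants $\Hrel^2(p)^{\fa}$ and show they are exactly the image of $\ext{2}\VV^\top \oplus \sym{2}\VV^\top$ in the short exact sequence~\eqref{eq:H2rel-SES-duality}. Any unitary submodule $U \subset \Hrel^2(p)$ has $\fa$ acting trivially, so $U \subseteq \Hrel^2(p)^{\fa}$. Conversely, $\ext{2}\VV^\top \oplus \sym{2}\VV^\top$ is unitary by the discussion preceding the proposition. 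The maximal unitary submodule is therefore precisely $\Hrel^2(p)^{\fa} \cong \ext{2}\VV^\top \oplus \sym{2}\VV^\top$.

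By~\eqref{eq:Hrel2-as-H-mod}, only the symmetric summand $(p^\perp\odot\VV)/(L_p\odot\VV)$ needs attention. On the summand $\ext{2}\VV^\top$ the ideal $\fa$ already acts trivially. The first step is to check that $\fa$ maps $p^\perp\odot\VV$ into $p^\perp\odot p^\perp + L_p\odot\VV$, and maps $\sym{2}p^\perp$ into $L_p\odot p^\perp$. The second inclusion says that $\fa$ kills the image of $\sym{2}p^\perp$, which is the submodule $\sym{2}\VV^\top$ of the rightmost column of~\eqref{eq:3x3-comm-diag}. Hence the invariants contain $\ext{2}\VV^\top\oplus\sym{2}\VV^\top$.

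The key computation is to take a representative not in $\sym{2}p^\perp$. Fix $v\in\VV$ with $\langle p,v\rangle=1$ and $u\in p^\perp$ whose class $\bar u\in\VV^\top$ is nonzero, and consider $u\odot v$. For $a=p^\sharp\curlywedge w\in\fa$ with $w\in p^\perp$, formula~\eqref{eq:curlywedge} gives
\[
a\cdot v=\eta(w,v)p^\sharp - w, \qquad a\cdot u=\eta(w,u)p^\sharp .
\]
Therefore
\[
a\cdot(u\odot v)\equiv -\,u\odot w \pmod{L_p\odot\VV}.
\]
The element $u\odot w$ lies in $\sym{2}p^\perp$, and its image in $\sym{2}\VV^\top$ is $\bar u\odot\bar w$. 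Choosing $w$ with $\bar w\neq 0$, this image is nonzero, since $\sym{2}\VV^\top$ has no zero divisors of this form. So $\fa$ acts nontrivially and $\Hrel^2(p)$ is not unitary.

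For the maximality statement, the same computation gives a linear description. Take a general class whose projection to $\VV^\top$ under~\eqref{eq:H2rel-SES-duality} is some $\bar u\neq 0$. Acting by $a$ changes it by $-\bar u\odot\bar w$ modulo terms coming from the $\sym{2}\VV^\top$ part. Those terms are killed by $\fa$, but one must check they cannot cancel $\bar u\odot\bar w$ for every $w$. To handle this, note that the $\fa$-action on a class $\xi$ defines a map $\fa\cong\VV^\top\to\sym{2}\VV^\top$, and on the $\sym{2}\VV^\top$ part of $\xi$ this map vanishes. The map is therefore $\bar w\mapsto -\bar u\odot\bar w$, which is nonzero whenever $\bar u\neq0$. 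So no class with nonzero image in the quotient $\VV^\top$ is $\fa$-invariant, which gives $\Hrel^2(p)^{\fa}=\ext{2}\VV^\top\oplus\sym{2}\VV^\top$.

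I expect the main obstacle to be the bookkeeping in this last step: verifying that the $\fa$-action on $(p^\perp\odot\VV)/(L_p\odot\VV)$ modulo $\sym{2}\VV^\top$ is trivial, while the induced map $\VV^\top\otimes\fa\to\sym{2}\VV^\top$ is the injective-in-$\bar u$ symmetric product. Doing this carefully, independently of the choice of $v$, is where the care lies.
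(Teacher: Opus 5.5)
Your proposal is correct and follows essentially the paper's argument. The paper computes the action of $\fa$ on $p^\perp\odot\VV$, finds a term in $\sym{2}p^\perp$ that is not in $L_p\odot\VV$, and takes the maximal unitary submodule to be the image of $L_p\odot\VV+\sym{2}p^\perp$, namely $\sym{2}\VV^\top$ (together with $\ext{2}\VV^\top$); the only difference is that you prove maximality by the explicit formula $a\cdot[u\odot v]=-[u\odot w]$ with $\bar u\odot\bar w\neq 0$, whereas the paper reads this largest submodule off the Hasse diagram~\eqref{eq:VsymV-as-H-mod}, which makes your version slightly more self-contained.
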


\begin{proof}
  We have seen that $\ext{2}\VV^\top$ is a unitary $H$-module.  The
  other part of the cohomology is
  \begin{equation}
    \label{eq:h2rel-h-mods}
    \dfrac{p^\perp \odot \VV}{L_p \odot \VV}.
  \end{equation}
  The action of the ideal $\fa \subset \h$ on
  $p^\perp \odot \VV$ is given, for $u\in p^\circ$, $v \in p^\perp$
  and $w \in \VV$, by
  \begin{equation}
    (p^\sharp \curlywedge u) (v \odot w) = \eta(u,v) p^\sharp \odot w +
    \eta(u,w) v \odot p^\sharp - \eta(p^\sharp,w) v \odot w.
  \end{equation}
  The first two terms belong to $L_p \odot \VV$, but the last term
  belongs to $\sym{2} p^\perp$.  From
  equation~\eqref{eq:h2rel-h-mods}, the coboundaries correspond to
  the subrepresentation $L_p\odot \VV \subset p^\perp \odot
  \VV$. Since $\sym{2} p^\perp \not\subset L_p \odot \VV$, we conclude
  that $\fa$ does not act trivially on the
  cohomology and hence the cohomology is \emph{not} a unitary
  representation of $H$.
  
  To determine the maximal unitary submodule we must find the largest
  submodule of the cocycles which is mapped to coboundaries under
  $\fa$. It follows from the complex versions
  of the Hasse diagram~\eqref{eq:VsymV-as-H-mod} and the action
  \eqref{eq:action-of-ideal-on-sym2} of
  $\fa$ on the submodules of
  $\sym{2}\VV$ that the largest submodule of the
  module $p^\perp \odot \VV$ of $2$-cocycles which is mapped under
  $\fa$ to the module $L_p \odot \VV$ of
  $2$-coboundaries is $L_p \odot \VV + \sym{2}p^\perp$.  Its image in
  the cohomology is
  \begin{equation}
    \dfrac{L_p \odot \VV + \sym{2}p^\perp}{L_p \odot \VV} \cong
    \dfrac{\sym{2}p^\perp}{(L_p \odot \VV) \cap \sym{2}p^\perp} =
    \dfrac{\sym{2}p^\perp}{L_p \odot p^\perp} \cong \sym{2}\VV^\top,
  \end{equation}
  where the first isomorphism is the Second Isomorphism Theorem for
  modules \cite[Theorem~4(2), Section~10.2]{DummitFoote} and the last
  isomorphism is the complex version of the first isomorphism in
  Lemma~\ref{lem:iso-h2rel-symmetric-submodule}.  This shows that the
  maximal unitary submodule of $\Hrel^2(p)$ is isomorphic to
  $\ext{2}\VV^\top \oplus \sym{2}\VV^\top$.
\end{proof}

Since $\Hrel^2(p)$ is not unitary, neither is $\sH^2(p) \cong
\Hrel^2(p)$ by Proposition~\ref{prop:Summary-H}. If we identify
the absolute cohomology at ghost number 2 with the physical spectrum
of the ambitwistor string, as is usually the case, Proposition
\ref{prop:Hrel2_not_unitary} proves the non-unitarity of the spectrum.
This is in keeping with the results of Berkovits and Lize
\cite{Berkovits:2018jvm}. However, Proposition~\ref{prop:Summary-H}
tells us that the absolute cohomology at ghost number 4 is also
non-trivial and not isomorphic as $H$-modules to that at ghost number
2. Hence, there is no reason to rule out the possibility that $\sH^4(p)$
described correctly the ambitwistor string spectrum.  We therefore
study $\sH^4(p) \cong \Hrel^3(p)$ as a $H$-module in the next
subsection, but in Section \ref{sec:phys-interpr-spectr}, we explain
why we still think that $\sH^2(p)$ describes correctly the spectrum
and what information $\sH^4(p)$ is giving us.

\subsection{$\Hrel^3(p)$ as a $\Stab(p)$-module: non-unitarity and
  Poincaré duality}
\label{sec:rep-theory-hrel3}

In this section we interpret $\Hrel^3(p)$ as an $H$-module.  We show
that $\Hrel^2(p)$ and $\Hrel^3(p)$ are dual $H$-modules.

\subsubsection{Poincaré duality between $\Hrel^2(p)$ and $\Hrel^3(p)$}
\label{sec:poinc-dual-betw}

As shown in Appendix~\ref{app:hrel3-details}, $\Hrel^3(p)$ for $p^2 =
0$ (but $p\neq 0$) has representative cocycles
\begin{equation}
  G_{\mu\nu} c C \d C \d X^\mu \d X^\nu e^{i p \cdot X} - \tfrac1{20} \tr
  G \left( c C \d C \d X \cdot \d X e^{i p\cdot X} - C \d C \d^2 C e^ {i p \cdot X}\right),
\end{equation}
and
\begin{equation}
  F_{\mu\nu} c C \d C \d X^\mu \Pi^\nu e^{i p \cdot X},
\end{equation}
where $G_{\mu\nu} = G_{\nu\mu}$ obeys $p^\nu G_{\mu\nu}= 0$ and
$F_{\mu\nu} = -F_{\nu\mu}$ obeys $p^\nu F_{\mu\nu} = 0$.  The
corresponding coboundaries are
\begin{equation}
  \delta G_{\mu\nu} = p_\mu p_\nu \vartheta \qquad\text{and}\qquad
  \delta F_{\mu\nu} = i p_{[\mu} \omega_{\nu]},
\end{equation}
where $p \cdot \omega = 0$.

In terms of $H$-modules, $G \in \sym{2}\VV$ belongs to the kernel
of $\imath_p : \sym{2}\VV \to \VV$ and hence $G \in \sym{2} p^\perp$,
whereas $\delta G \in L_p \odot L_p$.  Similarly, $F \in \ext{2}\VV$ belongs to
the kernel of $\imath_p : \sym{2} \VV \to p^\perp$, so that $F \in
\ext{2}p^\perp$, whereas $\delta F \in L_p \wedge p^\perp$.  In other
words, we have an isomorphism of $H$-modules
\begin{equation}
  \label{eq:hrel3-as-h-mod-v1}
  \Hrel^3(p) \cong \dfrac{\ext{2}p^\perp}{L_p \wedge p^\perp}\oplus
\dfrac{\sym{2}p^\perp}{L_p \odot L_p}.
\end{equation}

\begin{proposition} \label{prop:Hrel3_not_unitary}
  $\Hrel^3(p)$ is not a unitary $H$-module and its maximal
  unitary submodule is isomorphic to $\ext{2}\VV^\top \oplus \VV^\top$.
\end{proposition}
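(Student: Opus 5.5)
The plan mirrors the proof of Proposition~\ref{prop:Hrel2_not_unitary}, starting from the isomorphism~\eqref{eq:hrel3-as-h-mod-v1}. The skew-symmetric summand $\ext{2}p^\perp/(L_p\wedge p^\perp)$ is isomorphic to $\ext{2}\VV^\top$ by the complex version of Lemma~\ref{lem:iso-h2rel-skewsymmetric}. It is unitary, because $\fa$ maps $\ext{2}p^\perp$ into $L_p\wedge p^\perp$. Everything therefore rests on the symmetric summand $\sym{2}p^\perp/(L_p\odot L_p)$, which is $324$-dimensional and built from the complex filtration $L_p\odot L_p\subset L_p\odot p^\perp\subset\sym{2}p^\perp$. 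Since a finite-dimensional unitary $H$-module must have $\fa$ acting trivially, the maximal unitary submodule of the cohomology is the image of the largest submodule $N$ with $L_p\odot L_p\subseteq N\subseteq\sym{2}p^\perp$ and $\fa N\subseteq L_p\odot L_p$.

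\textbf{Non-unitarity.} Take $u\in p^\circ$ and $v,w\in p^\perp$. Since $\eta(p^\sharp,w)=0$, the action is
\[
(p^\sharp\curlywedge u)(v\odot w)=\eta(u,v)\,p^\sharp\odot w+\eta(u,w)\,v\odot p^\sharp .
\]
This lies in $L_p\odot p^\perp$, as recorded in~\eqref{eq:action-of-ideal-on-sym2}. It generally does not lie in $L_p\odot L_p$. For example, choose $v=w=u$ with $u$ transverse, i.e.\ $\bar u\neq 0$; then the result is $2\eta(u,u)\,p^\sharp\odot u\notin L_p\odot L_p$. So $\fa$ acts nontrivially on $\Hrel^3(p)$, and $\Hrel^3(p)$ is not unitary.

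\textbf{Maximal unitary submodule.} First, $N=L_p\odot p^\perp$ works: for $w\in p^\perp$,
\[
(p^\sharp\curlywedge u)(p^\sharp\odot w)=\eta(u,w)\,p^\sharp\odot p^\sharp\in L_p\odot L_p .
\]
It remains to show that no larger $N$ works, and this is the main obstacle. A submodule strictly larger than $L_p\odot p^\perp$ has nonzero image in the quotient $\sym{2}p^\perp/(L_p\odot p^\perp)\cong\sym{2}\VV^\top$ (Lemma~\ref{lem:iso-h2rel-symmetric-submodule}). That quotient is a $K$-module decomposing as $\CC\oplus\sym{2}_0\VV^\top$, so $N$ contains an element $s$ whose image is a nonzero element of the trace part or of $\sym{2}_0\VV^\top$.

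The claim is that some $p^\sharp\curlywedge u$ sends such an $s$ outside $L_p\odot L_p$. The image of $\fa$ modulo $L_p\odot L_p$ defines an $H$-equivariant map
\[
\fa\otimes\sym{2}\VV^\top\cong\VV^\top\otimes\sym{2}\VV^\top\longrightarrow (L_p\odot p^\perp)/(L_p\odot L_p)\cong\VV^\top .
\]
By the formula above, modulo $L_p\odot L_p$ this map is the contraction
\[
\bar u\otimes \bar v\odot\bar w\;\longmapsto\; 2\,\mathrm{sym}\bigl(\bar\eta(\bar u,\bar v)\,\bar w\bigr).
\]
Contraction with $\bar u$ is nonzero on every nonzero symmetric tensor, since a symmetric bilinear form killed by all contractions vanishes. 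So no element outside $L_p\odot p^\perp$ can be added, and $N=L_p\odot p^\perp$ is maximal.

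Its image in the cohomology is
\[
\frac{L_p\odot p^\perp}{L_p\odot L_p}\cong\VV^\top
\]
by the complex version of Proposition~\ref{prop:assoc-graded-symmetric-modules}(2). Combined with the skew-symmetric summand, the maximal unitary submodule of $\Hrel^3(p)$ is $\ext{2}\VV^\top\oplus\VV^\top$. One bookkeeping point remains. The notion of maximal unitary submodule is well defined, because the submodules on which $\fa$ acts trivially are closed under sums. Since $\fa$ preserves each summand of~\eqref{eq:hrel3-as-h-mod-v1}, the maximal one is the direct sum of the maximal ones in each summand.
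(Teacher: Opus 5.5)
Your proof is correct and follows the paper's argument. You use the decomposition \eqref{eq:hrel3-as-h-mod-v1}, observe that $\fa$ sends $\sym{2}p^\perp$ into $L_p\odot p^\perp\not\subset L_p\odot L_p$, and identify the largest submodule sent by $\fa$ into $L_p\odot L_p$ as $L_p\odot p^\perp$, with image $\VV^\top$. The paper reads off maximality from the Hasse diagram; your argument proves it, since $\fa$ induces the contraction $\sym{2}\VV^\top\to\Hom(\fa,\VV^\top)$, which is injective because $\bar\eta$ is nondegenerate. (State that step as: for each nonzero $S$ there exists some $\bar u$ with $\iota_{\bar u}S\neq 0$. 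As written, it reads as if a single fixed $\bar u$ works for every $S$.)
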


\begin{proof}
  The non-unitarity follows from the fact that
  \begin{equation}
    \fa\colon \sym{2}p^\perp \to L_p \odot p^\perp \not\subset L_p \odot L_p,
  \end{equation}
  and hence $\fa$ does not act trivially on
  $\dfrac{\sym{2}p^\perp}{L_p \odot L_p}$.  The maximal unitary
  submodule contains $\dfrac{\ext{2}p^\perp}{L_p \wedge p^\perp} \cong
  \ext{2} \VV^\top$ together with the image in
  cohomology of the largest submodule of $\sym{2}p^\perp$ which is
  mapped to $L_p \odot L_p$ under the action of $\ell_p \curlywedge
  p^\circ$.  From the complex version of the Hasse
  diagram~\eqref{eq:VsymV-as-H-mod} it follows that this submodule is $L_p \odot p^\perp
  \subset \sym{2}p^\perp$, whose image in cohomology is
  \begin{equation}
    \dfrac{L_p \odot p^\perp}{L_p \odot L_p} \cong \VV^\top,
  \end{equation}
  where the isomorphism follows from the complex version of
  Proposition~\ref{prop:assoc-graded-symmetric-modules}.
\end{proof}

This result also follows as a consequence of Poincaré duality.

\begin{proposition}[Poincaré duality]
  \label{prop:poincare-duality}
  There is an $H$-module isomorphism $\Hrel^2(p) \cong \Hrel^3(p)^*$.
\end{proposition}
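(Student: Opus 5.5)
The plan is to work with the explicit $H$-module descriptions already obtained, namely \eqref{eq:Hrel2-as-H-mod} and \eqref{eq:hrel3-as-h-mod-v1},
\begin{equation*}
  \Hrel^2(p) \cong \dfrac{\ext{2}p^\perp}{L_p\wedge p^\perp} \oplus \dfrac{p^\perp\odot\VV}{L_p\odot\VV},
  \qquad
  \Hrel^3(p) \cong \dfrac{\ext{2}p^\perp}{L_p\wedge p^\perp} \oplus \dfrac{\sym{2}p^\perp}{L_p\odot L_p}.
\end{equation*}
We prove the duality summand by summand, using the Lorentz-invariant (hence $H$-invariant) complex bilinear pairings on $\ext{2}\VV$ and $\sym{2}\VV$ induced by $\eta$. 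For the skew-symmetric summands, Lemma~\ref{lem:iso-h2rel-skewsymmetric} identifies both with $\ext{2}\VV^\top$. The Gram-determinant form \eqref{eq:gram-det} is a nondegenerate $H$-invariant bilinear form on $\ext{2}\VV^\top$, so $\ext{2}\VV^\top \cong (\ext{2}\VV^\top)^*$ as $H$-modules. That summand therefore needs nothing further.

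The real content is the symmetric summand. We show that $\eta$ induces a nondegenerate $H$-invariant pairing
\begin{equation*}
  \dfrac{p^\perp\odot\VV}{L_p\odot\VV} \times \dfrac{\sym{2}p^\perp}{L_p\odot L_p} \longrightarrow \CC,
  \qquad (G,G') \mapsto G_{\mu\nu}G'^{\mu\nu}.
\end{equation*}
This has three parts.

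\emph{Well-definedness.} If $G = p\odot \chi$, then $\langle G, G'\rangle \propto \imath_p G'\cdot\chi = 0$, because $G' \in \sym{2}p^\perp = \ker\imath_p$. If $G' = p\odot p$, then $\langle G,G'\rangle = p^\mu p^\nu G_{\mu\nu} = 0$, because $G\in p^\perp\odot\VV$, which is exactly the cocycle condition \eqref{eq:field-eqns-H2rel}.

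\emph{Invariance.} This is automatic, since $\eta$ is Lorentz invariant.

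\emph{Nondegeneracy.} Both spaces are $324$-dimensional, so it suffices to show that the left radical is trivial. The annihilator of $\sym{2}p^\perp$ in $\sym{2}\VV$ under the trace pairing is $L_p\odot\VV$. This follows by a dimension count: $351-325=26=\dim L_p\odot\VV$, together with the evident inclusion. Hence anything in $p^\perp\odot\VV$ pairing to zero with all of $\sym{2}p^\perp$ already lies in $L_p\odot\VV$.

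Combining the two summands gives an $H$-equivariant isomorphism $\Hrel^2(p)\to\Hrel^3(p)^*$.

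The main obstacle I expect is that the descriptions \eqref{eq:Hrel2-as-H-mod} and \eqref{eq:hrel3-as-h-mod-v1} are isomorphisms of $H$-modules only after specific cocycle representatives have been chosen. For $\Hrel^3(p)$ this includes the trace correction term $-\tfrac1{20}\tr G(\cdots)$. I must make sure these identifications really are $H$-equivariant, not merely $K$-equivariant, so that the pairing defined on coefficient tensors descends to an invariant pairing on cohomology.

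If this becomes delicate, a more intrinsic alternative is available: define the pairing directly on cochains $\Crel^2(p)\times\Crel^3(-p)$ via the top-ghost-number component, i.e.\ the coefficient of $C_0c_1C_1c_{-1}C_{-1}$. One then checks that $d$ is skew-adjoint for it, and identifies $\Crel^3(-p)$ with the complex conjugate and dual. Here I would first try the tensor-level argument above, and verify equivariance by checking the $\fa$-action using the formula in the proof of Proposition~\ref{prop:Hrel2_not_unitary}.

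As a consistency check, the pairing should exchange the filtrations. The maximal unitary submodule $\sym{2}\VV^\top$ of the symmetric summand of $\Hrel^2(p)$ should be the annihilator of the quotient $\sym{2}p^\perp/(L_p\odot p^\perp)\cong\sym{2}\VV^\top$. Dually, the quotient $\VV^\top$ in Proposition~\ref{prop:H2rel-SES-duality} should pair with the submodule $(L_p\odot p^\perp)/(L_p\odot L_p)\cong\VV^\top$ of Proposition~\ref{prop:Hrel3_not_unitary}. This recovers the latter proposition from the former.
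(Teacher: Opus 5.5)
Your proposal is correct and follows essentially the paper's own proof. Like the paper, you argue summand by summand using the descriptions \eqref{eq:Hrel2-as-H-mod} and \eqref{eq:hrel3-as-h-mod-v1}, the Gram-determinant form on $\ext{2}\VV^\top$, and the $\eta$-induced pairing $\sym{2}p^\perp \times (p^\perp\odot\VV) \to \CC$ descending to the two $324$-dimensional quotients. The only difference is that you prove nondegeneracy covariantly, by the dimension count identifying $L_p\odot\VV$ as the annihilator of $\sym{2}p^\perp$, whereas the paper checks it on an explicit Witt-frame basis.
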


\begin{proof}
  This is equivalent to the existence of a non-degenerate
  $H$-equivariant pairing
  \begin{equation}
    \Hrel^2(p) \times \Hrel^3(p) \to \CC
  \end{equation}
  which we will now exhibit.  We have that
  \begin{equation}
    \Hrel^2(p) \cong \dfrac{\ext{2} p^\perp}{L_p \wedge p^\perp} \oplus \dfrac{\sym{2} p^\perp}{L_p \odot L_p}
  \end{equation}
  whereas
  \begin{equation}
    \Hrel^3(p) \cong \dfrac{\ext{2} p^\perp}{L_p \wedge p^\perp} \oplus \dfrac{p^\perp \odot \VV}{L_p \odot \VV}.
  \end{equation}
  It follows from Lemma~\ref{lem:iso-h2rel-skewsymmetric} that
  \begin{equation}
    \dfrac{\ext{2}p^\perp}{L_p \wedge p^\perp} \cong \ext{2} \VV^\top,
  \end{equation}
  which inherits a positive-definite $H$-invariant inner product from
  the euclidean inner product on $\VV^\top$ via the Gram
  determinant~\eqref{eq:gram-det}.  So we must now exhibit a
  non-degenerate pairing between
  \begin{equation}
     \dfrac{\sym{2} p^\perp}{L_p \odot L_p} \qquad\text{and}\qquad
     \dfrac{p^\perp \odot \VV}{L_p \odot \VV}.
  \end{equation}
  The complex-bilinear inner product on $\sym{2}\VV$ induced by $\eta$
  via the symmetric analogue of the Gram determinant gives a dual
  pairing
  \begin{equation}
    \sym{2}\VV \times \sym{2}\VV \to \CC,
  \end{equation}
  which we may restrict to a bilinear map
  \begin{equation}
    \sym{2} p^\perp \times p^\perp \odot \VV \to \CC.
  \end{equation}
  Explicitly, if $w_1,w_2,w_3 \in p^\perp$ and $v \in \VV$, this map
  sends
  \begin{equation}
    (w_1 \odot w_2, w_3 \odot v) \mapsto \eta(w_1,w_3) \eta(w_2,v) + \eta(w_2,w_3) \eta(w_1,v).
  \end{equation}
  Clearly if $w_1, w_2 \in L_p$, then this is zero for all $w_3 \in
  p^\perp$ and $v \in \VV$ and, similarly, if $w_3 \in L_p$, then this
  is zero for all $w_1,w_2 \in p^\perp$ and any $v \in \VV$.  Therefore
  we see that the above bilinear map descends to a bilinear map
  \begin{equation}
    \left<-,-\right> : \dfrac{\sym{2} p^\perp}{L_p \odot L_p} \times \dfrac{p^\perp \odot \VV}{L_p \odot \VV} \to \CC.
  \end{equation}
  We can show that this pairing is non-degenerate by explicitly
  writing it down in a basis.  A Witt frame $\be_+,\be_i,
  \be_-$ for $\VV$ gives a natural basis $\be_+ \odot \be_+, \be_+
  \odot \be_i$, $\be_i \odot \be_j$ (for $i\leq j$) for $\sym{2}p^\perp$
  and $\be_+\odot \be_+, \be_+ \odot \be_i, \be_+\odot \be_-, \be_i
  \odot \be_j, \be_i \odot \be_-$ (for $i\leq j$) for $p^\perp \odot
  \VV$. Letting bars denote the projection to the relevant quotient, a
  basis for $\sym{2}p^\perp/L_p \odot L_p$ is given by
  $\overline{\be_+ \odot \be_i}, \overline{\be_i \odot \be_j}$ (for 
  $i\leq j$), whereas a basis for $(p^\perp \odot \VV)/(L_p \odot
  \VV)$ is given by $\overline{\be_- \odot \be_i}, \overline{\be_i
    \odot \be_j}$ (for $i\leq j$).  The above pairing on these basis
  elements is given by
  \begin{equation}
    \begin{split}
      \left<\overline{\be_+ \odot \be_i}, \overline{\be_- \odot
          \be_j}\right> &= \eta(\be_+,\be_-) \eta(\be_i, \be_j) = \delta_{ij}\\
      \left<\overline{\be_i \odot \be_j}, \overline{\be_k \odot
          \be_\ell}\right> &= \eta(\be_i, \be_k) \eta(\be_j,\be_\ell)
      +  \eta(\be_i, \be_\ell) \eta(\be_j,\be_k) = \delta_{ik}\delta_{j\ell} +  \delta_{i\ell} \delta_{jk},
  \end{split}
\end{equation}
which is clearly non-degenerate.  We conclude that $\Hrel^2(p)$ and
$\Hrel^3(p)$ are dual $H$-modules.
\end{proof}

\subsubsection{Another cocycle representative}
\label{sec:other-cocycle-reps}

In Appendix~\ref{app:hrel3-details} we exhibit an alternative cocycle
representative for the symmetric cohomology in $\Hrel^3(p)$; namely,
\begin{equation}
  \label{eq:hrel3-symmetric-alt-cocycle}
  G_{\mu\nu} c C \d C \d X^\mu \d X^\nu e^{i p \cdot X} + A_\mu c C  \d (\d C \d X^\mu) e^{i p\cdot X}
\end{equation}
subject to $p^\nu G_{\mu\nu} = 0$ and $\tfrac12 \tr G + i p \cdot A =
0$ with coboundaries $\delta G_{\mu\nu} = i p_{(\mu} \omega_{\nu)}$
and $\delta A_\mu = \omega_\mu -i \alpha p_\mu$, with $p\cdot \omega
= 0$.  It is convenient to define $\chi \in p^\perp$ by $\chi_\mu :=
\omega_\mu - i \alpha p_\mu$.  This then has $\delta A_\mu = \chi_\mu$
and $\delta G_{\mu\nu} = i p_{(\mu} \chi_{\nu)} + \alpha p_\mu p_\nu$.

Representation-theoretically, the pair $(G, A)$ belongs
to the $H$-module $\sym{2}p^\perp \oplus \VV$.  Let $\tau \colon
\sym{2}p^\perp \oplus \VV \to \CC$ be the $H$-equivariant linear map
such that $\tau(G,A) = \tfrac12 \tr G + i p \cdot A$.  The expression
in~\eqref{eq:hrel3-symmetric-alt-cocycle} is a cocycle precisely when
$(G,A) \in \ker \tau$.  Let $\lambda \colon p^\perp \oplus L_p \to \sym{2}
p^\perp \oplus \VV$ be the $H$-equivariant linear map sending
$(\chi,\alpha p^\sharp) \mapsto (i p^\sharp \odot\chi + \alpha
p^\sharp \odot p^\sharp , \chi)$.  A cocycle $(G,A)$  is a coboundary
precisely when it belongs to $\im \lambda$.  Since $\tau \circ \lambda
= 0$, these two maps define a complex
\begin{equation}
  \begin{tikzcd}
    0 \arrow[r] & p^\perp \oplus L_p \arrow[r,"\lambda"] & \sym{2} p^\perp \oplus
    \VV \arrow[r, "\tau"] & \CC \arrow[r] & 0,
  \end{tikzcd}
\end{equation}
whose cohomology $\ker \tau/\im \lambda$ is isomorphic to the
symmetric part of $\Hrel^3(p)$.  We notice that
\begin{equation}
  \im \lambda \subset \sym{2}_0 p^\perp \oplus p^\perp \subset \ker \tau.
\end{equation}
We decompose $G$ into traceless and trace:
\begin{equation}
\sym{2}\VV \ni  G_{\mu\nu} = G_{\left<\mu\nu\right>} + \tfrac1{26}
\eta_{\mu\nu} \tr G \in \sym{2}_0 \VV \oplus \CC \eta
\end{equation}
and hence the symmetric part of $\Hrel^3(p)$ is given by
\begin{equation}
  \dfrac{\sym{2}_0 p^\perp \oplus p^\perp}{\lambda(p^\perp \oplus L_p)} \oplus
  \ker\left( \CC\eta \oplus \VV/p^\perp \xrightarrow{\tau'} \CC \right),
\end{equation}
where $\tau'(\tfrac1{26} \eta_{\mu\nu} \tr G, A \mod p^\perp) = \tfrac12 \tr
G + i p \cdot A$ is induced by the old map $\tau$, by modifying its
domain.  If $(G,A) \in \sym{2}_0 p^\perp \oplus p^\perp$, then we can
add the coboundary $\lambda(-A,0)$ to bring it to the form
$(G - i p^\sharp \odot A, 0) \in \sym{2}_0p^\perp$.  This still leaves
coboundaries in $L_p \odot L_p$.  The kernel of $\CC\eta \oplus
\VV/p^\perp \xrightarrow{\tau'} \CC$ is a trivial $H$-module since so
are the codomain and domain of $\tau'$.  Therefore we conclude with a
slightly different description of $\Hrel^3(p)$ as an $H$-module:
namely,
\begin{equation}
  \label{eq:hrel3-as-h-mod-v2}
  \Hrel^3(p) \cong \ext{2}\VV^\top \oplus \dfrac{\sym{2}_0
    p^\perp}{L_p \odot L_p} \oplus \CC.
\end{equation}
Of course, since $L_p \odot L_p \subset \sym{2}_0 p^\perp$, we have
that
\begin{equation}
  \dfrac{\sym{2} p^\perp}{L_p \odot L_p} =   \dfrac{\sym{2}_0 p^\perp
    \oplus \CC \bar\eta}{L_p \odot L_p}\cong  \dfrac{\sym{2}_0
    p^\perp}{L_p \odot L_p} \oplus \CC,
\end{equation}
showing that the two descriptions of $\Hrel^3(p)$ given by
equations~\eqref{eq:hrel3-as-h-mod-v2} and
\eqref{eq:hrel3-as-h-mod-v1} are of course isomorphic.

\subsection{Physical interpretation of the spectrum}
\label{sec:phys-interpr-spectr}

Let us summarise the key results from which we try to infer some physics,
keeping in mind that $p^2 = 0$, $p \neq 0$.
First, Corollary \ref{cor:H-on-shell} tells us that as $K$-modules,
the full BRST cohomology is given by
\begin{equation}
  \sH^n(p) \cong
  \begin{cases}
    0 & n = 0,1,5,6\\
    \CC \oplus \VV^\top \oplus \sym{2}_0 \VV^\top \oplus \ext{2} \VV^\top &   n=2,4\\
    2 \CC \oplus 2 \VV^\top \oplus 2 \sym{2}_0 \VV^\top \oplus 2 \ext{2} \VV^\top &   n=3.
  \end{cases}
\end{equation}
From the above, it is natural to interpret the cohomology at ghost
number $2$ to be the physical spectrum of the ambitwistor string, with
the cohomology at ghost number $3$ simply double-counting it, and ghost
number $4$ simply duplicating it. However, when studying BRST cohomology
as a $K$-module, we lose all information on how the non-compact part
of $H$ acts on the cohomology, and thus on the states in the spectrum.
Thus, to deduce the $H$-module structure of $\sH^\bullet(p)$, we
proceeded to study $\Hrel^2(p)$ and $\Hrel^3(p)$ as $H$-modules, as
in Proposition \ref{prop:Summary-H}. However, there are two key
differences compared to the $K$-module setting:
\begin{enumerate}
\item $\sH^4(p) \cong \sH^2(p)$ as $K$-modules, but $\sH^4(p)\cong
  \sH^2(p)^*$ as $H$-modules. This is an immediate consequence of the
  fact that $\Hrel^2(p) \cong \Hrel^3(p)^*$ as $H$-modules, while
  $\Hrel^2(p) \cong \Hrel^3(p)$ as $K$-modules.

\item $\sH^3(p) \cong 2\sH^2(p)$ as $K$-modules, but as $H$-modules,
  we have only shown that $\sH^3(p)$ is an extension of $\Hrel^2(p)$
  by $\Hrel^3(p)$.
\end{enumerate}
We understand the first difference as a generalised relationship
between $\sH^2(p)$ and $\sH^4(p)$ that accounts for the non-unitarity
(and thus non-self-duality) of $\sH^2(p)\cong\Hrel^2(p)$. But this
begs the following question: Which ghost number ($2$ or $4$) of 
the full BRST cohomology contains the physical states? 
Given that the ambitwistor string is a closed string,
we suggest that it is more natural to consider $\sH^2(p)$ as the
correct description of the spectrum of the ambitwistor string. Hence,
it is sensible to expect massless (c.f. Proposition \ref{prop:on-shell})
closed string states in the spectrum. But Proposition \ref{prop:Hrel2_not_unitary} 
tells us that the inducing representations for these: the graviton ($\sym{2}_0\VV^\top$), 
the Kalb--Ramond field ($\ext{2}\VV^\top$)
and the dilaton (the trace part of $\sym{2}\VV^\top$),
form the maximal unitary submodule $\ext{2}\VV^\top \oplus \sym{2}\VV^\top$
contained in $\sH^2(p)\cong \Hrel^2(p)$.  Furthermore,  taking
$\sH^2(p)$ to be the correct description  of the ambitwistor spectrum
extends the widely accepted notion that the spectrum contains the
massless closed bosonic string states. This does, however, imply that
the spectrum is strictly larger than just the space spanned by the
massless closed bosonic state, and that the spectrum is indeed
non-unitary, in line with the findings of \cite{Berkovits:2018jvm}. On
the other hand, Proposition \ref{prop:Hrel3_not_unitary} implies that
the maximal unitary submodule of $\sH^4(p) \cong \Hrel^3(p)$
corresponds to the inducing representation for the Poincaré UIRs
describing the Kalb--Ramond field ($\ext{2}\VV^T$) and the photon
($\VV^\top$). Not only is the latter an open string state, but the
inducing representation for the ``graviton+dilaton'' is not a
subrepresentation of BRST cohomology at ghost number $4$ at all.
For these reasons, we regard the BRST cohomology at ghost number $2$ 
to be the correct description of the physical spectrum of the bosonic
ambitwistor string.

Closely related is the second difference compared to the $K$-module
setting. Notice that the double-counting that occurs in $\sH^3(p)$ in
this case is a direct consequence of the unitarity and self-duality of
each of the $K$-modules that appear in $\sH^2(p)$. Specifically, the
short exact sequence \eqref{eq:Hrel3-H3-Hrel2-SES} splits when viewed
as $K$-modules and we are able to conclude that
\begin{equation}
  \sH^3(p) \cong \Hrel^2(p) \oplus \Hrel^3(p) \cong \sH^2(p) \oplus \sH^4(p).
\end{equation}
Thus, the fact that $\sH^3(p) \cong 2\sH^2(p)$ is a corollary of the
unitarity (and self-duality) of $\sH^2(p)$ which leads to the
$K$-module isomorphism $\sH^4(p) \cong \sH^2(p)$. Consider now the
$H$-module setting, where $\sH^4(p) \cong \sH^2(p)^*$ instead and
non-unitarity of $\sH^2(p)$ implies that $\sH^2(p) \ncong
\sH^2(p)^*$. The most naive and intuitive generalisation of the
double-counting present in the $K$-module case leads to the following
conjecture, for which we must confess to having no other evidence:
\begin{conjecture}
    The short exact sequence \eqref{eq:Hrel3-H3-Hrel2-SES} splits as a $H$-module, and thus,
    \begin{equation}
        \sH^3(p) \cong \Hrel^2(p) \oplus \Hrel^2(p)^* \cong \sH^2(p) \oplus \sH^2(p)^*.
    \end{equation}
\end{conjecture}

\section{Conclusions and outlook}
\label{sec:conclusions-outlook}

In this paper, we have revisited the BRST cohomology of the bosonic ambitwistor string 
from an algebraic lens, using the groundwork laid in \cite{Figueroa-OFarrill:2024wgs}, as the semi-infinite cohomology of its worldsheet symmetry algebra (i.e., BMS$_3$)
with values in the BMS$_3$-module spanned by its matter sector. 
Our main results
involve the determination of the (relative) cohomology at momentum $p$
as a representation of the stabiliser $\Stab(p)$ of that momentum, as
befits the interpretation of the cohomology as inducing
representations for representations of the Poincaré group.  This
allows us to prove a number of structural results and in particular
understand the precise non-unitarity of the representation as well as
the relation between the cohomologies at different ghost numbers.  In
summary, we interpret $\sH^2(p)$ as the physical spectrum of the
bosonic ambitwistor string and thus we conclude that the spectrum is
non-unitary and contains the massless level of the closed bosonic
string as the maximal unitary submodule.  An immediate algebraic
consequence of this non-unitarity is that $\sH^4(p) \cong \sH^2(p)^*$
as $H$-modules. This generalises the expected relation
$\sH^4(p) \cong \sH^2(p)$ to the setting of non-unitary (and thus
non-self-dual) $H$-modules. Extending this generalisation naively, we
conjecture that $\sH^3(p) \cong \sH^2(p) \oplus \sH^2(p)^*$.

There are a number of open problems we would like to solve:
\begin{itemize}
\item Elucidating whether or not $\sH^3(p)$ is a trivial extension of
  $\Hrel^2(p)$ by $\Hrel^3(p)$.
\item Constructing an algebraic proof of Poincaré duality for both the relative and
  absolute cohomologies. In principle, this could be done by
  constructing a Hodge star operator using the volume forms $cC\d C
  \d^2 c \d^2 C \in \Crel^5(p)$ an $c C \d c \d C \d^2 c \d^2 C \in
  \sC^6(p)$, but in practice, it is quite tedious.
\item Applying our algebraic approach to calculating the spectrum of the
  Type II ambitwistor string.
\end{itemize}
We may turn to some of these problems in the future.

\section*{Acknowledgments}

We would like to thank Tim Adamo for insightful discussions on the bosonic 
ambitwistor string and for careful readings of an earlier draft of
this manuscript and of the present draft.  In addition, JMF would like
to thank Emil Have and Niels Obers for many enlightening and
entertaining discussions about non-lorentzian string theory and an
enjoyable collaboration leading up to \cite{Figueroa-OFarrill:2025njv}.

\begin{appendices}
\section{The action of the BRST differential on $\Crel^\bullet(p)$}
\label{app:calculation of d on Crel}
In this appendix, we give explicit expressions for the basis fields of $\Crel^\bullet(p)$ and the action of the BRST differential on each of them. We divide our calculations by ghost number and distinguish the basis fields via the representation of the Lorentz group that they carry.

$\Crel^0(p)$ is one-dimensional and spanned by $W_p$, and its BRST differential is
given by
\begin{equation}
  dW_p = i p_\mu c \d X^\mu W_p - i p^\mu C \Pi_\mu W_p + \tfrac12 p^2 \d C W_p.
\end{equation}
From now on, we will omit $W_p$ and tacitly assume that all field expressions
end in a $W_p$.  In these abbreviated conventions, the above equation
is then simply
\begin{equation}
  \label{eq:dvac}
  dW_p = i p_\mu c \d X^\mu - i p^\mu C \Pi_\mu + \tfrac12 p^2 \d C.
\end{equation}

$\Crel^1(p)$ is spanned by a basis of scalar and vector monomials, with the following action of $d$ on them:
\begin{align}
  \begin{split} \label{eq:d-on-scalars-gh-no-1}
    d(\d C) &= c \d^2 C + C \d^2 c + i p_\mu c \d C \d X^\mu - i p^\mu C \d C \Pi_\mu \\
    d(c C b) &= 2 c C \d C B + c C \d X^\mu \Pi_\mu + i p_\mu c C \d^2 X^\mu + \tfrac32 c \d^2 C - \tfrac32 C \d^2 c + \tfrac12 p^2 c C \d C b\\
    d(c C B) &= - \tfrac12 \eta^{\mu\nu} c C \Pi_\mu \Pi_\nu - i p^\mu c C \d \Pi_\mu + \tfrac32 c \d^2 c + \tfrac12 p^2 c C \d C B.
  \end{split}\\[1ex]
  \begin{split}     \label{eq:d-on-vectors-gh-no-1}
    d(c \d X^\mu) &= c C \d \Pi^\mu + i p^\nu c C \d X^\mu \Pi_\nu + c \d C \Pi^\mu + \tfrac{i}2 p^\mu c \d^2 C - \tfrac12 p^2 c \d C \d X^\mu \\
    d(C \d X^\mu) &= i p_\nu c C \d X^\nu \d X^\mu + c C \d^2 X^\mu + c \d C \d X^\mu + C \d C \Pi^\mu + \tfrac{i}2 p^\mu C \d^2 C - \tfrac12 p^2 C \d C \d X^\mu\\
    d(c \Pi_\mu)&= i p^\nu c C \Pi_\mu \Pi_\nu -\tfrac{i}2 p_\mu c \d^2 c - \tfrac12 p^2 c \d C \Pi_\mu\\
    d(C \Pi_\mu) &= i p_\nu c C \d X^\nu \Pi_\mu + c C \d \Pi_\mu + c \d C \Pi_\mu - \tfrac{i}2 p_\mu C \d^2 c - \tfrac12 p^2 C \d C \Pi_\mu.
  \end{split}
\end{align}
There are no tensors at ghost number $1$.  

$\Crel^2(p)$ admits a basis consisting of six scalars,\footnote{At this and higher
  ghost number there are additional scalars obtained by taking traces
  of symmetric tensors.  Since $d$ commutes with taking trace, the
  action of $d$ on them can be read from that on tensors by taking
  trace and so we will not list them separately here.} six vectors, and three tensors, two of which are symmetric. The BRST differential acts on these basis vectors as follows:
\begin{align}
    \begin{split} \label{eq:d-on-scalars-gh-no-2}
    d (c \d^2 c) &= -\tfrac12 p^2 c \d C \d^2 c + i p^\mu c C \d^2 c \Pi_\mu\\
    d (c \d^2 C) &= -\tfrac12 p^2 c \d C \d^2 C + i p^\mu c C \d^2 C \Pi_\mu - c\d C \d^2c - c C \d^3c\\
    d (C \d^2 c) &= -\tfrac12 p^2 C \d C \d^2 c + i p_\mu c C \d^2 c \d X^\mu + c \d C \d^2 c + c C \d^3 c\\
    d (C \d^2 C) &= -\tfrac12 p^2 C \d C \d^2 C + i p_\mu c C \d^2 C \d X^\mu + c \d C \d^2 C + c C \d^3 C - C \d C \d^2 c\\
    d (c C \d C b) &= -c C \d C \d X^\mu \Pi_\mu - i p_\mu c C \d C \d^2X^\mu + \tfrac32 c \d C \d^2 C - \tfrac23 c C \d^3C - \tfrac32 C \d C \d^2 c\\
    d (c C \d C B) &= \tfrac12 \eta^{\mu\nu} c C \d C \Pi_\mu \Pi_\nu + i p^\mu c C \d C \d\Pi_\mu + \tfrac32 c \d C \d^2 c - \tfrac23 c C \d^3 c,
  \end{split}\\[1ex]
  \begin{split} \label{eq:d-on-vectors-gh-no-2}
    d ( c C \d^2 X^\mu) &= \tfrac12 p^2 c C \d C \d^2 X^\mu - 2 c C \d C \d \Pi^\mu - c C \d^2 C \Pi^\mu - \tfrac{i}3 p^\mu c C \d^3 C + c C \d^2 c \d X^\mu\\
    d ( c C \d \Pi_\mu) &= \tfrac12 p^2 c C \d C \d \Pi_\mu + c C \d^2 c \Pi_\mu + \tfrac{i}3 p_\mu c C \d^3 c\\
    d ( c \d C \d X^\mu) &=  c C \d C \d \Pi^\mu + i p^\nu c C \d C \d X^\mu \Pi_\nu - \tfrac{i}2 p^\mu c \d C \d^2 C - c C \d^2 c \d X^\mu\\
    d ( C \d C \d X^\mu) &= i p_\nu c C \d C \d X^\mu \d X^\nu - \tfrac{i}2 p^\mu C \d C \d^2 C + c C \d C \d^2 X^\mu + c C \d^2 C \d X^\mu\\
    d ( c \d C \Pi_\mu) &= i p^\nu c  C \d C \Pi_\mu \Pi_\nu - c C \d^2 c \Pi_\mu + \tfrac{i}2 p_\mu c \d C \d^2 c\\
    d ( C \d C \Pi_\mu) &= i p_\nu c C \d C \d X^\nu \Pi_\mu + c C \d C \d \Pi_\mu + c C \d^2 C \Pi_\mu + \tfrac{i}2 p_\mu C \d C \d^2 c,
  \end{split}\\[1ex]
  \begin{split} \label{eq:d-on-tensors-gh-no-2}
    d(c C \d X^\mu \d X^\nu) &= \tfrac12 p^2 c C \d C \d X^\mu \d X^\nu - i c C \d^2 C p^{(\mu} \d X^{\nu)} - 2 c C \d C \d X^{(\mu} \Pi^{\nu)} - \tfrac16 \eta^{\mu\nu} c C \d^3 C\\
    d(c C \d X^\mu \Pi_\nu) &= -c C \d C \Pi^\mu \Pi_\nu + \tfrac12 p^2 c \d C \d X^\mu \Pi_\nu - \tfrac{i}2 p^\mu c C \d^2 C \Pi_\nu + \tfrac{i}2 p_\nu c C \d^2 c \d X^\mu + \tfrac16 \delta^\mu_\nu c C \d^3 c\\
    d(c C \Pi_\mu \Pi_\nu) &= i c C \d^2 c p_{(\mu} \Pi_{\nu)} + \tfrac12 p^2 c C \d C \Pi_\mu \Pi_\nu.
  \end{split}
\end{align}

$\Crel^3(p)$ is also spanned by six scalars, six vectors, and three tensors with two of them being symmetric.
\begin{align}      
  \begin{split} \label{eq:d-on-scalars-gh-no-3}
    d(c C \d^3 c) &= \tfrac12 p^2 c C \d C \d^3 c\\
    d(c C \d^3 C) &= \tfrac12 p^2 c C \d C \d^3 C + 2 c C \d C \d^3 c\\
    d(c \d C \d^2 c) &= i p^\mu c C \d C \d^2 c \Pi_\mu\\
    d(c \d C \d^2 C) &= i p^\mu c C \d C \d^2 C \Pi_\mu - c C \d^2 c \d^2 C - c C \d C \d^3 c\\
    d(C \d C \d^2 c) &= i p_\mu c C \d C \d^2 c \d X^\mu - c C \d^2 c \d^2 C + c C \d C \d^3 c\\
    d(C \d C \d^2 C) &= i p_\mu c C \d C \d^2 C \d X^\mu + c C \d C \d^3 C,    
  \end{split}\\[1ex]
  \begin{split}     \label{eq:d-on-vectors-gh-no-3}
    d (c C \d^2 c \d X^\mu) &= - c C \d C \d^2 c \Pi^\mu + \tfrac12 p^2 c C \d C \d^2 c \d X^\mu + \tfrac{i}2 p^\mu c C \d^2 c \d^2 C\\
    d (c C \d^2 C \d X^\mu) &= - c C \d C \d^2 C \Pi^\mu + \tfrac12 p^2 c C \d C \d^2 C \d X^\mu + c C \d C \d^2 c \d X^\mu\\
    d (c C \d^2 c \Pi_\mu) &= \tfrac12 p^2 c C \d C \d^2 c \Pi_\mu\\
    d (c C \d^2 C \Pi_\mu) &= \tfrac12 p^2 c C \d C \d^2 C \Pi_\mu + c C \d C \d^2 c \Pi_\mu + \tfrac{i}2 p_\mu c C \d^2 c \d^2 C\\
    d (c C \d C \d^2 X^\mu) &= c C \d C \d^2 C \Pi^\mu + \tfrac{i}3 p^\mu c C \d C \d^3 C - c C \d C \d^2 c \d X^\mu\\
    d (c C \d C \d \Pi_\mu) &= - c C \d C \d^2 c \Pi_\mu - \tfrac{i}3 p_\mu c C \d C \d^3 c,
  \end{split}\\[1ex]
  \begin{split}     \label{eq:d-on-tensors-gh-no-3}
    d (c C \d C \d X^\mu \d X^\nu) &= i c C \d C \d^2 C p^{(\mu} \d X^{\nu)} + \tfrac16 \eta^{\mu\nu} c C \d C \d^3 C\\
    d (c C \d C \d X^\mu \Pi_\nu) &= \tfrac{i}2 p^\mu c C \d C \d^2 C \Pi_\nu - \tfrac{i}2 p_\nu c C \d C \d^2 c \d X^\mu - \tfrac16 \delta^\mu_\nu c C \d C \d^3 c\\
    d (c C \d C \Pi_\mu \Pi_\nu) &= -i c C \d C \d^2 c p_{(\mu} \Pi_{\nu)}.
  \end{split}
\end{align}

$\Crel^4(p)$ is spanned by three scalars and four vectors, just like $\Crel^1(p)$.
\begin{align}
  \begin{split} \label{eq:d-on-scalars-gh-no-4}
    d (c C \d^2 c \d^2 C) &= \tfrac12 p^2 c C \d C \d^2 c \d^2 C\\
    d (c C \d C \d^3 c) &= 0\\
    d (c C \d C \d^3 C) &= 0,
  \end{split} \\[1ex]
  \begin{split} \label{eq:d-on-vectors-gh-no-4}
    d(c C \d C \d^2 c \d X^\mu) &= - \tfrac{i}2 p^\mu c C \d C \d^2 c \d^2 C\\
    d(c C \d C \d^2 C \d X^\mu) &= 0\\
    d(c C \d C \d^2 c \Pi_\mu) &= 0\\
    d(c C \d C \d^2 C \Pi_\mu) &= -\tfrac{i}2 p_\mu c C \d C \d^2 c \d^2 C.
  \end{split}
\end{align}

Finally, $\Crel^5(p)$ is spanned the scalar $c C \d C \d^2 c
\d^2 C$, which is, of course, a cocycle. At this point, it is worth
nothing that $\Crel^{n}(p) \cong \Crel^{5-n}(p)$ as
$\Stab(p)$-modules, which is strongly suggestive of Poincaré duality
holding in cohomology too.

\section{Detailed calculation of $\Hrel^2(p)$ for $p^2=0,\ p\neq 0$}
 \label{app:H2rel-detailed-calc}

Here, we present the calculation of $\Hrel^2(p)$ as a $K$-module in great pedagogical
detail. Throughout this section, the momentum $p$ is non-zero and
satisfies $p^2 = \eta_{\mu\nu} p^\mu p^\nu = 0$. Recall that a general
relative 2-cochain $\Psi \in \Crel^2(p)$ is given by
\begin{multline} \label{eq:gen_C2rel_elem}
  \Psi = \phi^{(1)} c C \d C  b + \phi^{(2)} c C \d C B + A^{(3)}_\mu c \d C \d X^\mu + A^{(4)}_\mu c \d C \Pi^\mu + A^{(5)}_\mu C \d C \d X^\mu + A^{(6)}_\mu C \d C \Pi^\mu\\
  + \phi^{(7)} c \d^2 c + \phi^{(8)} c \d^2 C + \phi^{(9)} C \d^2  c + \phi^{(10)} C \d^2 C + A^{(11)}_\mu c C \d^2 X^\mu + A^{(12)}_\mu c C \d \Pi^\mu\\
  + S^{(13)}_{\mu\nu} c C \d X^\mu \d X^\nu + T^{(14)}_{\mu\nu} c C \d X^\mu \Pi^\nu + S^{(15)}_{\mu\nu} c C \Pi^\mu \Pi^\nu.
\end{multline}
Also note that each term implicitly ends with a $W_p$ on the right
hand side, where $W_p$ is the field corresponding to the vacuum state
$\ket{p}$ with momentum $p^\mu$.

Acting the BRST differential (i.e., zero mode of the BRST current
\eqref{eq:BRST current}) $d$ on each term in $\Psi$ gives a linear
combination of the basis vectors of $\Crel^3(p)$. Hence, the cocycle
condition $d\Psi = 0$ sets the coefficients of each basis vector to
zero, thereby giving us the cocycle equations The resulting set of
equations on these coefficients are summarised in Table
\ref{tab:Z2rel_calc}.

\begin{table}[h]
\centering
\renewcommand{\arraystretch}{1.2}
\caption{Summary of the computation of $d\Psi$ using \eqref{eq:d-on-scalars-gh-no-2}, \eqref{eq:d-on-vectors-gh-no-2} and \eqref{eq:d-on-tensors-gh-no-2}. Hence, the cocycle condition $d\Psi = 0$ is satisfied when every equation equals zero.}
\label{tab:Z2rel_calc}
    \begin{tabular}{|c|c|c L|}
        \toprule
       \multicolumn{2}{|c|}{Basis vector of $\Crel^3(p)$} & \multicolumn{2}{|c|}{Coefficients}\\
       \midrule
     \multirowcell{6}{Scalars} & $cC \d^3 c$ & $-\phi^{(8)} + \phi^{(9)} - \tfrac{2}{3} \phisup{2} + \tfrac{i}{3} p \cdot A^{(12)} + \tfrac16 \tr T^{(14)}$ & eq:C3rel_scalar_coeff_1\\
                               & $cC \d^3 C$ & $\phi^{(10)} -\tfrac23 \phisup{1} - \tfrac i3 p \cdot A^{(11)} - \tfrac16 \tr S^{(13)}$ & eq:C3rel_scalar_coeff_2 \\
                               & $c \d C \d^2 c$ & $-\phisup{8} + \phisup{9} + \tfrac32 \phisup{2} + \tfrac i2 p \cdot \Asup{4}$ & eq:C3rel_scalar_coeff_3 \\
                               & $c \d C \d^2 C$ & $\phisup{10} + \tfrac32 \phisup{1} - \tfrac i2 p\cdot \Asup{3}$ & eq:C3rel_scalar_coeff_4 \\
                               & $C \d C \d^2 c$ & $-\phisup{10} - \tfrac32 \phisup{1} + \tfrac i2 p\cdot \Asup{6}$ & eq:C3rel_scalar_coeff_5 \\
                               & $C \d C \d^2 C$ & $-\tfrac i2 p \cdot A^{(5)}$ & eq:C3rel_scalar_coeff_6 \\ \midrule
     \multirowcell{6}{Vectors} & $cC \d^2 c \d X^\mu$ & $ip_\mu \phisup{9} + \Asup{11}_\mu - \Asup{3}_\mu + \tfrac i2 p^\nu T^{(14)}_{\mu\nu}$  & eq:C3rel_vector_coeff_1 \\
                               & $cC \d^2 C \d X^\mu$ & $ip_\mu \phisup{10} + \Asup{5}_\mu - ip^\nu S^{(13)}_{\mu\nu}$  & eq:C3rel_vector_coeff_2 \\
                               & $cC \d^2 c \Pi^\mu$ & $ip_\mu \phisup{7} + \Asup{12}_\mu - \Asup{4}_\mu + ip^\nu S^{(15)}_{\mu\nu}$  & eq:C3rel_vector_coeff_3 \\
                               & $cC \d^2 C \Pi^\mu$ & $ip_\mu \phisup{8} - \Asup{11}_\mu + \Asup{6}_\mu - \tfrac i2 p^\nu T^{(14)}_{\nu\mu}$  & eq:C3rel_vector_coeff_4 \\
                               & $cC \d C \d^2 X^\mu$ & $-ip_\mu \phisup{1} + \Asup{5}_\mu$  & eq:C3rel_vector_coeff_5 \\
                               & $cC \d C \d \Pi^\mu$ & $ip_\mu \phisup{2} - 2\Asup{11}_\mu + \Asup{3}_\mu + \Asup{6}_\mu$ & eq:C3rel_vector_coeff_6 \\ \midrule
     \multirowcell{3}{Tensors} & $cC \d C \d X^\mu \d X^\nu$ & $-ip_{(\mu} \Asup{5}_{\nu)}$ & eq:C3rel_tensor_coeff_1\\
                               & $cC \d C \d X^\mu \Pi^\nu$ & $-\eta_{\mu\nu} \phisup{1} + ip_\nu\Asup{3}_\mu + ip_\mu \Asup{6}_\nu - 2S^{(13)}_{\mu\nu}$ & eq:C3rel_tensor_coeff_2 \\
                               & $cC \d C \Pi^\mu \Pi^\nu$ & $\tfrac12\eta_{\mu\nu} \phisup{2} + ip_{(\mu}\Asup{4}_{\nu)} - T^{(14)}_{(\mu\nu)}$  & eq:C3rel_tensor_coeff_3 \\
     \bottomrule
    \end{tabular}
\end{table}
Using this set of equations, we may deduce some of the coefficients $\phi^{(1)}\dots S^{(15)}_{\mu\nu}$ in terms of others which will remain unconstrained or free. Let us do this step-by-step and obtain the equations from \eqref{eq:H2rel cocycle conditions} one by one, putting a box around each of them. 
\begin{enumerate}
    \item Equation \eqref{eq:C3rel_tensor_coeff_1} tells us that $\boxed{A^{(5)}_\mu = 0}$. Then \eqref{eq:C3rel_scalar_coeff_6} is automatically satisfied.
    \item Plugging this into equation \eqref{eq:C3rel_vector_coeff_5} implies that $\boxed{\phisup{1} = 0}$.
    \item The equations 
    Using $\phisup{1} = 0$ in equations \eqref{eq:C3rel_scalar_coeff_4} and \eqref{eq:C3rel_scalar_coeff_5}
    tells us that 
    \begin{enumerate}
        \item $p\cdot(\Asup{6} - \Asup{3}) = 0$
        \item $\boxed{\phisup{10} = \tfrac i2 p \cdot \Asup{6} = \tfrac i2 p \cdot \Asup{3}}$
    \end{enumerate}
    \item Equation \eqref{eq:C3rel_tensor_coeff_2}
    can be written as 
    \begin{equation*}
        \begin{split}
           -\eta_{\mu\nu} \phisup{1} + ip_{(\nu}\Asup{3}_{\mu)} - ip_{[\mu}\Asup{3}_{\nu]} + ip_{(\mu} \Asup{6}_{\nu)} + ip_{[\mu} \Asup{6}_{\nu]} - 2S^{(13)}_{\mu\nu} = 0 \\
           \iff -\eta_{\mu\nu} \phisup{1} + ip_{[\mu} (\Asup{6}_{\nu]} - \Asup{3}_{\nu]}) + ip_{(\mu} (\Asup{3}_{\nu)} + \Asup{6}_{\nu)})  - 2S^{(13)}_{\mu\nu} = 0 \\
           \iff  ip_{(\mu} (\Asup{3}_{\nu)} + \Asup{6}_{\nu)})  = 2S^{(13)}_{\mu\nu} \quad \text{and} \quad  ip_{[\mu} (\Asup{6}_{\nu]} - \Asup{3}_{\nu]}) = 0
        \end{split}
    \end{equation*}
    where we set $\phisup{1} = 0$ in the final step.
    The antisymmetric equation $ip_{[\mu} (\Asup{6}_{\nu]} - \Asup{3}_{\nu]}) = 0$ together with $p \cdot (\Asup{6} - \Asup{3}) = 0$ tells us that $\Asup{6}_\mu - \Asup{3}_\mu$ is not just orthogonal to $p^\mu$, but must also be proportional to $p^\mu$. That is, there exists some scalar field $\varphi$ such that $$\Asup{6}_\mu - \Asup{3}_\mu = i p_\mu \varphi \iff \boxed{\Asup{6}_\mu = i p_\mu \varphi + \Asup{3}_\mu}.$$
    Substituting this into the symmetric equation gives
   $$\boxed{S^{(13)}_{\mu\nu} = ip_{(\mu}\Asup{3}_{\nu)} - \tfrac12 p_\mu p_\nu \varphi.}$$
   Equation \eqref{eq:C3rel_vector_coeff_2} is automatically satisfied by the expressions for $S^{(13)}_{\mu\nu}$ and $\phisup{10}$ (recall that $\Asup{5}_\mu = 0$), so no new information is obtained from this equation.
    \item We then substitute $\Asup{6}_\mu = i p_\mu \varphi - \Asup{3}_\mu$ into equation \eqref{eq:C3rel_vector_coeff_6} to get
    $$\boxed{\Asup{11}_\mu =  \Asup{3}_\mu + \tfrac i2 p_\mu (\phisup{2} - \varphi).}$$
    This expression for $\Asup{11}_\mu$, together with the earlier formulas for $\phisup{10}$ and $S^{(13)}_{\mu\nu}$, automatically satisfies equation \eqref{eq:C3rel_scalar_coeff_2}.
    \item We make $\phisup{9}$ the subject of equation \eqref{eq:C3rel_scalar_coeff_3} to get
    $$\boxed{\phisup{9} = \phisup{8} - \tfrac32 \phisup{2} - \tfrac i2 p \cdot \Asup{4}.}$$
    \item Likewise, rearranging equations \eqref{eq:C3rel_tensor_coeff_3} and \eqref{eq:C3rel_vector_coeff_3} give
    $$\boxed{T^{(14)}_{(\mu\nu)} = \tfrac12\eta_{\mu\nu} \phisup{2} + ip_{(\mu}\Asup{4}_{\nu)}} \quad \text{and} \quad   \boxed{\Asup{12}_\mu = \Asup{4}_\mu - ip_\mu \phisup{7} - ip^\nu S^{(15)}_{\mu\nu}}$$
    respectively.
    \item Next, we add equations \eqref{eq:C3rel_vector_coeff_1} and \eqref{eq:C3rel_vector_coeff_4}
    (keeping in mind that each of these equal zero) to get
    \begin{equation*}
        i p^\nu T_{[\mu\nu]} + (\Asup{6}_\mu - \Asup{3}_\mu) + ip_\mu (\phisup{8} + \phisup{9}) = 0.
    \end{equation*}
    Using $\Asup{6}_\mu - \Asup{3}_\mu = i p_\mu \varphi$ and $\phisup{9} = \phisup{8} - \tfrac32 \phisup{2} - \tfrac i2 p \cdot \Asup{4}$, we may write the above as
    \begin{equation*}
    \begin{split}
        i p^\nu T_{[\mu\nu]} = -ip_\mu (2\phisup{8} - \tfrac32 \phisup{2} - \tfrac i2 p \cdot \Asup{4}) - i p_\mu \varphi \\
        \iff   \boxed{p^\nu T_{[\mu\nu]}^{(14)} = p_\mu \left(\tfrac32 \phi^{(2)} + \tfrac{i}2 p \cdot A^{(4)} - 2 \phi^{(8)} - \varphi \right).}
    \end{split}
    \end{equation*}
    \item Finally, we substitute 
    \begin{enumerate}
        \item $\phisup{9} = \phisup{8} - \tfrac32 \phisup{2} - \tfrac i2 p \cdot \Asup{4}$,
        \item $\Asup{12}_\mu = \Asup{4}_\mu - ip_\mu \phisup{7} - ip^\nu S^{(15)}_{\mu\nu}$, and
        \item $T^{(14)}_{(\mu\nu)} = \tfrac12\eta_{\mu\nu} \phisup{2} + ip_{(\mu}\Asup{4}_{\nu)} \implies \tr T^{(14)} = 13 \phisup{2} + i p \cdot \Asup{4}$
    \end{enumerate} into equation \eqref{eq:C3rel_scalar_coeff_1}, the last of the available equations from Table \ref{tab:Z2rel_calc}, to obtain our final cocycle condition from \eqref{eq:H2rel cocycle conditions}:
    \begin{equation*}
    \begin{gathered}
        -\phisup{8} + (\phisup{8} - \tfrac32 \phisup{2} - \tfrac i2 p \cdot \Asup{4}) - \tfrac23 \phisup{2} + \tfrac i3 p^\mu (\Asup{4}_\mu - ip^\nu S^{(15)}_{\mu\nu}) + \tfrac{13}{6}  \phisup{2} + \tfrac i6 p \cdot \Asup{4} = 0 \\
        \iff \boxed{p^\mu p^\nu S^{(15)}_{\mu\nu} = 0.}
    \end{gathered}
    \end{equation*}
\end{enumerate}
Thus, we have obtained all the equations of \eqref{eq:H2rel cocycle conditions}. 
Any $\Psi\in\Crel^2(p)$ given by \eqref{eq:gen_C2rel_elem} satisfying \eqref{eq:H2rel cocycle conditions} is a 2-cocycle. In other words, any $\Psi \in \Zrel^2(p)$) can be written in the form given by \eqref{eq:gen_C2rel_elem}, and it can be completely determined by the free variables 
\begin{equation} \label{eq:free_var_Z2rel}
    \varphi,\, \phi^{(2)},\, A_\mu^{(3)},\, A_\mu^{(4)},\, \phi^{(7)},\, \phi^{(8)},\, F_{\mu\nu} = T_{[\mu\nu]}^{(14)},\, G_{\mu\nu} = S^{(15)}_{\mu\nu},
\end{equation}
up to the constraints on $p^\nu F_{\mu\nu}$ and $p^\mu p^\nu G_{\mu\nu}$. 

Next, we determine which of these free variables span the space of coboundaries $\Brel^2(p)$ (i.e., which of these components can be gauged away/set to zero by addition of a suitable coboundary term).
To any $\Psi \in \Zrel^2(p)$, the addition of an arbitrary $\delta\Psi = d \Phi$ for any $\Phi \in \Crel^1(p)$ preserves the cocycles conditions (i.e., $\Psi + \delta \Psi \in \Zrel^2(p)$ for all such $\delta \Psi$) by virtue of $d^2=0$.
Such $\delta \Psi$ are thus called trivial cocycles. Physically, they correspond to gauge transformations (hence the notation used here).
The core idea behind computing cohomology is to rule out any such $\delta\Psi$. 
To compute this, we act $d$ on the most general $\Phi \in \Crel^1(p)$, given by 
\begin{equation}
  \Phi = \lambda^{(1)} \d C + \lambda^{(2)} c C b + \lambda^{(3)} c C B + \chi^{(4)}_\mu c \d X^\mu + \chi^{(5)}_\mu C \d X^\mu + \chi^{(6)}_\mu c \Pi^\mu + \chi^{(7)}_\mu C \Pi^\mu
\end{equation}
using \eqref{eq:d-on-scalars-gh-no-1} and \eqref{eq:d-on-vectors-gh-no-1}.
Then, we demand $d\Phi = \delta\Psi$. Analogous to how we computed $d\psi$ earlier, we group together the contributions to the basis elements of $\Crel^2$. The difference here is that instead of setting the resulting coefficients to be zero, we equate them to their corresponding variations, leading to the sought after gauge transformation conditions. This is summarised in Table \ref{tab:B2rel_calc}.
\begin{table}
    \centering
    \renewcommand{\arraystretch}{1.2}
    \caption{Summary of the computation of $d\Phi = \delta \Psi$. The gauge transformations obtained here give us the form of the most general state we can add to any given cocycle in a cohomologically trivial manner.}
    \begin{tabular}{|c|c|c L|}
        \toprule
         {Component of $\delta \Psi$} & {Variation} & \multicolumn{2}{|c|}{Contribution from $d\Phi$} \\
       \midrule
        $c C\d C b$  &  $\delta\phisup{1}$          & $0$ & eq:delta_phi_1\\
        $c C \d C B$  & $\delta\phisup{2}$           & $2 \lambda^{(2)}$ & eq:delta_phi_2 \\
        $c \d C \d X^\mu$  &  $\delta\Asup{3}_\mu$        & $i p_\mu \lambda^{(1)} + \chi^{(5)}_\mu$ & eq:delta_A_3 \\
        $c \d C \Pi^\mu$  &  $\delta\Asup{4}_\mu$        & $\chi^{(4)}_\mu + \chi^{(7)}_\mu$ & eq:delta_A_4 \\
        $C \d C \d X^\mu$  &  $\delta\Asup{5}_\mu$        & $0$ & eq:delta_A_5 \\
        $C \d C \Pi^\mu$  &  $\delta\Asup{6}_\mu$        & $-i p_\mu \lambda^{(1)} + \chi^{(5)}_\mu$ & eq:delta_A_6 \\ \midrule
        $c \d^2 c$  &  $\delta\phisup{7}$          & $\tfrac32 \lambda^{(3)} - \tfrac i2 p \cdot \chi^{(6)}$  & eq:delta_phi_7 \\
        $c \d^2 C$  &  $\delta\phisup{8}$          & $\lambda^{(1)} + \tfrac32 \lambda^{(2)} + \tfrac i2 p \cdot \chi^{(4)}$  & eq:delta_phi_8 \\
        $C \d^2 c$   &  $\delta\phisup{9}$          & $ \lambda^{(1)} - \tfrac32 \lambda^{(2)} - \tfrac i2 p \cdot \chi^{(7)}$  & eq:delta_phi_9 \\
        $C \d^2 C$   &  $\delta\phisup{10}$         & $\tfrac i2 p \cdot \chi^{(5)}$ & eq:delta_phi_10 \\
        $c C \d^2 X^\mu$   &  $\delta\Asup{11}_\mu$       & $ip_\mu \lambda^{(2)} + \chi^{(5)}_\mu$  & eq:delta_A_11 \\
        $c C \d \Pi^\mu $   &  $\delta\Asup{12}_\mu$       & $-i p_\mu \lambda^{(3)} + \chi^{(4)}_\mu+ \chi^{(7)}_\mu$ & eq:delta_A_12 \\ 
        \midrule
        $c C \d X^\mu \d X^\nu$  & $\delta S^{(13)}_{\mu\nu}$     & $i p_{(\mu}\chi^{(5)}_{\nu)} $ & eq:delta_S_13\\
        $c C \d X^\mu \Pi^\nu$  &  $\delta T^{(14)}_{\mu\nu}$    & $\eta_{\mu\nu} \lambda^{(2)} + i p_\nu \chi^{(4)}_\mu + i p_\mu \chi^{(7)}_\nu$ & eq:delta_T_14 \\
        $c C  \Pi^\mu  \Pi^\nu$  &  $\delta S^{(15)}_{\mu\nu}$   & $-\tfrac12 \eta_{\mu\nu} \lambda^{(3)} + i p_{(\mu} \chi^{(6)}_{\nu)}$  & eq:delta_S_15 \\  \midrule
    \multicolumn{2}{|c|}{$\delta \varphi$}  & $-2\lambda^{(1)}$  & eq:delta_varphi \\
    \multicolumn{2}{|c|}{$\delta T^{(14)}_{[\mu\nu]}$}    &  $ip_{[\mu}(\chi^{(4)}_{\nu]}+\chi^{(7)}_{\nu]})$  & eq:delta_T_14_antisym \\
    \bottomrule
    \end{tabular}
 \label{tab:B2rel_calc}
\end{table}

It is easy to check that the gauge transformations from Table \ref{tab:B2rel_calc} are compatible with \eqref{eq:H2rel cocycle conditions}, which confirms that the cocycle conditions are indeed gauge-invariant -- a useful sanity check. For example, consider the variation of $\Asup{12}_\mu$. The cocycle equations tell us that $\Asup{12}_\mu = \Asup{4}_\mu - ip_\mu \phisup{7} - ip^\nu S^{(15)}_{\mu\nu}$. Hence, it should hold true that $\delta \Asup{12}_\mu = \delta \Asup{4}_\mu - ip_\mu \delta \phisup{7} - ip^\nu \delta S^{(15)}_{\mu\nu}$. Using the expressions for $\delta \Asup{4}_\mu$, $\delta \phisup{7}$, and $\delta S_{\mu\nu}^{(15)}$ from Table \ref{tab:B2rel_calc}, we obtain that $\delta \Asup{12}_\mu =  -ip_\mu \lambda^{(3)} + \chi^{(4)}_\mu + \chi^{(7)}_\mu$, which is precisely the variation given by equation \eqref{eq:delta_A_12} in Table \ref{tab:B2rel_calc}, as expected.

Now, we use the available gauge freedom to set as many components of
the coefficients in \eqref{eq:gen_C2rel_elem} to zero as possible. In
particular, we want to do this for the free variables
\eqref{eq:free_var_Z2rel}.
\begin{enumerate}
    \item Equation \eqref{eq:delta_varphi} $\implies$ we can choose
      $\lambda^{(1)} =  \tfrac12 \varphi$, for any $\varphi$, to set
      $\varphi = 0$. In particular, $\lambda^{(1)} = 0$ once such an
      adjustment is made.
    \item Equation \eqref{eq:delta_phi_2} $\implies$ we can choose
      $\lambda^{(2)} = -\tfrac12 \phisup{2}$, for any $\phisup{2}$, to
      set $\phisup{2} = 0$. In particular, $\lambda^{(2)} = 0$ once
      such an adjustment is made.
    \item Equation \eqref{eq:delta_A_3} $\implies$ we can choose
      $\chi^{(5)}_\mu = -ip_\mu \lambda^{(1)} - \Asup{3}_\mu  = 0$,
      for any $\Asup{3}_\mu$ and $\lambda^{(1)}$, to set $\Asup{3}_\mu
      = 0$. Having already set $\varphi = 0$, we see that
      $\chi^{(5)}_\mu = \Asup{3}_\mu$, which is zero for once
      $\Asup{3}_\mu = 0$ is satisfied.
    \item Equation \eqref{eq:delta_A_4} $\implies$ we can choose
      $\chi^{(7)}_\mu = -\chi^{(4)}_\mu$ to set $\Asup{4}_\mu = 0$.
    \item Equation \eqref{eq:delta_phi_7} $\implies$ we can choose
      $\lambda^{(3)} = \tfrac i3 p \cdot \chi^{(6)} - \tfrac23
      \phisup{7}$, for any $\phisup{7}$, to set $\phisup{7} = 0$. As
      usual, $\lambda^{(3)} = \tfrac i3 p \cdot \chi^{(6)}$ once such
      an adjustment is made.
    \item Equation \eqref{eq:delta_phi_8}, together with the above
      gauge choices, implies that we can choose $p\cdot \chi^{(4)} =
      0$ to set $\phisup{8} = 0$.
\end{enumerate}
We are now in the partially gauge-fixed setting, as we still have the
freedom to use $\chi^{(6)}_\mu$ and $\chi^{(4)}_\mu$ subject to $\p
\cdot \chi^{(4)}_\mu$. As one may have noticed, these appear in the
gauge transformations of $G_{\mu\nu} = S^{(15)}_{\mu\nu}$ and
$F_{\mu\nu} = T^{(14)}_{[\mu\nu]}$, so we may use them to set some
components of $G_{\mu\nu}$ and $F_{\mu\nu}$ to zero.  However, to
avoid breaking $\Stab(p)$-invariance, it is useful to summarise our
calculations as follows.  The most general element in $\Hrel^2(p)$ can
be written as
\begin{equation}\label{eq:Zrel2-vertex-operators}
  \Psi = cC \left( G_{\mu\nu} \left(\Pi^\mu \Pi^\nu - i p^{(\nu} \d \Pi^{\mu)} \right)+ F_{\mu\nu} \d X^\mu \Pi^\nu \right) e^{i p\cdot X},
\end{equation}  
where $G_{\mu\nu}$ and $F_{\mu\nu}$ satisfy 
\begin{equation}
  \label{eq:Zrel2-conditions}
  p^\mu p^\nu G_{\mu\nu} = 0, \quad p^\mu F_{\mu\nu}  = 0
\end{equation}
and admit gauge transformations
\begin{equation}
  \label{eq:Brel2-conditions}
    \delta G_{\mu\nu} = -\tfrac i6 \eta_{\mu\nu} p \cdot \chi + i p_{(\mu} \chi_{\nu)}, \quad
    \delta F_{\mu\nu} = i p_{[\mu} \omega_{\nu]},
\end{equation}
where $\chi_\mu := \chi^{(6)}_\mu$, $\omega_\mu := 2 \chi^{(4)}_\mu$, and thus $p \cdot \omega = 0$.

From this point on, one can choose a Witt frame to simplify the calculation and gain some intuition by deducing the structure of $\Hrel^2(p)$ as a module over a maximal compact subgroup $K \cong \SO(V^\top) = \SO(24)$ of $H := \Stab(p) \cong \ISO(24)$. This is done in Section \ref{sec:calculating-hrel2p}. However, by doing so, we lose information about the action of the non-compact part of $\Stab(p)$ on the relative cohomology. Hence, at the end of the day, we must understand how all of $\Stab(p)$ acts on relative cohomology to eventually understand how the full Lorentz group acts on the ambitwistor string spectrum. This is done in Section \ref{sec:rep-theory-hrel2}. 

\section{Detailed calculation of $\Hrel^3(p)$ for $p^2=0$, $p\neq 0$}
\label{app:hrel3-details}

In this appendix we present a detailed calculation of $\Hrel^3(p)$ for
$p^2=0$ (and $p \neq 0$).  This will be useful in proving that
$\Hrel^2(p)$ and $\Hrel^3(p)$ are dual $\Stab(p)$-modules, as
expected by Poincaré duality, which is done at the end of
Section~\ref{sec:rep-theory-hrel3}.

Let us therefore fix $p\neq 0$ with $p^2 = 0$.  Recall that a general
relative $3$-cochain $\Psi \in \Crel^3(p)$ is given by
\begin{multline}
  \Psi = \phi^{(1)} c C \d^3 c + \phi^{(2)} c C \d^3 C + \phi^{(3)} c  \d C \d^2 c + \phi^{(4)} c \d C \d^2 C + \phi^{(5)} C \d C \d^2 c +  \phi^{(6)} C \d C \d^2 C\\
  + A^{(7)}_\mu c C \d^2 c \d X^\mu + A^{(8)}_\mu c C \d^2 C \d X^\mu +   A^{(9)}_\mu c C \d^2 c \Pi^\mu + A^{(10)}_\mu c C \d^2 C \Pi^\mu +   A^{(11)}_\mu c C \d C \d^2 X^\mu\\
  +   A^{(12)}_\mu c C \d C \d  \Pi^\mu + S^{(13)}_{\mu\nu} c C \d C \d X^\mu \d X^\nu + T^{(14)}_{\mu\nu} c C  \d C \d X^\mu \Pi^\nu + S^{(15)}_{\mu\nu} c C \d C \Pi^\mu \Pi^\nu,
\end{multline}
with $S^{(13)}$ and $S^{(15)}$ symmetric tensors and with the tacit
understanding that every term in the RHS has a $W_p = \exp(i p\cdot
X)$ on the right.  The action of the BRST differential on each of the
monomials in $\Psi$ is given in Appendix~\ref{app:calculation of d on
  Crel}, particularly
equations~\eqref{eq:d-on-scalars-gh-no-3}--\eqref{eq:d-on-tensors-gh-no-3}.
Setting $d\Psi = 0$ results in linear equations which are summarised
in Table~\ref{tab:Z3rel_calc}.
\begin{table}[h]
\centering
\renewcommand{\arraystretch}{1.2}
\caption{Summary of the computation of $d\Psi$ using
  \eqref{eq:d-on-scalars-gh-no-3}, \eqref{eq:d-on-vectors-gh-no-3} and
  \eqref{eq:d-on-tensors-gh-no-3}}
\label{tab:Z3rel_calc}
    \begin{tabular}{|c|c|c L|}
        \toprule
       \multicolumn{2}{|c|}{Basis vector of $\Crel^4(p)$} & \multicolumn{2}{|c|}{Coefficients}\\
       \midrule
     \multirowcell{3}{Scalars} & $cC \d^2 c \d^2 C$ & $-\phisup{4} - \phisup{5} + \tfrac i2 p \cdot \Asup{7} + \tfrac i2 p \cdot \Asup{10}$ & eq:C4rel_scalar_coeff_1\\
                               & $cC \d C\d^3 C$ & $2 \phisup{2} - \phisup{4} + \phisup{5} - \tfrac i3 p \cdot \Asup{12} - \tfrac 16 \tr \Tsup{14}$ & eq:C4rel_scalar_coeff_2 \\
                               & $cC \d C \d^3 c$ & $ \phisup{6} + \tfrac i3 p \cdot \Asup{11} + \tfrac 16 \tr \Ssup{13}$ & eq:C4rel_scalar_coeff_3 \\ \midrule
     \multirowcell{4}{Vectors} & $cC \d C \d^2 c \d X^\mu$ & $i p_\mu \phisup{5} + \Asup{8}_\mu - \Asup{11}_\mu - \tfrac i2 p^\nu \Tsup{14}_{\mu\nu}$  & eq:C4rel_vector_coeff_1 \\
                               & $cC \d C \d^2 C \d X^\mu$ & $i p_\mu \phisup{6} + i p^\nu \Ssup{13}_{\mu\nu} $  & eq:C4rel_vector_coeff_2 \\
                               & $cC \d C \d^2 c \Pi^\mu$ & $i p_\mu \phisup{3} - \Asup{7}_\mu+ \Asup{10}_\mu - \Asup{12}_\mu - p^\nu \Ssup{15}_{\mu\nu}  $  & eq:C4rel_vector_coeff_3 \\
                               & $cC \d C \d^2 C \Pi^\mu$ & $i p_\mu \phisup{4} - \Asup{8}_\mu + \Asup{11}_\mu + \tfrac i2 p^\nu \Tsup{14}_{\nu\mu}$  & eq:C4rel_vector_coeff_4 \\ \bottomrule
    \end{tabular}
\caption*{The cocycle condition $d\Psi = 0$ is equivalent to the
  vanishing of all the coefficients in the table.}
\end{table}

Using this set of equations, we may solve for some of the
coefficients, leaving the rest free subject to a reduced system of
equations.  In this way we obtain the
equations~\eqref{eq:Z3rel-equations-partial} and
\eqref{eq:Z3rel-remaining-equations}, which we identify by putting a
box around each of them.

The first three
equations~\eqref{eq:C4rel_scalar_coeff_1}--\eqref{eq:C4rel_scalar_coeff_3}
can be used to solve for $\phisup{5}$, $\phisup{2}$ and $\phisup{6}$.
We find \begin{equation}
  \boxed{\phisup{5} = - \phisup{4} + \tfrac i2 p \cdot \Asup{7} + \tfrac i2 p \cdot \Asup{10}},
\end{equation}
\begin{equation}
  \phisup{2} = \phisup{4} - \tfrac i4 p \cdot \Asup{7} - \tfrac i4 p \cdot \Asup{10} + \tfrac i6 p \cdot \Asup{12} + \tfrac1{12} \tr \Tsup{14}
\end{equation}
and
\begin{equation}
  \phisup{6} = - \tfrac i3 p \cdot \Asup{11} - \tfrac 16 \tr \Ssup{13}.
\end{equation}
Then equation~\eqref{eq:C4rel_vector_coeff_1} can be used to solve for $\Asup{11}$:
\begin{equation}
  \boxed{\Asup{11} = -i p_\mu \phisup{4} - \tfrac12 p_\mu p \cdot \Asup{7} + \Asup{8}_\mu - \tfrac12 p_\mu p \cdot \Asup{10} - \tfrac i2 p^\nu \Tsup{14}_{\mu\nu}},
\end{equation}
which in turn modifies $\phisup{6}$:
\begin{equation}
  \boxed{\phisup{6} = - \tfrac i3 p \cdot \Asup{8} - \tfrac 16 p^\mu p^\nu \Tsup{14}_{\mu\nu} - \tfrac 16 \tr \Ssup{13}}.
\end{equation}
Finally, we use equation~\eqref{eq:C4rel_vector_coeff_3} to solve for $\Asup{12}$:
\begin{equation}
  \boxed{\Asup{12}_\mu = i p_\mu \phisup{3} - \Asup{7}_\mu + \Asup{10}_\mu - i p^\nu \Ssup{15}_{\mu\nu}},
\end{equation}
which in turn modifies $\phisup{2}$:
\begin{equation}
  \boxed{\phisup{2} = \phisup{4} - \tfrac{5i}{12} p \cdot \Asup{7} - \tfrac i{12} p \cdot \Asup{10} + \tfrac 16 p^\mu p^\nu \Ssup{15}_{\mu\nu} + \tfrac1{12} \tr \Tsup{14}}.
\end{equation}
In summary, we are left with $\phisup{1}$, $\phisup{3}$, $\phisup{4}$, $\Asup{7}_\mu$, $\Asup{8}_\mu$, $\Asup{9}_\mu$, $\Asup{10}_\mu$, $\Ssup{13}_{\mu\nu}$, $\Tsup{14}_{\mu\nu}$ and $\Ssup{15}_{\mu\nu}$ subject to equation~\eqref{eq:C4rel_vector_coeff_2}
\begin{equation}
  \label{eq:S13-3-cocycle-condition}
  \boxed{p^\nu \Ssup{13}_{\mu\nu} - \tfrac16 p_\mu \tr \Ssup{13} = \tfrac i3 p_\mu p \cdot \Asup{8}+ \tfrac 16 p_\mu p^\lambda p^\nu \Tsup{14}_{\lambda\nu}}
\end{equation}
and equation~\eqref{eq:C4rel_vector_coeff_4}
\begin{equation}
  \label{eq:T14-3-cocycle-condition}
  \boxed{p^\nu \Tsup{14}_{[\mu\nu]} = \tfrac i2 p_\mu \left( p \cdot \Asup{7} + p \cdot \Asup{10} \right)}.
\end{equation}

We now determine the space $\Brel^3(p)$ of coboundaries.  That is the
image of the differential $d \colon \Crel^2(p) \to \Crel^3(p)$.  The
general relative $2$-cochain $\Phi$ is given by
\begin{multline}
  \Phi = \lambda^{(1)} c C \d C  b + \lambda^{(2)} c C \d C B + \chi^{(3)}_\mu c \d C \d X^\mu + \chi^{(4)}_\mu c \d C \Pi^\mu + \chi^{(5)}_\mu C \d C \d X^\mu + \chi^{(6)}_\mu C \d C \Pi^\mu\\
  + \lambda^{(7)} c \d^2 c + \lambda^{(8)} c \d^2 C + \lambda^{(9)} C
  \d^2  c + \lambda^{(10)} C \d^2 C + \chi^{(11)}_\mu c C \d^2 X^\mu + \chi^{(12)}_\mu c C \d \Pi^\mu\\
  + \Sigma^{(13)}_{\mu\nu} c C \d X^\mu \d X^\nu + \Theta^{(14)}_{\mu\nu} c C \d X^\mu \Pi^\nu + \Sigma^{(15)}_{\mu\nu} c C \Pi^\mu \Pi^\nu,
\end{multline}
where $\Sigma^{(13)}_{\mu\nu}$ and $\Sigma^{(15)}_{\mu\nu}$ are symmetric, but $\Theta^{(14)}_{\mu\nu}$ is a general $2$-tensor.
We denote the addition of the coboundary to a cocycle $\Psi$ by $\Psi
\mapsto \Psi + \delta\Psi$, where $\delta\Psi = d\Phi$.
Table~\ref{tab:B3rel_calc} summarises the calculation.

\begin{table}
    \centering
    \renewcommand{\arraystretch}{1.2}
    \caption{Summary of the computation of $d\Phi = \delta \Psi \in \Brel^3(p)$}
    \begin{tabular}{|>{$}c<{$}|>{$}c<{$}|>{$}c<{$} L|}
        \toprule
      \multicolumn{1}{|c|}{Component of $\delta \Psi$} & \multicolumn{1}{c|}{Variation} & \multicolumn{2}{c|}{Contribution from $d\Phi$}\\
       \midrule
        c C\d^3c  &  \delta\phisup{1}     &  -\lambda^{(8)} + \lambda^{(9)} - \tfrac{2}{3} \lambda^{(2)} + \tfrac i3 p \cdot \chi^{(12)} + \tfrac16 \tr \Theta^{(14)} & eq:B3-delta_phi_1\\
        c C \d^3 C  & \delta\phisup{2}     &  \lambda^{(10)} - \tfrac{2}{3} \lambda^{(1)} - \tfrac i3 p \cdot \chi^{(11)} - \tfrac16 \tr \Sigma^{(13)} & eq:B3-delta_phi_2 \\
        c \d C \d^2 c  &  \delta\phisup{3}   & -\lambda^{(8)} + \lambda^{(9)} + \tfrac32 \lambda^{(2)} + \tfrac i2 p \cdot \chi^{(4)} & eq:B3-delta_phi_3 \\
        c \d C \d^2 C  &  \delta\phisup{4}   & \lambda^{(10)} + \tfrac{3}{2} \lambda^{(1)} - \tfrac i2 p \cdot \chi^{(3)}  & eq:B3-delta_phi_4 \\
        C \d C \d^2 c  & \delta\phisup{5}   & -\lambda^{(10)} - \tfrac{3}{2} \lambda^{(1)} + \tfrac i2 p \cdot \chi^{(6)}  & eq:B3-delta_phi_5 \\
        C \d C \d^2 C  & \delta\phisup{6}   & -\tfrac i2 p \cdot \chi^{(5)}  & eq:B3-delta_phi_6 \\ \midrule
        c C \d^2 c \d X^\mu  & \delta\Asup{7}_\mu & i p_\mu \lambda^{(9)} - \chi_\mu^{(3)} + \chi_\mu^{(11)} + i p^\nu \Theta_{\mu\nu}^{(14)}  & eq:B3-delta_A_7 \\
        c C \d^2 C \d X^\mu  & \delta\Asup{8}_\mu & i p_\mu \lambda^{(10)} + \chi_\mu^{(5)} - i p^\nu \Sigma_{\mu\nu}^{(13)} & eq:B3-delta_A_8 \\
        c C \d^2 c \Pi^\mu   & \delta\Asup{9}_\mu & i p_\mu \lambda^{(7)} - \chi_\mu^{(4)} + \chi_\mu^{(12)} + i p^\nu \Sigma_{\mu\nu}^{(15)} & eq:B3-delta_A_9 \\
        c C \d^2 C \Pi^\mu   &  \delta\Asup{10}_\mu & i p_\mu \lambda^{(8)} + \chi_\mu^{(6)} - \chi_\mu^{(11)} - \tfrac i2 p^\nu \Theta_{\nu\mu}^{(14)} & eq:B3-delta_A_10 \\
        c C \d C \d^2 X^\mu   & \delta\Asup{11}_\mu & -i p_\mu \lambda^{(1)} + \chi_\mu^{(5)}  & eq:B3-delta_A_11 \\
        c C \d C \d \Pi^\mu   & \delta\Asup{12}_\mu & i p_\mu \lambda^{(2)} + \chi_\mu^{(3)} + \chi_\mu^{(6)} - 2 \chi_\mu^{(11)}  & eq:B3-delta_A_12 \\ 
        \midrule
        c C \d C \d X^\mu \d X^\nu  & \delta S^{(13)}_{\mu\nu}     & i p_{(\mu} \chi_{\nu)}^{(5)}   & eq:B3-delta_S_13\\
        c C \d C \d X^{(\mu} \Pi^{\nu)}  &  \delta T^{(14)}_{(\mu\nu)}    & -\eta_{\mu\nu} \lambda^{(1)} + i p_{(\mu} \chi_{\nu)}^{(3)} + i p_{(\mu} \chi_{\nu)}^{(6)} - 2 \Sigma_{\mu\nu}^{(13)}& eq:B3-delta_T_14_sym \\
        c C \d C \d X^{[\mu} \Pi^{\nu]}  &  \delta T^{(14)}_{[\mu\nu]}    & - i p_{[\mu} \chi_{\nu]}^{(3)} + i p_{[\mu} \chi_{\nu]}^{(6)} & eq:B3-delta_T_14_skew \\
      c C \d C \Pi^\mu  \Pi^\nu  &  \delta S^{(15)}_{\mu\nu}   & \tfrac12 \eta_{\mu\nu} \lambda^{(2)} + i p_{(\mu} \chi_{\nu)}^{(4)} - \Theta_{(\mu\nu)}^{(14)}  & eq:B3-delta_S_15 \\
      \bottomrule
    \end{tabular}
 \label{tab:B3rel_calc}
\end{table}

As a sanity check on our calculations, one can check that the cocycle
conditions~\eqref{eq:S13-3-cocycle-condition} and
\eqref{eq:T14-3-cocycle-condition} are satisfied by the relevant
coboundaries.

We will concentrate on trying to set to zero as many of the free
parameters in the space of cocycles.  For starters, from
equation~\eqref{eq:B3-delta_phi_1}, it follows that we can choose
\begin{equation}
 \lambda^{(2)}  = \tfrac32 (\lambda^{(9)}-\lambda^{(8)}) + \tfrac i2 p \cdot \chi^{(12)} + \tfrac14 \tr \Theta^{(14)} + \phisup{1}
\end{equation}
to set $\phi^{(1)} = 0$.  Similarly, from equation~\eqref{eq:B3-delta_phi_3}
we can choose
\begin{equation}
  \lambda^{(8)} - \lambda^{(9)}= \tfrac32 \lambda^{(2)} + \tfrac i2 p \cdot \chi^{(4)} + \phi^{(3)}
\end{equation}
to set $\phi^{(3)} = 0$.  These two equation are consistent and, once we back substitute, they give
\begin{equation}
  \begin{split}
    \lambda^{(2)} &= \tfrac 6{13} \left( -\tfrac i2 p \cdot \chi^{(4)} + \tfrac i3 p \cdot \chi^{(12)} + \tfrac16 \tr \Theta^{(14)} + \phisup{1} - \phisup{3}\right)\\
      \lambda^{(8)} - \lambda^{(9)} &= \tfrac 2{13} \left( i p \cdot \chi^{(4)} + \tfrac32 i p \cdot \chi^{(12)} + \tfrac 34 \tr \Theta^{(14)} + \tfrac 92 \phisup{1} + 2 \phisup{3}\right).
  \end{split}
\end{equation}
Now using equation~\eqref{eq:B3-delta_phi_4}, we can choose
\begin{equation}
  \lambda^{(1)} = \tfrac 23 \left( - \lambda^{(10)} + \tfrac i2 p \cdot \chi^{(3)} - \phisup{4}\right)
\end{equation}
to set $\phi^{(4)}=0$ and in this way set all free scalar cocycle parameters to zero.

We can set all free vector cocycle parameters to zero as well.  Using
equation~\eqref{eq:B3-delta_A_7}, we choose
\begin{equation}
  \chi_\mu^{(11)} = \chi_\mu^{(3)} - i p_\mu \lambda^{(9)} - i p^\nu  \Theta_{\mu\nu}^{(14)} - A_\mu^{(7)}
\end{equation}
to set $A_\mu^{(7)} = 0$.  We may substitute for $\lambda^{(9)}$, but
that expression does not involve any parameter we are going to
constraint here or in the sequel, so there is no need to do so.
Using equation~\eqref{eq:B3-delta_A_8}, we choose
\begin{equation}
  \chi_\mu^{(5)} = i p^\nu \Sigma_{\mu\nu}^{(13)} - i p_\mu  \lambda^{(10)} - A_\mu^{(8)}
\end{equation}
to set $A_\mu^{(8)}$.  We use equation~\eqref{eq:B3-delta_A_9} and choose
\begin{equation}
  \chi_\mu^{(12)} = \chi_\mu^{(4)} - i p_\mu \lambda^{(7)} - i p^\nu  \Sigma_{\mu\nu}^{(15)} - A_\mu^{(9)}
\end{equation}
to set $A_\mu^{(9)} = 0$.  This changes the expressions for
$\lambda^{(2)}$, $\lambda^{(8)}-\lambda^{(9)}$ and $\chi_\mu^{(11)}$
above, but again it does not constraint the parameters since these do
not occur in $\chi_\mu^{(12)}$.  Finally, we use
equation~\eqref{eq:B3-delta_A_10} and choose
\begin{equation}
  \chi_\mu^{(6)} = \chi_\mu^{(11)} + \tfrac i2 p^\nu  \Theta_{\nu\mu}^{(14)} - i p_\mu \lambda^{(8)} - A_\mu^{(10)}
\end{equation}
to set $A_\mu^{(10)} = 0$.  This sets all free vector cocycle
parameters to zero.

There are two symmetric tensor cocycles we can set to zero as well.
Using equation~\eqref{eq:B3-delta_T_14_sym}, we choose
\begin{equation}
  \Sigma_{\mu\nu}^{(13)} = -\tfrac12 \eta_{\mu\nu} \lambda^{(1)} +   \tfrac i2 p_{(\mu} \chi_{\nu)}^{(3)} + \tfrac i2 p_{(\mu}  \chi_{\nu)}^{(6)} + \tfrac12 T_{(\mu\nu)}^{(14)}
\end{equation}
to set $T_{(\mu\nu)}^{(14)}=0$.  Again we can substitute
the above expressions for $\lambda^{(1)}$ and $\chi_\mu^{(6)}$ here,
but there are no new terms in $\Sigma_{\mu\nu}^{(13)}$ appearing.
Finally, we use equation~\eqref{eq:B3-delta_S_15} and choose
\begin{equation}
  \label{eq:theta-14-symm}
  \Theta_{(\mu\nu)}^{(14)}= \tfrac 12 \eta_{\mu\nu} \lambda^{(2)} + i  p_{(\mu} \chi_{\nu)}^{(4)} + S_{\mu\nu}^{(15)}
\end{equation}
to set $S_{\mu\nu}^{(15)}=0$.  Here we have to be careful, because the
expression for $\lambda^{(2)}$ involves the trace of
$\Theta_{(\mu\nu)}^{(14)}$.  Indeed it follows from
equation~\eqref{eq:theta-14-symm} that
\begin{equation}
  p^\mu p^\nu \Theta_{\mu\nu}^{(14)} = p^\mu p^\nu S_{\mu\nu}^{(15)}
\end{equation}
and that
\begin{equation}
  \tr\Theta^{(14)} = 13 \lambda^{(2)} + i p \cdot \chi^{(4)} + \tr S^{(15)}.
\end{equation}
Substitution of the expression for $\lambda^{(2)}$ in this equation
results in a constraint involving coboundary parameters we have yet to
use:
\begin{equation}
  \tfrac 83 i p \cdot \chi^{(4)} + 2 p^\mu p^\nu  \Sigma_{\mu\nu}^{(15)} = 2 i p \cdot A^{(9)} - 6 \phisup{1} + 6  \phisup{3} - \tr S^{(15)}.
\end{equation}

Summarising this discussion thus far, having set to zero
$\phisup{1}$, $\phisup{3}$, $\phisup{4}$, $A_\mu^{(7)}$,
$A_\mu^{(8)}$, $A_\mu^{(9)}$, $A_\mu^{(10)}$, $T_{(\mu\nu)}^{(14)}$
and $S_{\mu\nu}^{(15)}$, we are left with $T_{[\mu\nu]}^{(14)}$ and
$S_{\mu\nu}^{(13)}$ subject to the cocycle conditions
\begin{equation}
  p^\nu \left( S_{\mu\nu}^{(13)}- \tfrac16 \eta_{\mu\nu} \tr S^{(13)}
  \right) = 0 \qquad\text{and}\qquad p^\nu T_{[\mu\nu]}^{(14)} = 0
\end{equation}
modulo the coboundaries
\begin{equation}
  \begin{split}
    \delta S_{\mu\nu}^{(13)} &= p_\mu p_\nu \left( \tfrac 32   \lambda^{(1)} - \tfrac i4 p \cdot \chi^{(3)} - \tfrac i4 p \cdot   \chi^{(11)}\right) \\
    \delta T_{[\mu\nu]}^{(14)} &= i p_{[\mu} \omega_{\nu]},
  \end{split}
\end{equation}
where $p \cdot \omega =0$ for consistency with the fact that
coboundaries are cocycles. (Indeed one calculates that $p\cdot \omega =
p \cdot \chi^{(6)}- p \cdot \chi^{(3)}$ and that this vanishes once we
have set to zero the above cocycle parameters.)   Notice that the
expression for $\delta S_{\mu\nu}^{(13)}$ implies that $\delta \tr
S^{(13)} = 0$  and hence we may define a symmetric tensor $G_{\mu\nu}
:= S_{\mu\nu}^{(13)} - \frac 16 \eta_{\mu\nu} \tr S^{(13)}$ and a
skew-symmetric tensor $F_{\mu\nu} = T_{[\mu\nu]}^{(14)}$ in terms of
which the cocycle conditions are
\begin{equation}
  \label{eq:3-relative-cocycles}
  p^\nu G_{\mu\nu} = 0 \qquad\text{and}\qquad p^\nu F_{\mu\nu} = 0
\end{equation}
modulo the coboundaries
\begin{equation}
  \label{eq:3-relative-coboundaries}
  \delta G_{\mu\nu} = p_\mu p_\nu \vartheta   \qquad\text{and}\qquad
  \delta F_{\mu\nu} = i p_{[\mu} \omega_{\nu]},
\end{equation}
with $p \cdot \omega = 0$.

As vertex operators for these cocycle representatives we can take
\begin{equation}
  \label{eq:Zrel3-vertex-operator-1}
  G_{\mu\nu} c C \d C \d X^\mu \d X^\nu e^{i p \cdot X} - \tfrac1{20} \tr
  G \left( c C \d C \d X \cdot \d X - C \d C \d^2 C \right) e^ {i p \cdot X},
\end{equation}
which is a cocycle precisely when $p^\nu G_{\mu\nu} = 0$, and
\begin{equation}
  \label{eq:Zrel3-vertex-operator-2}
  F_{\mu\nu} c C \d C \d X^\mu \Pi^\nu e^{i p \cdot X},
\end{equation}
which is a cocycle precisely when $p^\nu F_{\mu\nu} =0$.  We will
re-interpret these results representation-theoretically in
Section~\ref{sec:rep-theory-hrel3}.

There is another cocycle representative for the ``symmetric'' relative
cohomology: namely,
\begin{equation}\label{eq:Zrel3-alt-symm-vertex-op}
  \Psi = G_{\mu\nu} c C \d C \d X^\mu \d X^\nu e^{i p \cdot X} + A_\mu c C  \d (\d C \d X^\mu) e^{i p\cdot X}
\end{equation}
subject to $p^\nu G_{\mu\nu} = 0$ and $\tfrac12 \tr G + i p \cdot A =
0$ with coboundaries
\begin{equation}
  \delta \Psi = d \Lambda,
\end{equation}
where
\begin{equation}
  \Lambda = \omega_\mu C \d C \d X^\mu e^{i p \cdot X} + \alpha \left( c C \d C b - \tfrac32 C \d^2 C - \tfrac12 c C \d X \cdot \d X  \right) e^{i p \cdot X}.
\end{equation}
This translates into $\delta G_{\mu\nu} = i p_{(\mu} \omega_{\nu)}$
and $\delta A_\mu = \omega_\mu - i \alpha p_\mu$, with $p\cdot \omega
= 0$.  This representative has the virtue that there are no terms
purely with ghosts, but its representation-theoretic characterisation
is more involved and discussed in
Section~\ref{sec:other-cocycle-reps}.

\section{Splittings of an exact sequence}
\label{app:splitting}

The long exact sequence relating the relative and absolute
cohomologies results in a short exact sequence of $H$-modules
\begin{equation}
  \label{eq:SES-H3}
  \begin{tikzcd}
    0 \arrow[r] & \Hrel^3(p) \arrow[r] & \sH^3(p) \arrow[r] & \Hrel^2(p) \arrow[r] & 0,
  \end{tikzcd}
\end{equation}
where the first map $\Hrel^3(p) \to \sH^3(p)$ is induced by the
inclusion of $\Crel^3(p) \to \sC^3(p)$ and the second map $\sH^3(p)
\to \Hrel^2(p)$ is induced by $b_0 \colon \sC^3(p) \to \Crel^2(p)$,
which admits a splitting $c_0 \colon \Crel^2(p) \to \sC^3(p)$.  All
these maps are $H$-equivariant.  We are interested in determining
whether the sequence~\eqref{eq:SES-H3} splits as $H$-modules.

Let us simplify the notation and let $E = \Hrel^2(p)$ and
$F = \sH^3(p)$.  By Proposition~\ref{prop:poincare-duality},
$\Hrel^3(p) \cong E^*$ and hence we can rewrite the
sequence~\eqref{eq:SES-H3} as
\begin{equation}
  \label{eq:SES-H-mods}
  \begin{tikzcd}
    0 \arrow[r] & E^* \arrow[r,"i"] & F  \arrow[r,"\pi"] & E \arrow[r] & 0,
  \end{tikzcd}
\end{equation}
which says that $F$ is an extension of $E$ by $E^*$.

We will work at the level of the Lie algebra and hence this sequence
is one of $\h$-modules.   Such $\h$-module extensions are classified
cohomologically by the Chevalley--Eilenberg cohomology $H^1(\h,
\Hom(E,E^*))$.  Let us review this in the present case.  We have that
$\h \cong \fk \ltimes \fa$, where $\fk$ is simple $(\fk \cong
\so(24))$ and $\fa$ is abelian.  As $\fk$-modules, $\fa \cong V^\top$
and $\fk \cong \ext{2} V^\top$.  Both $E$ and $F$ are $\fk$-modules by
restriction and in particular $E \cong  E^*$ as $\fk$-modules.  Since
$\fk$ is simple, the sequence splits as $\fk$-modules.  Let $s \colon
E \to F$ denote a $\fk$-equivariant splitting: $\pi \circ s = \id_E$.
Any other splitting $s' \colon E \to F$ is such that for all $e \in
E$, $s'(e) - s(e) \in \ker \pi = \im i$.  In other words, there is
a linear map $\phi \colon E \to E^*$ such that for all $e \in E$,
\begin{equation}
  \label{eq:new-splitting}
  s'(e) = s(e) + i (\phi(e)).
\end{equation}
If both $s$ and $s'$ are $\fk$-equivariant, then so is $\phi$, so that
$\phi \in \Hom_{\fk}(E,E^*)$.

The obstruction for $s \colon E \to F$ to be $\h$-equivariant is
captured by the linear map $\beta \colon \h \to \Hom(E,E^*)$, sending
$X \mapsto \beta_X$ and defined by
\begin{equation}
  i (\beta_X(e)) = X \cdot s(e) - s(X \cdot e),
\end{equation}
for all $X \in \h$, $e \in E$.

\begin{lemma}
  The map $\beta$ is $\fk$-equivariant.
\end{lemma}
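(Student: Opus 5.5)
We have to show that $\beta$ intertwines the $\fk$-actions, where $\fk$ acts on $\h$ by the adjoint action and on $\Hom(E,E^*)$ by $(Y\cdot\phi)(e) = Y\cdot\phi(e) - \phi(Y\cdot e)$. Concretely, we must establish, for all $Y\in\fk$, $X\in\h$ and $e\in E$,
\begin{equation*}
  \beta_{[Y,X]} = Y\cdot\beta_X .
\end{equation*}
Since $i\colon E^*\to F$ is an injective $\h$-equivariant map, it suffices to prove the identity after applying $i$. That is, we show
\begin{equation*}
  i\bigl(\beta_{[Y,X]}(e)\bigr) = Y\cdot i(\beta_X(e)) - i(\beta_X(Y\cdot e)).
\end{equation*}

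First, expand the left-hand side using the definition of $\beta$ and the fact that $F$ is an $\h$-module:
\begin{equation*}
  i\bigl(\beta_{[Y,X]}(e)\bigr) = [Y,X]\cdot s(e) - s([Y,X]\cdot e) = Y\cdot X\cdot s(e) - X\cdot Y\cdot s(e) - s(Y\cdot X\cdot e) + s(X\cdot Y\cdot e).
\end{equation*}

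Next, expand the right-hand side, again using the definition of $\beta$:
\begin{equation*}
  Y\cdot\bigl(X\cdot s(e) - s(X\cdot e)\bigr) - X\cdot s(Y\cdot e) + s(X\cdot Y\cdot e).
\end{equation*}

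The only nontrivial input is the $\fk$-equivariance of the splitting $s$. For $Y\in\fk$ it gives $Y\cdot s(X\cdot e) = s(Y\cdot X\cdot e)$ and $s(Y\cdot e) = Y\cdot s(e)$. Substituting these into the right-hand side turns it term by term into the expanded left-hand side, which proves the identity.

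There is no real obstacle; the computation is purely formal. The one point needing care is that $s$ is only $\fk$-equivariant, so the equivariance substitutions may be applied only with $Y\in\fk$. This is why the lemma asserts $\fk$-equivariance rather than $\h$-equivariance. One should also note that $\im\beta_X$ indeed lands in $i(E^*)$: applying $\pi$ to $X\cdot s(e)-s(X\cdot e)$ gives $X\cdot e - X\cdot e = 0$, by $\h$-equivariance of $\pi$ together with $\pi\circ s=\id_E$. This makes $\beta$ well defined, and its linearity in $X$ is immediate.
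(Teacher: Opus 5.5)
Your proof is correct and is essentially the paper's argument: both apply $i$ to $Y\cdot\beta_X - \beta_{[Y,X]}$, expand using the definition of $\beta$, and cancel terms using only the $\fk$-equivariance of $s$ (together with the equivariance and injectivity of $i$). The only difference is presentational: you match two expanded sides instead of showing that a single difference vanishes. Your remark that $\beta_X$ actually lands in $i(E^*)$ is a sensible addition that the paper leaves implicit.
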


\begin{proof}
  Let $Y \in \fk$.  Then $(Y \cdot \beta)_X = Y \cdot \beta_X -
  \beta_{[Y,X]}$ and $(Y \cdot \beta_X)(e) = Y \cdot \beta_X(e) -
  \beta_X(Y\cdot e)$.  We then simply calculate
  \begin{align*}
    i \left( (Y \cdot \beta)_X(e)  \right) &= i\left( Y \cdot \beta_X(e) - \beta_X (Y\cdot e)  - \beta_{[Y,X]}(e)\right)\\
       &= Y \cdot i (\beta_X(e)) - i (\beta_X(Y\cdot e)) - i (\beta_{[Y,X]}(e))\\
       &= Y \cdot (X \cdot s(e)-s(X\cdot e)) - X \cdot s(Y \cdot e) + s(X \cdot Y\cdot e) - [Y,X]\cdot s(e) + s([Y,X]\cdot e)\\
       &= X \cdot Y \cdot s(e) - Y \cdot s (X\cdot e) - X \cdot s (Y\cdot e) + s(Y\cdot X \cdot e) & \tag{since $[X,Y] = X \cdot Y - Y \cdot X$}\\
       &= X \cdot s(Y\cdot e) - s(Y \cdot X \cdot e) - X \cdot s(Y \cdot e) + s (Y\cdot X \cdot e) & \tag{since $s$ is $\fk$-equivariant}\\
       &= 0.
  \end{align*}
\end{proof}

Furthermore, since $s$ is $\fk$-equivariant, it follows that $\beta_X = 0$ if $X \in \fk$.  In other words, $\beta \in C^1(\h,\fk,\Hom(E,E^*))$ is a $1$-cochain in the relative complex $C^\bullet(\h,\fk,\Hom(E,E^*))$ which is the $\fk$-invariant subcomplex $C^\bullet(\fa,\Hom(E,E^*))^\fk$.

\begin{lemma}
  The map $\beta$ is a cocycle: $\beta \in Z^1(\fa, \Hom(E,E^*))^\fk$.
\end{lemma}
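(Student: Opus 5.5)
We need to show that $\beta$, viewed as a linear map $\h \to \Hom(E,E^*)$, satisfies the Chevalley--Eilenberg cocycle condition
\begin{equation*}
  (\partial\beta)(X,Y) = X\cdot\beta_Y - Y\cdot\beta_X - \beta_{[X,Y]} = 0
\end{equation*}
for all $X,Y\in\h$, where $\h$ acts on $\Hom(E,E^*)$ by $(X\cdot\phi)(e) = X\cdot\phi(e) - \phi(X\cdot e)$. Since $\beta$ vanishes on $\fk$ and is $\fk$-equivariant (previous lemma), it then defines an element of the relative complex $C^1(\h,\fk,\Hom(E,E^*)) = C^1(\fa,\Hom(E,E^*))^\fk$. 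Moreover, the relative differential agrees with the restriction of the Chevalley--Eilenberg differential to pairs $X,Y\in\fa$, because $\fa$ is abelian. So it suffices to check $\partial\beta = 0$ on all of $\h$, and then restrict.

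The plan is the same direct computation as in the proof of the previous lemma. Since $i$ is injective and $\h$-equivariant, it is enough to show that $i\bigl((\partial\beta)(X,Y)(e)\bigr)=0$ for all $e\in E$. Expanding each term with the definition $i(\beta_X(e)) = X\cdot s(e) - s(X\cdot e)$ gives:
\begin{align*}
  i\bigl((X\cdot\beta_Y)(e)\bigr) &= X\cdot Y\cdot s(e) - X\cdot s(Y\cdot e) - Y\cdot s(X\cdot e) + s(Y\cdot X\cdot e),\\
  i\bigl((Y\cdot\beta_X)(e)\bigr) &= Y\cdot X\cdot s(e) - Y\cdot s(X\cdot e) - X\cdot s(Y\cdot e) + s(X\cdot Y\cdot e),\\
  i\bigl(\beta_{[X,Y]}(e)\bigr) &= [X,Y]\cdot s(e) - s([X,Y]\cdot e).
\end{align*}
Subtracting the second and third lines from the first, the mixed terms $X\cdot s(Y\cdot e)$ and $Y\cdot s(X\cdot e)$ cancel. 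The remaining terms are
\begin{equation*}
  [X,Y]\cdot s(e) - s([X,Y]\cdot e) - [X,Y]\cdot s(e) + s([X,Y]\cdot e) = 0,
\end{equation*}
using that $F$ and $E$ are $\h$-modules, so that $X\cdot Y - Y\cdot X$ acts as $[X,Y]$ on each.

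The only point needing care is a bookkeeping one. In the expression for $X\cdot\beta_Y$, the action of $X$ on the value $\beta_Y(e)\in E^*$ must be transported through $i$ as $i(X\cdot\phi) = X\cdot i(\phi)$. This holds because $i$ is $\h$-equivariant, being induced by the inclusion $\Crel^3(p)\to\sC^3(p)$. I expect no genuine obstacle: the argument is the standard fact that the failure of a linear section to be equivariant is a $1$-cocycle.
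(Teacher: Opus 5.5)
Your proof is correct and is essentially the paper's argument: both expand $i\bigl((\partial\beta)(X,Y)(e)\bigr)$ using $i(\beta_X(e)) = X\cdot s(e) - s(X\cdot e)$ and the $\h$-equivariance of $i$, and the terms then cancel. The only difference is that the paper takes $X,Y\in\fa$ from the start, using that $\fa$ is abelian to drop $\beta_{[X,Y]}$ and to commute $X\cdot Y = Y\cdot X$, whereas you verify the full cocycle identity on all of $\h$ and then restrict, which is slightly more general and no harder.
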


\begin{proof}
  Let $X,Y \in \fa$.  Then since $\fa$ is abelian,
  \begin{equation}
    (d\beta)(X,Y) = X \cdot \beta_Y - Y \cdot \beta_X.
  \end{equation}
  Then we calculate
  \begin{align*}
    i \left( (X \cdot \beta_Y)(e) - (Y \cdot \beta_X)(e)\right) &= i \left(  X \cdot \beta_Y (e) - \beta_Y(X \cdot e)\right) - (X \leftrightarrow Y)\\
                           &= X \cdot i (\beta_Y(e)) - i \beta_Y(X \cdot e) -  (X \leftrightarrow Y)\\
                           &= X \cdot (Y \cdot s(e) - s(Y \cdot e)) - Y \cdot s (X \cdot e) + s(Y \cdot X \cdot e) -  (X \leftrightarrow Y)\\
                           &= (X \cdot Y) \cdot s(e) - X \cdot s(Y \cdot e) - Y \cdot s(X \cdot e) + S(Y \cdot X e) - (X \leftrightarrow Y)\\
                           &= 0. & \tag{since $X \cdot Y = Y \cdot X$}
  \end{align*}
\end{proof}

Modifying the $\fk$-equivariant splitting $s \mapsto s'$ defined in
equation~\eqref{eq:new-splitting}, the cocycle $\beta$ changes by a
coboundary: $\beta \mapsto \beta' = \beta + d \phi$, where $\phi \in
\Hom_{\fk}(E,E^*) = C^0(\fa,\Hom(E,E^*))^\fk$.  Indeed,
\begin{align*}
  i (\beta'_X (e)) &= X \cdot s'(e) - s'(X \cdot e) \\
                  &= X \cdot (s(e) + i (\phi(e))) - s(X \cdot e) - i (\phi (X \cdot e))\\
                  &= i (\beta_X(e) + X \cdot \phi(e) - \phi(X \cdot e)),
\end{align*}
so that
\begin{equation}
  \beta'_X = \beta_X + X \cdot \phi = \beta_X + d\phi(X).
\end{equation}
The sequence~\eqref{eq:SES-H-mods} splits if the class of $\beta$ in
$H^1(\fa,\Hom(E,E^*))^\fk$ vanishes.

This would require exhibiting an explicit $\fk$-equivariant splitting
$s$ and then asking whether it can be modified by $\phi \in
\Hom_{\fk}(E,E^*)$  so that the new splitting $s'$ is
also $\fa$-equivariant.  Alternatively, we could try to see whether
$H^1(\fa,\Hom(E,E^*))^\fk = 0$, in which case we could deduce that
there does exist a splitting.

As $\fk$-modules,
\begin{equation}
  E^* \cong E \cong \CC \oplus \VV^\top \oplus \sym{2}_0 \VV^\top \oplus \ext{2} \VV^\top,
\end{equation}
which is the complexification of $\RR \oplus V^\top \oplus \sym{2}_0
V^\top \oplus \ext{2} V^\top$, whereas $\fa \cong V^\top$.  Then
\begin{equation}
  \begin{split}
    \Hom_{\fk}(\fa, \Hom(E,E^*)) &\cong \Hom_{\fk}(V^\top, \CC  \otimes_{\RR}(\RR \oplus V^\top \oplus \sym{2}_0 V^\top \oplus  \ext{2} V^\top)^{\otimes 2})\\
    & \cong \CC \otimes_\RR \Hom_{\fk} (V^\top, (\RR \oplus V^\top \oplus \sym{2}_0 V^\top \oplus  \ext{2} V^\top)^{\otimes 2}).
  \end{split}
\end{equation}
A calculation with LiE \cite{LiE} shows that this space is isomorphic
to $\CC^6$ and we will exhibit an explicit basis below.  On the other
hand, since $E$ and $E^*$ consists of four pairwise non-isomorphic
$\fk$-modules occurring with multiplicity $1$, Schur's Lemma says that
\begin{equation}
  \Hom_{\fk}(E,E^*) \cong \CC \otimes_\RR \left((\RR \oplus V^\top
    \oplus \sym{2}_0 V^\top \oplus  \ext{2} V^\top)^{\otimes
      2}\right)^\fk \cong \CC^4,
\end{equation}
spanned by the identity maps for each of the isotypical factors $\RR$,
$V^\top$, $\sym{2}_0 V^\top$ and $\ext{2}V^\top$.  Another calculation
with LiE shows that
\begin{equation}
  C^2(\fa, \Hom(E,E^*))^\fk \cong \CC \otimes_\RR
  \Hom_{\fk}(\ext{2}V^\top,  (\RR \oplus V^\top \oplus \sym{2}_0
  V^\top \oplus  \ext{2} V^\top)^{\otimes 2}) \cong \CC^7.
\end{equation}
So the $\fk$-invariant complex $C^\bullet(\fa, \Hom(E,E^*))^\fk$
starts like
\begin{equation}
  \begin{tikzcd}
    \CC^4 \arrow[r,"\d"] & \CC^6  \arrow[r,"\d"] & \CC^7,
  \end{tikzcd}
\end{equation}
which is the complexification of a complex which starts like
\begin{equation}
  \begin{tikzcd}
    \RR^4 \arrow[r,"\d"] & \RR^6  \arrow[r,"\d"] & \RR^7,
  \end{tikzcd}
\end{equation}
and which we need to study.

Let us determine a basis for these spaces.  We will choose a Witt
frame $(\be_+,\be_i, \be_-)$ for $V$ and choose $p$ so that $p^\sharp
= \be_+$.  In other words, $p^+ = 1$ and $p^- = p^i = 0$.  Dually,
$p_- = 1$ and $p_+ = p_i = 0$.  Then $\fa$ is spanned by $L_{+i} :=
\be_+ \curlywedge \be_i$.

As shown in Appendix~\ref{app:hrel3-details}, the $3$-cocycles have
components $G_{\mu\nu} = G_{\nu\mu}$ and $F_{\mu\nu} = - F_{\nu\mu}$
such that $p^\mu G_{\mu\nu} =0$ and $p^\mu F_{\mu\nu}= 0$ and the
coboundaries are $\delta G_{\mu\nu} = p_\mu p_\nu \vartheta$ and
$\delta F_{\mu\nu} = i p_{[\mu} \omega_{\nu]}$ with
$p \cdot \omega = 0$.  For our choice of $p$, the cocycle equations
are $G_{++} = G_{+-} = G_{+i} =0$ and $F_{+-} = F_{+i} = 0$.  The
coboundaries are
\begin{equation}
  \begin{aligned}
    \delta G_{++} &= 0\\
    \delta G_{+i} &= 0\\
    \delta G_{ij} &= 0\\
    \delta G_{+-} &= 0\\
    \delta G_{-i} &= 0\\
  \end{aligned}
  \qquad\qquad
  \begin{aligned}
    \delta G_{--} &= \vartheta\\
    \delta F_{+i} &= 0\\
    \delta F_{+-} &= 0\\
    \delta F_{ij} &= 0\\
    \delta F_{-i} &= \tfrac i2 \omega_i.\\
  \end{aligned}
\end{equation}
Therefore a basis for the cohomology $\Hrel^3(p)$ is given by
\begin{equation}
  G_{\left<ij\right>}, \quad \tr G^ \top, \quad G_{-i} \quad\text{and}\quad F_{ij}.
\end{equation}
The action of $\fa$ on the cohomology is obtained as before by acting
on the cocycles and dropping any coboundary terms:
\begin{equation}
  \begin{split}
    L_{+k} \cdot G _{\left<ij\right>} &= 0\\
    L_{+k} \cdot \tr G^\top &= 0\\
    L_{+k} \cdot G_{-i} & = G_{ik}\\
    L_{+k} \cdot F_{ij} &= 0,
  \end{split}
\end{equation}

Now we need to determine bases for the first three spaces of cochains
in the Chevalley--Eilenberg complex $C^\bullet(\fa, \Hom(E, E^*))^{\fk}$.
We have the natural $\fa$-module isomorphism $\Hom(E,E^*) \cong E^*
\otimes E^*$, so we will use the latter module instead in our
calculations.

As mentioned above, the space $\Hom_{\fk}(E, E^*) \cong (E^*\otimes
E^*)^\fk$ is $4$-dimensional, with basis
\begin{equation}
  \begin{split}
    c^{(0)}_1 &=\tr G^\top \otimes \tr G^\top\\
    c^{(0)}_2 &= \delta^{ij} G_{-i} \otimes G_{-j}\\
    c^{(0)}_3 &= \delta^{ik}\delta^{j\ell} G_{\left<ij\right>} \otimes G_{\left<k\ell\right>}\\
    c^{(0)}_4 &= \delta^{ik}\delta^{j\ell} F_{ij} \otimes F_{k\ell}.
  \end{split}
\end{equation}

Since $\fa \cong V^\top$, a basis for the $6$-dimensional space
$\Hom_{\fk}(\fa, E^* \otimes E^*)$ is given by:
\begin{equation}
  \begin{aligned}
    c^{(1)}_1 &= \lambda^i \otimes G_{-i} \otimes \tr G^\top\\
    c^{(1)}_2 &= \lambda^i \otimes \delta^{jk} G_{\left<ij\right>} \otimes G_{-k}\\
    c^{(1)}_3 &= \lambda^i \otimes \tr G^\top \otimes G_{-i}
  \end{aligned}
  \qquad\qquad
  \begin{aligned}
    c^{(1)}_4 &= \lambda^i \otimes \delta^{jk} F_{ij} \otimes G_{-k}\\
    c^{(1)}_5 &= \lambda^i \otimes \delta^{jk} G_{-k} \otimes G_{\left<ij\right>}\\
    c^{(1)}_6 &= \lambda^i \otimes \delta^{jk} G_{-k} \otimes F_{ij},
  \end{aligned}
\end{equation}
where we have introduced $\lambda^i$ as basis for $\fa^*$ canonically
dual to the basis $L_{+i}$ for $\fa$.

Finally, since $\ext{2}\fa \cong \ext{2}V^\top$, we must look for
copies of $\ext{2}V^\top$ in $E^* \otimes E^*$.  There are $7$ such
copies and hence there are $7$ basis elements: namely,
\begin{equation}
  \begin{split}
    c^{(2)}_1 &= \tfrac12 \lambda^i \wedge \lambda^j \otimes \tr G^\top \otimes F_{ij}\\
    c^{(2)}_2 &= \tfrac12 \lambda^i \wedge \lambda^j \otimes F_{ij} \otimes \tr G^\top\\
    c^{(2)}_3 &= \tfrac12 \lambda^i \wedge \lambda^j \otimes G_{-i} \otimes G_{-j} \\
    c^{(2)}_4 &= \tfrac12 \lambda^i \wedge \lambda^j \otimes  \delta^{k\ell} G_{\left<ik\right>} \otimes  G_{\left<j\ell\right>}\\
    c^{(2)}_5 &= \tfrac12 \lambda^i \wedge \lambda^j \otimes  \delta^{k\ell} G_{\left<ik\right>} \otimes  F_{j\ell}\\
    c^{(2)}_6 &= \tfrac12 \lambda^i \wedge \lambda^j \otimes \delta^{k\ell} F_{ik} \otimes  G_{\left<j\ell\right>}\\
    c^{(2)}_7 &= \tfrac12 \lambda^i \wedge \lambda^j \otimes \delta^{k\ell} F_{ik} \otimes  F_{j\ell}.
  \end{split}
\end{equation}

A calculation of the Chevalley--Eilenberg differential $\d \colon
C^0(\fa,E^* \otimes E^*)^{\fk}\to C^1(\fa,E^* \otimes E^*)^{\fk}$
shows that $\d c^{(0)}_a = 0$ for $a =1,3,4$ and that
\begin{equation}
  \d c^{(0)}_2 = c^{(1)}_2 + c^{(1)}_5 + \tfrac 1{24} \left( c^{(1)}_1 + c^{(1)}_3 \right).
\end{equation}

A calculation of the Chevalley--Eilenberg differential
$\d \colon C^1(\fa,E^* \otimes E^*)^{\fk}\to C^2(\fa,E^* \otimes
E^*)^{\fk}$ already shows that $\d c^{(1)}_a = 0$ for $a =1,2,3,5$,
whence $\dim H^1(\fa,E^* \otimes E^*)^{\fk} \geq 3$, but a closer look
reveals that actually $\dim H^1(\fa,E^* \otimes E^*)^{\fk} = 3$.

In summary, we cannot conclude that the extension class is zero a priori and
we would need to compute the extension class explicitly. This requires
exhibiting a $\fk$-equivariant splitting $\Hrel^2(p) \to \sH^3(p)$ and
hence requires calculating $\sH^3(p)$, which we have not done in this
paper.  Perhaps we return to this calculation in the future.

\end{appendices}

\bibliographystyle{utphys}
\bibliography{AmbiCoh}

\end{document}